\documentclass[12pt]{article}
\setlength{\parindent}{0pt}
\usepackage{geometry}
\geometry{a4paper,left=15mm,right=15mm,top=25mm,bottom=25mm}
\usepackage{setspace}
\doublespacing
\usepackage{fancyhdr}
\setlength{\headheight}{15.2pt}
\pagestyle{fancy}
\usepackage{amsfonts}
\usepackage{amsmath}
\usepackage{verbatim}
\usepackage{listings}
\usepackage{lscape}
\usepackage{graphics}
\usepackage{tabularx}
\usepackage{float}
\usepackage{breqn}
\usepackage{wasysym}
\usepackage{fix-cm}
\usepackage{bbm}
\usepackage{color}
\usepackage{amsthm}
\usepackage{enumitem}
\usepackage{authblk}

\usepackage{appendix}

\usepackage{graphicx}

\usepackage{multirow}
\usepackage{parskip}
\usepackage{amsmath, amsthm, amssymb}
\usepackage{dsfont}
\usepackage{bm}
\usepackage{parskip}

\usepackage{framed,enumitem}
\usepackage{tikz}
\usetikzlibrary{positioning}
\usepackage{accents}
\usepackage{comment}
\usepackage[ruled,vlined]{algorithm2e}
\usepackage{breqn}

%\usepackage{biblatex}
%\addbibresource{PCGLasso.bib}
\usepackage{natbib}

\newtheorem{mycor}{Corollary}
\newtheorem{myprop}{Proposition}
\theoremstyle{definition}
\newtheorem{mydef}{Definition}
\newenvironment{example}[1][Example]{\begin{trivlist}
\item[\hskip \labelsep {\bfseries #1}]}{\end{trivlist}}

\makeatletter
\newtheorem*{rep@theorem}{\rep@title}
\newcommand{\newreptheorem}[2]{%
\newenvironment{rep#1}[1]{%
 \def\rep@title{#2 \ref{##1}}%
 \begin{rep@theorem}}%
 {\end{rep@theorem}}}
\makeatother

\newreptheorem{theorem}{Theorem}
\newreptheorem{lemma}{Lemma}
\newreptheorem{mycor}{Corollary}
\newreptheorem{myprop}{Proposition}

\title{Partial Correlation Graphical LASSO}
\author[1,*]{Jack Storror Carter}
\author[2]{David Rossell}
\author[1,3]{Jim Q. Smith}
\affil[1]{Dept. of Statistics, University of Warwick, United Kingdom}
\affil[2]{Dept. of Business and Economics, Universitat Pompeu Fabra, Barcelona, Spain}
\affil[3]{The Alan Turing Institute, London, United Kingdom}
\affil[*]{Correspondence to j.s.carter@warwick.ac.uk}
%\author{Jack Storror Carter \and David Rossell \and Jim Q. Smith}
\date{}

\begin{document}

\maketitle

\begin{abstract}
Standard likelihood penalties to learn Gaussian graphical models are based on regularising the off-diagonal entries of the precision matrix. Such methods, and their Bayesian counterparts, are not invariant to scalar multiplication of the variables, unless one standardises the observed data to unit sample variances. We show that such standardisation can have a strong effect on inference and 
introduce a new family of penalties based on partial correlations. We show that the latter, as well as the maximum likelihood, $L_0$ and logarithmic penalties are scale invariant. 
We illustrate the use of one such penalty, the partial correlation graphical LASSO, which sets an $L_{1}$ penalty on partial correlations. The associated optimization problem is no longer convex, but is conditionally convex. We show via simulated examples and in two real datasets that, besides being scale invariant, there can be important gains in terms of inference.
\end{abstract}

In Gaussian graphical models, most popular frequentist approaches to sparse estimation of the precision matrix penalise the absolute value of the entries of the precision matrix. 
Gaussian graphical models are invariant to scalar multiplication of the variables, however it is well-known that such penalisation approaches do not share this property.  
We show that the only scale-invariant strategies, within a large class of precision matrix penalties, are the logarithmic and $L_0$ penalties. 
It is possible to address this issue via a data preprocessing step of standardising the data to have unit sample variances.
However, as we illustrate next, this standardisation can adversely affect inference. In this paper we propose a family of methods based on partial correlations and show that they ensure scale invariance without requiring this standardisation step.  

As motivation we present a simple example where the goal is to estimate the entries in a $p \times p$ precision matrix $\Theta$. We set $p=50$ and generate $n=100$ independent Gaussian draws with zero mean and covariance $\Theta^{-1}$, where $\Theta$ follows the so-called star pattern, with $\theta_{ii}=1$ and $\theta_{i1}= \theta_{1i} = -1/\sqrt{p}$ for $i=1,\ldots,p$, and $\theta_{ij}=0$ otherwise.
This is a setting in which recovering the graphical model is relatively straightforward, see for example \cite{Yuan2007}.  
The top left panel in Figure \ref{fig:Star1} shows the regularisation path for the estimated partial correlations when applying GLASSO \citep{Friedman2008} to the unstandardised data. For a large range of values for the regularisation parameter $\rho$ the truly zero $\theta_{ij}$'s are completely separated from the non-zeroes.
However, the top right panel shows that when standardising the data to unit sample variances the quality of the inference suffers. In particular the true graphical model is not recovered for any $\rho$. The bottom left panel shows the results obtained by applying a LASSO penalty to the partial correlations, our proposed PC-GLASSO, which as we show is scale invariant. The bottom right panel demonstrates how the estimation accuracy measured by Kullback-Leibler loss (see Section \ref{sec:Simulations}) of GLASSO and two other methods reviewed below suffer in comparison to PC-GLASSO when using standardised data.

\begin{figure}[ht]
\vspace{10pt}
	\centering
	\includegraphics[scale=0.55]{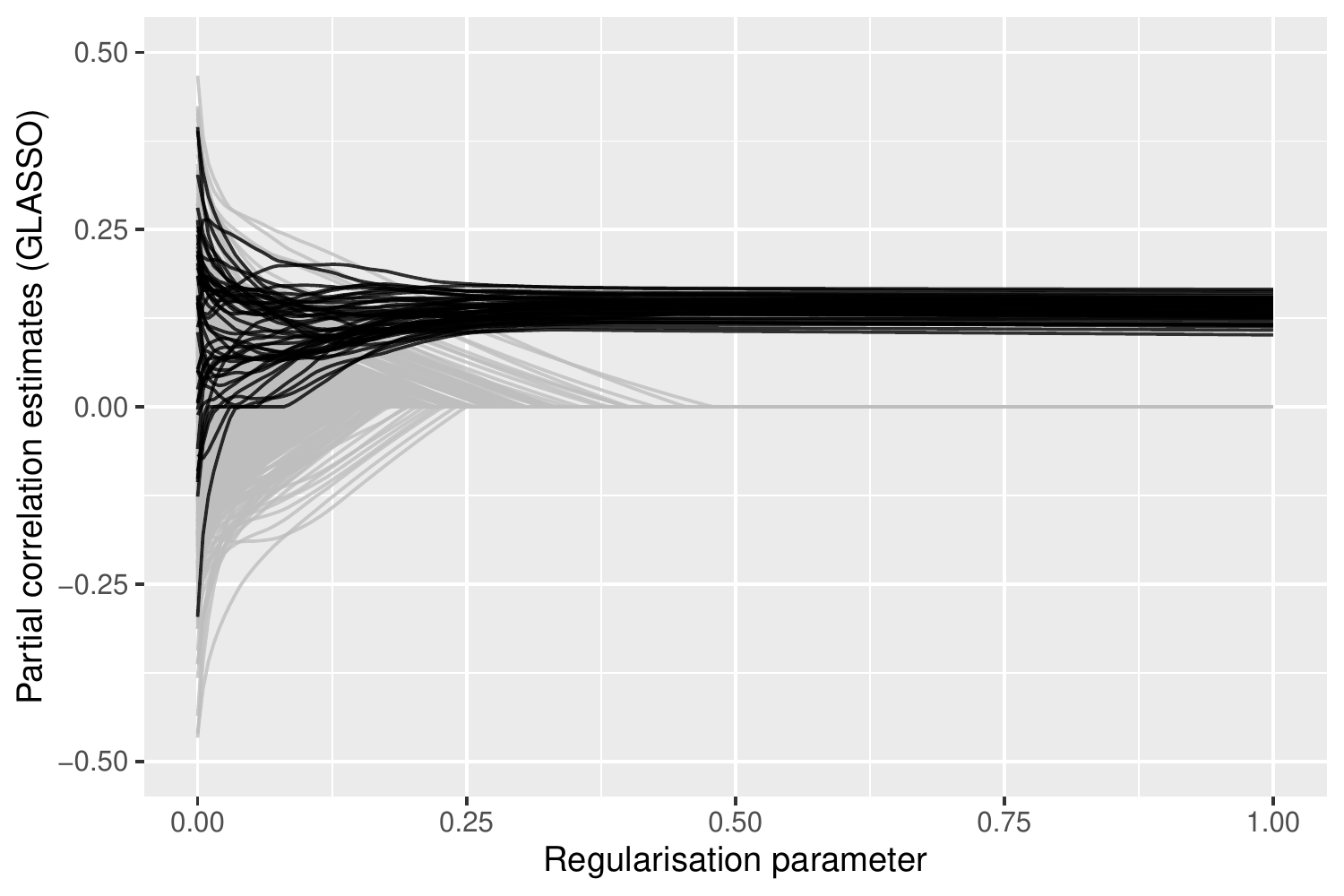}
	\centering
	\includegraphics[scale=0.55]{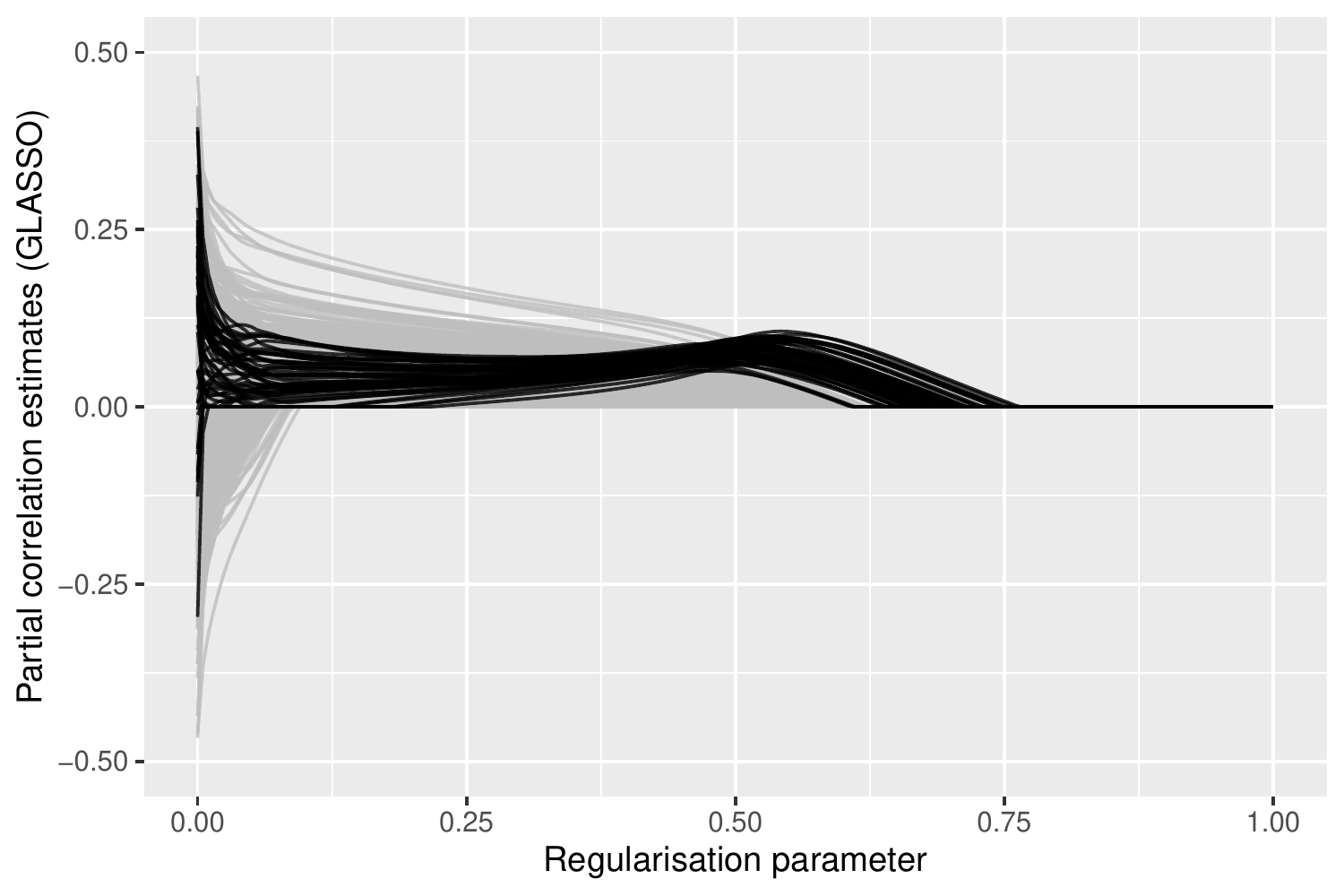}
	\centering
	\includegraphics[scale=0.55]{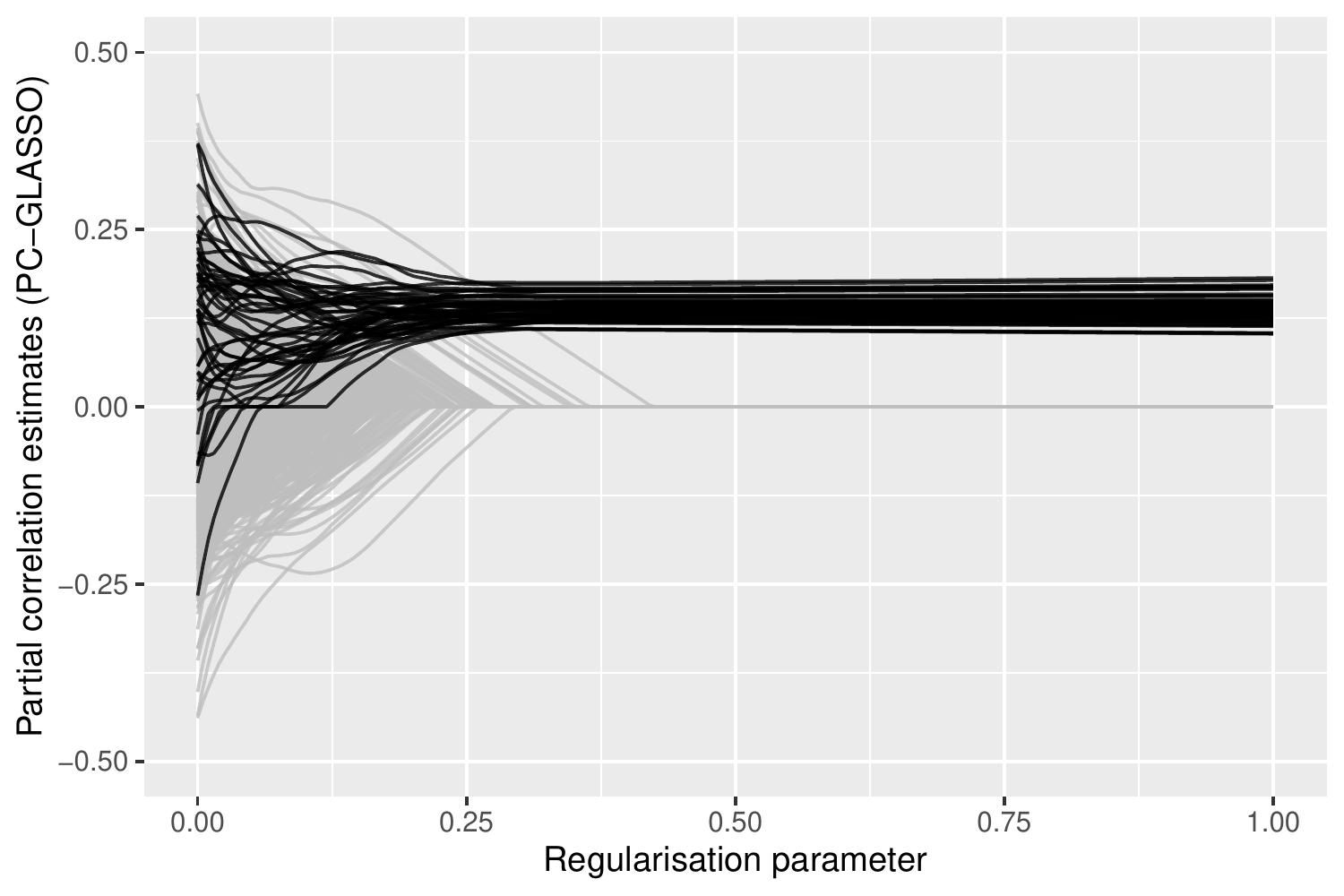}
	\centering
	\includegraphics[scale=0.55]{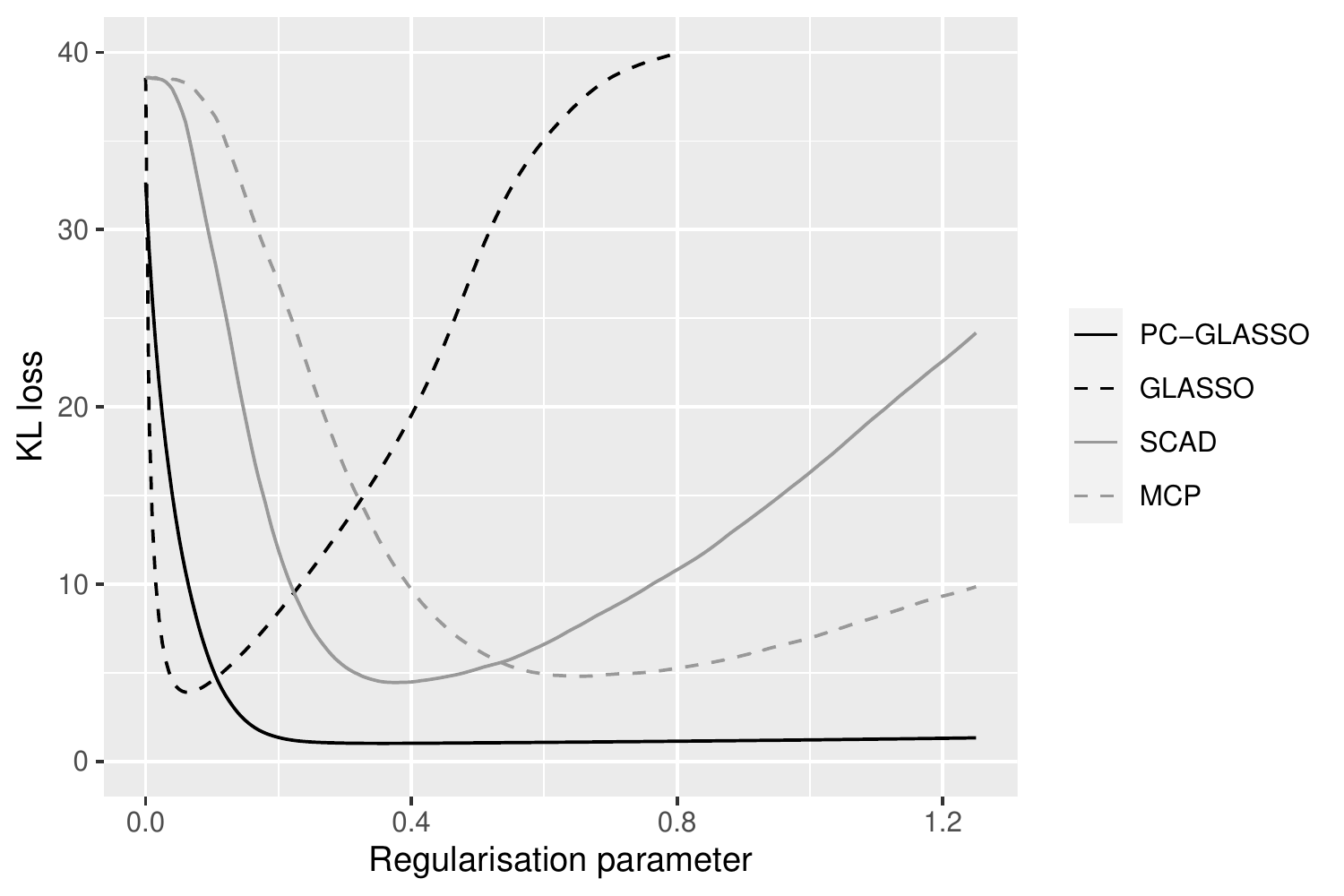}
\caption{Top: partial correlation regularisation paths for GLASSO in the $p=50$ star graph example on the original data (left), and standardised data (right). Estimates of truly non-zero $\theta_{ij}$ are in black.  Bottom: Partial correlation regularisation paths for PC-GLASSO in the $p=50$ star graph example (left) and KL loss over the regularisation paths for different penalties applied to standardised data (right).}
\label{fig:Star1}
\end{figure}

Lack of invariance is not restricted to the GLASSO, but, as we show later, affects essentially all continuous penalties, as well as standard prior distributions in Bayesian settings.  

The paper is organised as follows.
Section \ref{sec:penalised_likelihood} sets notation and reviews popular classes of likelihood penalties
which we refer to as \textit{regular} penalty functions, and their Bayesian equivalents, \textit{regular} prior distributions.
Section \ref{sec:PCGLASSO} introduces a class of penalties and prior distributions on partial correlations, and the PC-GLASSO as a particular case. 
Section \ref{sec:ScaleInv} shows that the PC-GLASSO, as well as the logarithmic and $L_{0}$ penalties are scale invariant, while regular penalty functions are not.
Section \ref{sec:EstEquiv} 
offers an alternative argument for standardising the data when using regular penalties, related to
situations where the likelihood function is exchangeable in two partial correlations, hence one may wish for inference to be exchangeable as well.
Section \ref{sec:Priors} compares the related prior distributions of GLASSO and PC-GLASSO and
Section \ref{sec:Computation} discusses computational issues for the PC-GLASSO and gives a certain conditional convexity result. 
Section \ref{sec:Simulations} shows examples on simulated, gene expression and stock market datasets. 
We end the paper with a short discussion.

\section{Penalised likelihood in Gaussian graphical models}\label{sec:penalised_likelihood}

Let $X = ( X^{(1)}, ..., X^{(p)} ) \sim \mathrm{N}( \mu, \Sigma )$ be a $p$-dimensional multivariate Gaussian random vector with unknown mean $\mu \in \mathbb{R}^p$ and $p \times p$ positive-definite covariance $\Sigma = (\sigma_{ij})_{i \leq i,j \leq p}$.  Suppose we observe $n$ independent samples $( X_{1},\dots, X_{n} )$ of $X$ and denote their sample covariance by $S$.
Our goal is to estimate the precision matrix $\Theta = ( \theta_{ij} )_{ 1 \leq i, j \leq p } = \Sigma^{-1}$. 

A common assumption in Gaussian graphical models is that the data generating process is governed by a sparse undirected graph so that $\Theta$ is a sparse matrix with many zero entries, and we have a particular interest in the location of its zero entries.  This is due to the equivalence between zero partial covariances and conditional independencies in Gaussian graphical models. 
The most common frequentist approach to sparse estimation is to maximise a penalised likelihood function of the form
$
l( \Theta \mid S ) - Pen( \Theta ),
$
where
\begin{align} \label{eq:PenLike}
l( \Theta \mid S ) = \frac{ n }{ 2 } \left[ \log( \det( \Theta ) ) - \mathrm{tr}( S \Theta ) - p \log( 2 \pi ) \right], 
\end{align}
is the log-likelihood function, $Pen( \Theta )$ some penalty function %on $\Theta$ 
and $\mathrm{tr}(A)$ the trace of $A$.
Most popular choices (discussed below) consider penalties
that are additive and monotone in $|\theta_{ij}|$, which we refer to as {\it separable penalties}, and in particular the subclass of penalties differentiable everywhere other than zero, which we refer to as {\it regular penalties}.

\begin{mydef} \label{def:separable_penalty}
A penalty function $Pen(\Theta)$ is \textit{separable} if
$$ Pen( \Theta ) = \sum_{i \leq j} pen_{ij}(  \theta_{ij} ),$$
where 
$pen_{ii}:(0,\infty) \rightarrow \mathbb{R}$ and $pen_{ij}: \mathbb{R} \rightarrow \mathbb{R}$ are non-decreasing in $\theta_{ii}$ and $|\theta_{ij}|$ respectively for all $i$ and $i<j$.

A separable penalty is \textit{regular} if $pen_{ii}=pen_{jj}$ for all $(i,j)$ and, for all $i < j$, $pen_{ij}$ does not depend on $(i,j)$, is symmetric about $0$ and differentiable away from $0$.
\end{mydef}

Most popular penalty functions used for Gaussian graphical models are regular.
The GLASSO is a prominent example
using an $L_1$ penalty to produce the point estimate
\begin{align}\label{eq:GLASSO}
\Theta_{\mathrm{GLASSO}}^{\rho}(S) &= \arg \max \log( \det( \Theta ) ) - \mathrm{tr}( S \Theta ) 
- \rho \sum_{i=1}^p \sum_{j=1}^p \vert \theta_{ij} \vert
\end{align}
for some given regularization parameter $\rho \geq 0$.
See \cite{Meinshausen2006} for an alternative that places $L_1$ penalties on the full conditional regression of each $X^{(i)}$ given $X^{-(i)}$,
\cite{Banerjee2008} for computational methods based on parameterising \eqref{eq:GLASSO} in terms of $\Sigma$ and
\cite{Yuan2007} for a variation that omits the diagonal of $\Theta$ from the penalty. 
Other popular regular penalties include the SCAD penalty \citep{Fan2001,Fan2009}
and the MCP penalty \citep{Zhang2010,Wang2016},
which were proposed to reduce bias in the estimation of large entries in $\Theta$ relative to the $L_1$ penalty. 

Another notable regular penalty is the $L_{0}$ penalty
\begin{align}\label{eq:l0Pen}
Pen(\Theta) = \rho \sum_{i < j} \mathbb{I}(\theta_{ij} \neq 0).
\end{align}

The adaptive LASSO \citep{Zhou2009,Fan2009} is an important example of a non-regular penalty.  It uses an $L_{1}$ penalty where weights depend on the data via some initial estimate of $\Theta$, and hence does not satisfy Definition \ref{def:separable_penalty}.  
However, as noted by \cite{Buhlmann2008} and \cite{Candes2008}, the adaptive LASSO can be seen as a first-order approximation of the logarithmic penalty where $pen_{ij}(\theta_{ij}) = \rho \log( |\theta_{ij}| )$, which is regular.  Both papers propose an iterative version of adaptive LASSO that formally targets this logarithmic penalty.  

There is a well known equivalence between penalised likelihood and maximum a posteriori estimates in Bayesian frameworks.
In particular, the estimate under a penalty $Pen$ is equal to the mode of the posterior distribution under the prior density $\pi(\Theta) \propto \exp(-Pen(\Theta)) \mathbb{I}( \Theta \in \mathcal{S} )$ where $\mathcal{S}$ is the set of symmetric, positive definite matrices.  With this in mind we define \textit{separable} and \textit{regular prior distributions}.

\begin{mydef}\label{def:SepPriors}
A prior distribution with density $\pi$ on $\Theta$ is \textit{separable} if
$$ \pi(\Theta) = \prod_{i \leq j} \pi_{ij}(\theta_{ij}) \mathbb{I}( \Theta \in \mathcal{S} ) $$
where $\pi_{ii}$ is a density function with support $(0,\infty)$ and $\pi_{ij}$ is a density function with support $\mathbb{R}$ which are non-increasing in $\theta_{ii}$ and $|\theta_{ij}|$ respectively for all $i$ and $i \leq j$.

A separable prior distribution is \textit{regular} if $\pi_{ii}=\pi_{jj}$ for all $(i,j)$ and for all $i < j$, $pen_{ij}$ does not depend on $(i,j)$, is symmetric about $0$ and differentiable away from $0$.
\end{mydef}

The correspondence between penalised likelihoods and prior distributions has been utilised by the Bayesian LASSO regression of \cite{Park2008} and \cite{Hans2009} and in Gaussian graphical models by \cite{Wang2012} and \cite{Khondker2013}. Of particular interest to this paper, \cite{Wang2012} showed that under the GLASSO prior the marginal prior distribution of partial correlations does not depend on the regularisation parameter. We explore this further in Section \ref{sec:Priors}.  The Bayesian interpretation has also been used to create new penalties functions, for example by \cite{Banerjee2015} and \cite{Gan2018}, both of whom set mixture priors on the entries of $\Theta$.

\section{Partial Correlation Graphical LASSO}\label{sec:PCGLASSO}

We propose basing penalties on a reparameterisation of $\Theta$ in terms of the (negative) partial correlations
$$\Delta_{ij} := \frac{\theta_{ij} }{ \sqrt{ \theta_{ii} \theta_{jj} } } = -\mathrm{corr}\left( X^{(i)}, X^{(j)} \mid X^{-(ij)} \right).$$
where $X^{-(ij)}$ denotes the vector $X$ after removing $X^{(i)}$ and $X^{(j)}$.

The precision matrix can be decomposed as
$\Theta = \theta^{\frac{1}{2}} \Delta \theta^{\frac{1}{2}}$,
where $\theta= \text{diag}(\Theta)$ and $\Delta$ is the matrix with unit diagonal and off-diagonal entries $\Delta_{ij}$.  
The penalised likelihood function then becomes
\begin{align}\label{eq:PenLikePCs}
\frac{n}{2} \left[ \log( \det( \Delta ) ) + \sum_{i} \log( \theta_{ii} ) - \mathrm{tr}( S \theta^{\frac{1}{2}} \Delta \theta^{\frac{1}{2}} ) \right] - Pen( \theta, \Delta ).
\end{align}

We believe that partial correlations are a better measure of dependence than the off-diagonals $\theta_{ij}$, in that they are easier to interpret and invariant to scalar multiplication of the variables.
We now introduce a class of additive penalties in this parameterisation, a corresponding prior class, and subsequently state our PC-GLASSO as a particular case.

\begin{mydef}

A penalty $Pen$ is {\it partial correlation separable} (PC-separable) if it is of the form
$$ Pen(\theta,\Delta) = \sum_{i} pen_{ii}( \theta_{ii} ) + \sum_{i < j} pen_{ij}( \Delta_{ij} ), $$
where $pen_{ii}:(0,\infty) \rightarrow \mathbb{R}$ and
$pen_{ij}: [-1,1] \rightarrow \mathbb{R}$ are non-decreasing in $\theta_{ii}$ and $|\Delta_{ij}|$ respectively, 
for all $i$ and $i<j$.

A PC-separable penalty function is \textit{symmetric} if $pen_{ii}=pen_{jj}$ for all $(i,j)$ and, for all $i<j$, $pen_{ij}$ does not depend on $(i,j)$ and is symmetric about $0$.
\label{def:pc_separable}
\end{mydef}

Note that Definition \ref{def:pc_separable} includes formulations that do not penalise the diagonal entries, i.e. $pen_{ii}(\theta_{ii})=0$. Note also that
the $L_{0}$ and logarithmic penalties are PC-separable since 
$\theta_{ij}=0$ if and only if $\Delta_{ij}=0$ and $\log(|\theta_{ij}|) = \log( |\Delta_{ij}| ) + \log( \theta_{ii} ) + \log( \theta_{jj} )$.

\begin{mydef}
A prior $\pi( \theta, \Delta )$ is (symmetric) PC-separable if
the penalty function $Pen( \theta, \Delta ) = -\log( \pi( \theta, \Delta ) )$ is (symmetric) PC-separable.
\label{def:pc_separable_prior}
\end{mydef}

Any PC-separable prior can be written as $$ \pi( \theta, \Delta ) \propto \prod_{i} \pi_{ii}( \theta_{ii} ) \prod_{i < j} \pi_{ij}( \Delta_{ij} ) \mathbb{I}( \Delta \in \mathcal{S}_{1} ), $$
where $\mathcal{S}_{1}$ is the set of symmetric, positive definite matrices with unit diagonal.

PC-GLASSO is a symmetric PC-separable penalty applying the $L_1$ norm to the partial correlations
$ pen_{ij}( \Delta_{ij} ) = n \rho | \Delta_{ij} |, $
and a logarithmic penalty to the diagonal $ pen_{ii}( \theta_{ii} ) = 2\log(\theta_{ii}). $
The penalised likelihood function, after removing constants, is given by
\begin{align}\label{eq:PenLikeFn}
\log( \det( \Delta ) ) + \left( 1 - \frac{ 4 }{ n } \right) \sum_{i} \log( \theta_{ii} ) - \mathrm{tr}\left( S \theta^{\frac{1}{2}} \Delta \theta^{\frac{1}{2}} \right) - \rho \sum_{i \neq j} \vert \Delta_{ij} \vert.
\end{align}

The logarithmic penalty on the diagonal entries ensures scale invariance of the PC-GLASSO (Section \ref{sec:ScaleInv}).  A coefficient of 2 is used since in the univariate $p=1$ case this minimises the asymptotic mean squared error of the estimated precision amongst logarithmic penalties (see Appendix \ref{subsec:Appendix-MSE}).
Although many methods use the same penalty forms for diagonal and off-diagonal entries, 
it seems natural to use different forms
since the former do not aim to induce sparsity.
For example, \cite{Yuan2007} argued for a GLASSO framework where one does not penalise the diagonal.  

As usual, one may calculate the PC-GLASSO estimate for a sequence of regularisation parameters $\rho$ and select the solution that maximizes some suitable criterion.  In Section \ref{sec:Simulations} we used the Bayesian information criterion (BIC), which selects the estimate minimising
\begin{align}\label{eq:BIC}
\mathrm{BIC}(\hat{\Theta},S) = \log(n) \sum_{i < j} \mathbb{I}( \hat{\theta}_{ij} \neq 0 ) - 2l(\hat{\Theta} \mid S),
\end{align}
Parameter selection via the BIC has been shown to provide consistent graphical model selection when used with the SCAD and MCP penalties.  Other potential criteria that have been explored for GLASSO are cross validation and the extended Bayesian information criterion (EBIC, \cite{Foygel2010}), which we also consider in our real data applications.  For further discussion see, for example, \cite{Vujacic2015}.

There are some examples of penalty functions for Gaussian graphical models based on partial correlations.  \cite{Ha2014} utilised a ridge penalty. The space method of \cite{Peng2009}, similarly to PC-GLASSO, uses an $L_{1}$ penalty on the partial correlations, but in combination with a function other than the log-likelihood.  \cite{Azose2018} introduced a separable prior on the \textit{marginal} correlations. They argued that a key benefit of their prior is the ability to specify beliefs about correlations. A similar argument can be made for PC-separable priors allowing one to specify prior beliefs on partial correlations.

\section{Scale invariance}\label{sec:ScaleInv}

A key property of graphical models is invariance to scalar multiplication.
In the Gaussian case, if we consider the transformation $DX$ for some fixed diagonal $p \times p$ matrix $D$ with non-zero diagonal, then $DX$ is also Gaussian with precision matrix
\begin{align}\label{eq:ScaledTheta}
\Theta_{D} = D^{-1} \Theta D^{-1}.
\end{align}
In particular, the zero entries of $\Theta_{D}$ are identical to those of $\Theta$.

We argue that it is desirable for an estimator of $\Theta$ to mirror the relationship in (\ref{eq:ScaledTheta}) under scalar multiplication of the data, a property we call \textit{scale invariance}.
We now show that, among regular penalty functions, only the $L_{0}$ and logarithmic penalties are scale invariant, whereas PC-separable penalties are.
Recall that any estimator can be made scale invariant by standardising the data to unit sample variances prior to obtaining the estimate, but as discussed this has an effect on inference. 
We start by defining two notions of scale invariance related to the point estimate and to the recovered graphical structure.

\vspace{10pt}
\begin{mydef}\label{def:scale_invariant}

An estimator $\hat{\Theta}$ is \textit{scale invariant} if for any sample covariance matrix $S$ and any diagonal $p \times p$ matrix $D$ with non-zero diagonal entries, $$\hat{\Theta}(DSD) = D^{-1} \hat{\Theta}(S) D^{-1}.$$ 
$\hat{\Theta}$ is \textit{selection scale invariant} if $\hat{\Theta}(S)$ and $\hat{\Theta}(DSD)$ have identical zero entries for any $S$ and $D$.

\end{mydef}

Scale invariance ensures that the estimate under the scaled data
corresponds to
that under the original data as in (\ref{eq:ScaledTheta}). Meanwhile selection scale invariance ensures that one recovers the same graphical structure under scalar multiplications.
It is clear that scale invariance implies selection scale invariance.

We now present results on the scale invariance of different penalties. Note that the results could equivalently be written in terms of the maximum a posteriori estimate under corresponding prior distributions.  All proofs are in Appendix \ref{subsec:Appendix-ScaleInvProofs}.

\begin{myprop}\label{prop:RegularScaleInv}

Let $\hat{\Theta}$ be an estimator based on a regular penalty, and suppose that there exists a sample covariance matrix $S$ such that $\hat{\Theta}(S)$ is not a diagonal matrix.  Then $\hat{\Theta}$ is scale invariant if and only if $pen_{ij}$ is either an $L_{0}$ or logarithmic penalty, and $pen_{ii}$ is either a constant or a logarithmic penalty.

\end{myprop}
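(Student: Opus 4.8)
The plan is to argue both directions through the first-order (stationarity) conditions of the penalised log-likelihood, using that $\Theta \mapsto D^{-1}\Theta D^{-1}$ is a bijection of the positive-definite cone, that $\mathrm{tr}(DSD\,\Theta') = \mathrm{tr}(S\,D\Theta'D)$ by cyclicity, and that $\log\det(D^{-1}\Theta D^{-1})$ differs from $\log\det\Theta$ by a constant independent of $\Theta$.

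\emph{Sufficiency.} Substitute $\Theta = D^{-1}\Psi D^{-1}$ into the objective $\tfrac{n}{2}[\log\det\Theta - \mathrm{tr}(DSD\,\Theta)] - Pen(\Theta)$ attached to the data $DSD$. The likelihood part becomes $\tfrac{n}{2}[\log\det\Psi - \mathrm{tr}(S\Psi)]$ plus a $\Psi$-free constant, so it remains to check that $Pen(D^{-1}\Psi D^{-1}) - Pen(\Psi)$ is $\Psi$-free when $pen_{ij}$ is $L_0$ or logarithmic and $pen_{ii}$ is constant or logarithmic. This is immediate: $\mathbb{I}(\psi_{ij}/(d_id_j)\neq 0) = \mathbb{I}(\psi_{ij}\neq 0)$; $\log|\psi_{ij}/(d_id_j)| - \log|\psi_{ij}| = -\log|d_id_j|$; $\log(\psi_{ii}/d_i^2) - \log\psi_{ii} = -2\log|d_i|$; and constants are unchanged. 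Hence the two objectives have matching maximisers under the change of variables, i.e. $\hat\Theta(DSD) = D^{-1}\hat\Theta(S)D^{-1}$.

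\emph{Necessity.} Fix $S$ with $\Theta^* := \hat\Theta(S)$ not diagonal, put $\Sigma^* = (\Theta^*)^{-1}$, and pick $i<j$ with $\theta^*_{ij}\neq 0$. Since $\log\det\Theta \to -\infty$ at the boundary of the cone, $\Theta^*$ is an interior maximiser, hence stationary, and a regular penalty is differentiable away from $0$ (Definition~\ref{def:separable_penalty}); writing $g := pen_{ij}'$ and $h := pen_{ii}'$ (index-independent), stationarity at $\Theta^*$ reads $n(\sigma^*_{ij}-s_{ij}) = g(\theta^*_{ij})$ and $\tfrac{n}{2}(\sigma^*_{kk}-s_{kk}) = h(\theta^*_{kk})$. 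For any nonsingular diagonal $D$, scale invariance (Definition~\ref{def:scale_invariant}) gives $\hat\Theta(DSD) = D^{-1}\Theta^*D^{-1}$, whose $(i,j)$ and $(k,k)$ entries are $\theta^*_{ij}/(d_id_j)$ and $\theta^*_{kk}/d_k^2$, with inverse and data entries scaled by $d_id_j$ and $d_k^2$; writing the stationarity conditions for $DSD$ at this point and dividing by the unscaled ones yields $g(\theta^*_{ij}/(d_id_j)) = d_id_j\,g(\theta^*_{ij})$ and $h(\theta^*_{kk}/d_k^2) = d_k^2\,h(\theta^*_{kk})$. Choosing $D$ to alter a single coordinate lets $d_id_j$ range over $\mathbb{R}\setminus\{0\}$ and $d_k^2$ over $(0,\infty)$; since $\theta^*_{ij}\neq 0$ and $\theta^*_{kk}>0$ are fixed, putting $x = \theta^*_{ij}/(d_id_j)$ and $y = \theta^*_{kk}/d_k^2$ gives $g(x) = C/x$ on $\mathbb{R}\setminus\{0\}$ and $h(y) = C'/y$ on $(0,\infty)$ with $C = \theta^*_{ij}g(\theta^*_{ij})$, $C' = \theta^*_{kk}h(\theta^*_{kk})$. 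Integrating and imposing monotonicity in $|\theta_{ij}|$ and in $\theta_{ii}$: $C>0$ forces the logarithmic $pen_{ij}$, while $C=0$ forces $g\equiv 0$, so $pen_{ij}$ is constant off $0$, which with symmetry about $0$ is the $L_0$ penalty; identically $C'>0$ gives logarithmic $pen_{ii}$ and $C'=0$ gives constant $pen_{ii}$.

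\emph{Main obstacle.} The crux, and the only place the hypothesis enters, is the sweeping step: one must ensure a single non-diagonal estimate, after rescaling, exposes $g$ at \emph{every} nonzero argument, so $g$ is determined globally rather than near one point — this collapses if every $\hat\Theta(S)$ is diagonal. One must also verify that for each $DSD$ the penalised maximiser is attained and strictly positive definite, so the first-order conditions are legitimately available. A minor loose end is that the diagonal part uses differentiability of $pen_{ii}$; if that is not assumed, one replaces the stationarity comparison by a comparison of optimal objective values, which forces $Pen(D^{-1}\Psi D^{-1}) - Pen(\Psi)$ to be $\Psi$-free along the scaling orbit of $\Theta^*$ and gives the same conclusion.
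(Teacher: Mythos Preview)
Your proposal is correct and follows essentially the same route as the paper: necessity via the first-order stationarity conditions leading to the functional equation $g(\theta/d)=d\,g(\theta)$, whose only monotone solutions are $g\equiv 0$ (the $L_0$ case) or $g(x)\propto 1/x$ (the logarithmic case), and sufficiency via the change of variables $\Theta=D^{-1}\Psi D^{-1}$ showing the two objectives differ by a $\Psi$-free constant. The only cosmetic differences are that the paper treats the diagonal case by choosing a separate \emph{diagonal} $S$ (so the problem decouples), and appeals to Proposition~\ref{Prop:PCSepScaleInv} for sufficiency rather than arguing directly; your flagged loose end about differentiability of $pen_{ii}$ is in fact also tacitly assumed in the paper's proof.
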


In particular, the GLASSO, SCAD and MCP estimators are not scale invariant.
Further, as illustrated in Figure \ref{fig:Star1} these estimators are also not selection scale invariant. We conjecture that lack of selection scale invariance holds more widely for regular penalty functions, but settle with the counterexample for these three cases provided by Figure \ref{fig:Star1}.

We present an example to further illustrate how scaling can affect the inferred conditional independence structure.
Suppose we observe the inverse sample covariance matrix
$$ S^{-1} = \begin{pmatrix}
1 & 0.5 & 0 \\
0.5 & 1 & 0.25 \\
0 & 0.25 & 1
\end{pmatrix} $$
The left panel in Figure \ref{fig:ScaleInv} shows the associated GLASSO estimates $\Theta_{\mathrm{GLASSO}}^{\rho}(S)$.
The right panel considers the situation where the data were given on a different scale, specifically the sample covariance 
is $DSD$ where $D$ has diagonal entries 1, 1 and 10, 
and provides the estimates $D\Theta_{\mathrm{GLASSO}}^{\rho}(DSD)D$. 
The estimates set to zero, as well as their relative magnitudes, differ significantly depending on the scale of the data.
We observed similar results for the SCAD and MCP penalties (not shown, for brevity).

\begin{figure}[ht]
\vspace{10pt}
	\centering
	\includegraphics[scale=0.7]{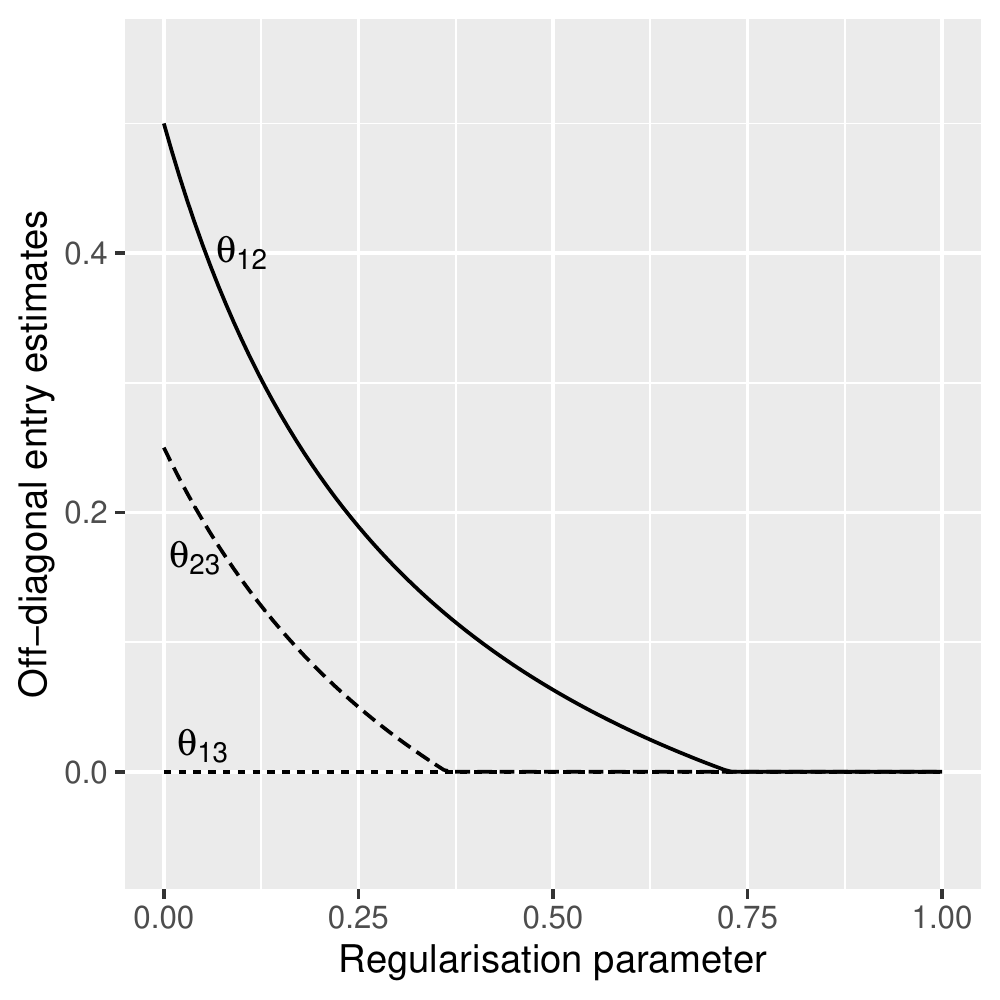}
	\centering
	\includegraphics[scale=0.7]{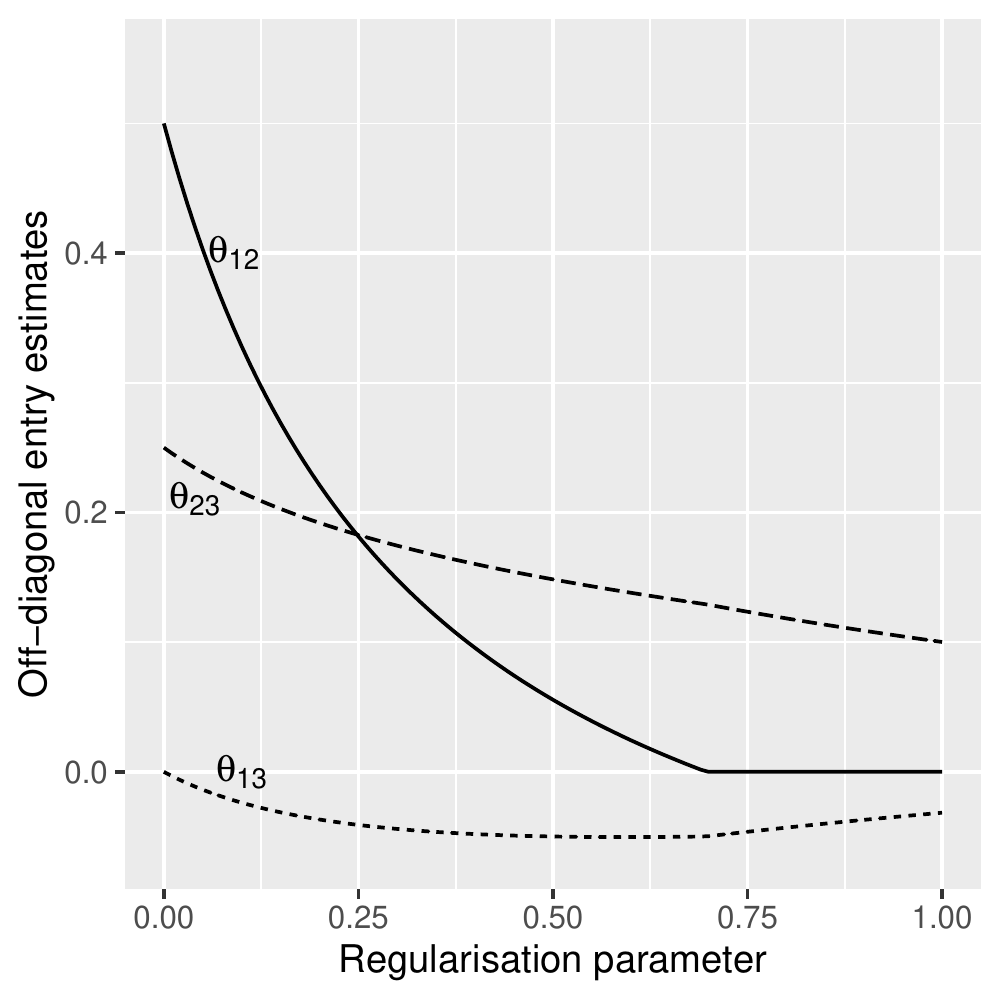}
\caption{Estimated off-diagonal entries $\Theta_{\mathrm{GLASSO}}^{\rho}(S)$ (left) and $D\Theta_{\mathrm{GLASSO}}^{\rho}(DSD)D$ (right) for regularisation parameter $\rho \in [ 0, 1 ]$.}
\label{fig:ScaleInv}
\end{figure}

As shown in Proposition \ref{prop:RegularScaleInv}, the only scale invariant regular penalties are the $L_{0}$ and logarithmic penalties, both of which are also PC-separable.
In fact scale invariance holds more widely in PC-separable penalties, from which it follows that PC-GLASSO is scale invariant.

\begin{myprop}\label{Prop:PCSepScaleInv}

Any %penalised likelihood 
estimator based on a symmetric PC-separable penalty is scale invariant, provided %$pen_{ii}(\theta_{ii})=0$ or  
$pen_{ii}(\theta_{ii}) = c \log(|\theta_{ii}|)$ for some constant $c \geq 0$.

\end{myprop}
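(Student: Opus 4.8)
The plan is to exploit the fact that, in the $(\theta,\Delta)$ parameterisation, scalar multiplication of the data acts only on the diagonal part $\theta$ and leaves the partial correlation matrix $\Delta$ fixed up to sign flips, so that the log-likelihood together with a logarithmic diagonal penalty changes only by an additive constant. Concretely, the estimator is the maximiser over symmetric positive definite $\Theta$ of
\[
F(\Theta \mid S) \;=\; l(\Theta \mid S) - Pen(\theta,\Delta), \qquad Pen(\theta,\Delta) \;=\; c\sum_i \log(\theta_{ii}) + \sum_{i<j} pen_{ij}(\Delta_{ij}).
\]
Since $\Theta \mapsto D^{-1}\Theta D^{-1}$ is a bijection of the positive definite cone, it suffices to show that
\[
F\!\left(D^{-1}\Theta D^{-1} \,\middle|\, DSD\right) \;=\; F(\Theta \mid S) + C(D)
\]
for a constant $C(D)$ not depending on $\Theta$; the identity $\hat{\Theta}(DSD) = D^{-1}\hat{\Theta}(S)D^{-1}$ of Definition \ref{def:scale_invariant} then follows by comparing the argmax sets.

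First I would record how the parameterisation transforms. Writing $\Theta = \theta^{\frac{1}{2}}\Delta\theta^{\frac{1}{2}}$ and $\Theta' = D^{-1}\Theta D^{-1}$, and using that $D$ is diagonal, the diagonal of $\Theta'$ is $\theta'_{ii} = \theta_{ii}/D_{ii}^2 > 0$ while its partial correlations are $\Delta'_{ij} = \mathrm{sign}(D_{ii}D_{jj})\,\Delta_{ij}$, so in particular $|\Delta'_{ij}| = |\Delta_{ij}|$. Next I would compute the change in the log-likelihood: $\log(\det(\Theta')) = \log(\det(\Theta)) - 2\sum_i\log|D_{ii}|$, whereas $\mathrm{tr}(DSD\,\Theta') = \mathrm{tr}(DSDD^{-1}\Theta D^{-1}) = \mathrm{tr}(S\Theta)$ by cyclicity of the trace, so that $l(\Theta' \mid DSD) = l(\Theta \mid S) - n\sum_i\log|D_{ii}|$.

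For the penalty, the off-diagonal contributions are unchanged because, for a symmetric PC-separable penalty, $pen_{ij}$ depends on $\Delta_{ij}$ only through $|\Delta_{ij}|$ and $|\Delta'_{ij}| = |\Delta_{ij}|$; the diagonal contribution changes by $\sum_i c\log(\theta'_{ii}) = \sum_i c\log(\theta_{ii}) - 2c\sum_i\log|D_{ii}|$. Collecting terms yields $F(\Theta' \mid DSD) = F(\Theta \mid S) + (2c - n)\sum_i\log|D_{ii}|$, which is the required constant $C(D)$, completing the argument; specialising to $pen_{ij}(\Delta_{ij}) = n\rho|\Delta_{ij}|$ and $c = 2$ then gives scale invariance of PC-GLASSO.

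There is no serious obstacle here: the content is entirely in choosing the right parameterisation, after which everything is bookkeeping. The only point needing a little care is that a $D$ with negative diagonal entries flips the signs of some partial correlations, which is precisely why symmetry of $pen_{ij}$ about $0$ is required; and since both $\log(\det(\Theta'))$ and the logarithmic diagonal penalty see $D$ only through $|D_{ii}|$, the exact value of $c$ is irrelevant to scale invariance — only the logarithmic \emph{form} of $pen_{ii}$ matters, so the result covers any $c \geq 0$.
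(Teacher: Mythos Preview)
Your proposal is correct and follows essentially the same approach as the paper's proof: both establish scale invariance by showing that the bijection $\Theta \mapsto D^{-1}\Theta D^{-1}$ shifts the penalised log-likelihood by an additive constant, relying on the logarithmic form of $pen_{ii}$ and the symmetry of $pen_{ij}$ about zero. The paper writes the objective out in summation form in $(\theta,\Delta)$ and substitutes, whereas you invoke trace cyclicity and determinant multiplicativity more directly, but the underlying argument is identical.
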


In the Bayesian framework, Proposition \ref{Prop:PCSepScaleInv} implies scale invariance of the a posteriori mode under symmetric PC-separable priors.  That is, let $\tilde{\Theta}=\hat{\Theta}(DSD)$ be the posterior mode under the scaled sample covariance, then the mode under the original sample covariance is $\hat{\Theta}(S)= D \tilde{\Theta} D$.  Hence, the maxima of the two posterior densities are $\pi\left( \tilde{\Theta} \mid DSD \right)$ and $\pi\left( D \tilde{\Theta} D \mid S \right)$.

In fact a stronger property holds for the entire posterior distribution, that PC-separable priors lead to scale-invariant posterior inference, as defined below. % is maintained under scalar multiplication and thus all posterior inference is scale invariant.

\begin{mydef}\label{def:PriorScaleInv}

Let $\pi(\Theta)$ be a prior density, $S$ a sample covariance and $D$ a diagonal matrix with non-zero diagonal.  Let the posterior density associated to $S$ be $\pi( \Theta \mid S ) \propto L(\Theta \mid S) \pi(\Theta)$, and that associated to $DSD$ be $\pi(\Theta \mid DSD) \propto L(\Theta \mid DSD) \pi(\Theta)$ where $L$ is the Gaussian likelihood function.

$\pi(\Theta)$ leads to scale-invariant posterior inference if for any $(S,D)$
\begin{align}\label{eq:PriorScaleInv}
\mathbb{P}_{\pi}\left( \Theta \in A \mid DSD \right) = \mathbb{P}_{\pi}\left( \Theta \in  A_{D} \mid S \right)
\end{align}
for all measurable sets $A$ where $A_{D} = \{ \Theta : D^{-1} \Theta D^{-1} \in A \}$.

\end{mydef}

In particular, (\ref{eq:PriorScaleInv}) implies that the two posterior distributions on the partial correlations $\Delta$ are equal up to appropriate sign changes i.e. when $D$ has all positive entries, $\pi(\Delta \mid S)= \pi(\Delta \mid DSD)$ (since $\Delta$ associated to $\Theta$ is equal to that associated to $D \Theta D$).

\begin{myprop}\label{prop:PriorScaleInv}
Any symmetric PC-separable prior distribution with $\pi_{ii}(\theta_{ii}) \propto \theta_{ii}^{-c}$ for some constant $c \geq 0$ leads to scale-invariant posterior inference.
\end{myprop}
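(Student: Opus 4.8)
The plan is to show that the linear map $g:\Theta\mapsto D^{-1}\Theta D^{-1}$, which is a bijection of the cone of symmetric positive-definite matrices onto itself with inverse $y\mapsto DyD$, pushes the posterior $\pi(\cdot\mid S)$ forward to $\pi(\cdot\mid DSD)$. Since $g^{-1}(A)=\{\Theta:D^{-1}\Theta D^{-1}\in A\}=A_D$, this is exactly (\ref{eq:PriorScaleInv}): taking $\Theta\sim\pi(\cdot\mid S)$ gives $\mathbb{P}_\pi(\Theta\in A_D\mid S)=\mathbb{P}(g(\Theta)\in A)=\mathbb{P}_\pi(\Theta\in A\mid DSD)$. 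By the change-of-variables formula, the density of $g(\Theta)$ at $y$ equals $\pi(DyD\mid S)$ times the absolute Jacobian of $y\mapsto DyD$, so it suffices to verify that this is $L(y\mid DSD)\pi(y)$ up to a factor independent of $y$.

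Next I would record two elementary identities. For the likelihood, expanding the Gaussian density and using the cyclic property $\mathrm{tr}(DSD\,\Theta)=\mathrm{tr}(S\,D\Theta D)$ together with $\det\Theta=(\det D)^{-2}\det(D\Theta D)$ yields $L(\Theta\mid DSD)=(\det D)^{-n}\,L(D\Theta D\mid S)$ for every $\Theta$; in particular $L(DyD\mid S)=(\det D)^{n}L(y\mid DSD)$. For the prior, under $\Theta\mapsto D\Theta D$ with $D=\mathrm{diag}(d_1,\dots,d_p)$ the diagonal transforms as $\theta_{ii}\mapsto d_i^2\theta_{ii}$, so $\prod_i\theta_{ii}^{-c}$ acquires the factor $|\det D|^{-2c}$; the partial correlations transform as $\Delta_{ij}\mapsto\mathrm{sign}(d_i)\mathrm{sign}(d_j)\Delta_{ij}$, so every $|\Delta_{ij}|$ is unchanged and hence, by the assumed symmetry of each $pen_{ij}$, so is $\prod_{i<j}\pi_{ij}(\Delta_{ij})$; finally the transformed $\Delta$ matrix equals $E\Delta E$ with $E=\mathrm{diag}(\mathrm{sign}(d_1),\dots,\mathrm{sign}(d_p))$, which is invertible with $E^{-1}=E$, so the constraint $\Delta\in\mathcal{S}_1$ is preserved. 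Therefore $\pi(D\Theta D)=|\det D|^{-2c}\pi(\Theta)$, and in particular $\pi(DyD)=|\det D|^{-2c}\pi(y)$.

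It remains to compute the Jacobian: $y\mapsto DyD$ is linear and, in the coordinates $(y_{ij})_{i\le j}$, diagonal with entries $d_id_j$, so its absolute Jacobian is the constant $\prod_{i\le j}|d_id_j|=|\det D|^{p+1}$; only the independence of $y$ matters. Combining, the density of $g(\Theta)$ at $y$ equals $\pi(DyD\mid S)\,|\det D|^{p+1}$, which is proportional to $L(DyD\mid S)\,\pi(DyD)\,|\det D|^{p+1}=(\det D)^{n}|\det D|^{-2c}|\det D|^{p+1}\,L(y\mid DSD)\,\pi(y)$, i.e. $L(y\mid DSD)\pi(y)$ times a factor that does not depend on $y$. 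Since $g(\Theta)$ has a density integrating to one, normalisation identifies it with $\pi(\cdot\mid DSD)$, and the proof is complete.

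I do not expect a deep obstacle; the content is the three identities above, and the likelihood identity is essentially the one already underlying the maximum-a-posteriori statement in Proposition \ref{Prop:PCSepScaleInv}. The points that warrant care are the bookkeeping of the constant factors $(\det D)^{\pm n}$ from the determinant term and $|\det D|^{p+1}$ from the Jacobian --- one must check they cancel against the normalising constants of the two posteriors, which are finite by the standing assumption that both posteriors are proper --- and, when $D$ has negative diagonal entries, the use of the symmetry of each $pen_{ij}$ together with the invariance of $\mathcal{S}_1$ under congruence by the sign matrix $E$ to absorb the induced sign flips in $\Delta$.
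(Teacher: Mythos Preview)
Your proposal is correct and follows essentially the same route as the paper's proof: both establish that $L(D\Theta D\mid S)\,\pi(D\Theta D)$ and $L(\Theta\mid DSD)\,\pi(\Theta)$ differ only by a factor independent of $\Theta$, so that after normalisation the two posteriors are related by the map $\Theta\mapsto D^{-1}\Theta D^{-1}$. Your version is somewhat more careful in explicitly tracking the Jacobian $|\det D|^{p+1}$ (the paper leaves it implicit, since it cancels in the ratio of integrals), and one small cosmetic point: since $\det(D\Theta D)=(\det D)^2\det\Theta$, the likelihood factor should be $|\det D|^{n}$ rather than $(\det D)^{n}$ when $D$ may have negative entries, though this does not affect the argument.
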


\section{Exchangeable inference}\label{sec:EstEquiv}

We now discuss an alternative view on the desirability of standardising the data when using regular penalties, based on notions of exchangeable inference. The simplest situation occurs when the likelihood function is exchangeable in two or more $\Delta_{ij}$'s, for example when two rows in the sample correlation matrix $R= \mbox{diag}(S)^{-1/2} S \mbox{diag}(S)^{-1/2}$ are equal (up to the necessary index permutations).
In such a situation the likelihood provides the same information on these $\Delta_{ij}$'s, hence it seems desirable to obtain the same inference for all of them. 
If the log-likelihood is exchangeable in some parameters, then any symmetric PC-separable penalty and prior trivially leads to exchangeable inference on those parameters.
Yet, as illustrated in our example below, regular penalties can lead to significantly different inference (unless one standardises the data).

\begin{example}

Consider a $p=4$ setting where the data-generating truth follows a star graph, featuring an edge between $X^{(1)}$ and each of $X^{(2)},X^{(3)},X^{(4)}$, and no other edges.
Specifically, suppose that truly $\theta_{11}=\theta_{22}=\theta_{44}=1$, $\theta_{33}=4$, $\theta_{12}=\theta_{14}=-0.5$ and $\theta_{13}=-1$,
so that the data-generating partial correlations are $\Delta_{12}=\Delta_{13}=\Delta_{14}=0.5$, and $\Delta_{ij}=0$ for all remaining $(i,j)$.
Consider an ideal scenario where the sample covariance $S$ matches the data-generating truth. That is,
$$ S^{-1} = \begin{pmatrix}
1 & -0.5 & -1 & -0.5 \\
-0.5 & 1 & 0 & 0 \\
-1 & 0 & 4 & 0 \\
-0.5 & 0 & 0 & 1
\end{pmatrix};
\hspace{3mm}
S= \begin{pmatrix}
 4 &  2 & 1 &  2 \\
 2 &  2 & 0.5 &  1 \\
 1 &  0.5 & 0.5 &  0.5  \\
 2 &  1 & 0.5 &  2  \\
\end{pmatrix};
\hspace{3mm}
R= \begin{pmatrix}
 1 & 1/\sqrt{2} & 1/\sqrt{2} & 1/\sqrt{2} \\
 1/\sqrt{2} & 1   & 0.5  & 0.5 \\
 1/\sqrt{2} & 0.5 & 1    & 0.5 \\
 1/\sqrt{2} & 0.5 & 0.5  &  1 \\
\end{pmatrix}
$$
In this example, the likelihood is exchangeable in $(\Delta_{12},\Delta_{13},\Delta_{14})$, hence it seems desirable that $\hat{\Delta}_{12}=\hat{\Delta}_{13}=\hat{\Delta}_{14}$.  The estimates for the remaining $\Delta_{ij}$ should ideally be close to 0, their true value.

The left panel of Figure \ref{fig:EstEq} shows the GLASSO path for the partial correlations.
The estimate for $\Delta_{13}$ is fairly different than for $\Delta_{12}$ and $\Delta_{14}$, and so is the range of $\rho$'s for which they are set to 0.
Note however that the estimates for the remaining $\Delta_{ij}$'s are close to 0.
To address this issue, one may note that the diagonal of $S$ is not equal to 1. Indeed, if one standardises the data, so that the sample covariance is equal to $R$, one obtains
the center panel of Figure \ref{fig:EstEq}.
Now $\hat{\Delta}_{12}= \hat{\Delta}_{13}=\hat{\Delta}_{14}$ for any regularisation parameter $\rho$, as we argued is desirable.
However, the estimates for truly zero parameters are somewhat magnified for $\rho \in [0.05,0.35]$.

The PC-GLASSO estimates (on either the original or standardised data, due to scale invariance) in the right panel of Figure \ref{fig:EstEq} satisfy $\hat{\Delta}_{12}=\hat{\Delta}_{13}=\hat{\Delta}_{14}$, and the truly zero parameters are clearly distinguished.

\begin{figure}[ht]
\vspace{10pt}
	\centering
	\includegraphics[scale=0.45]{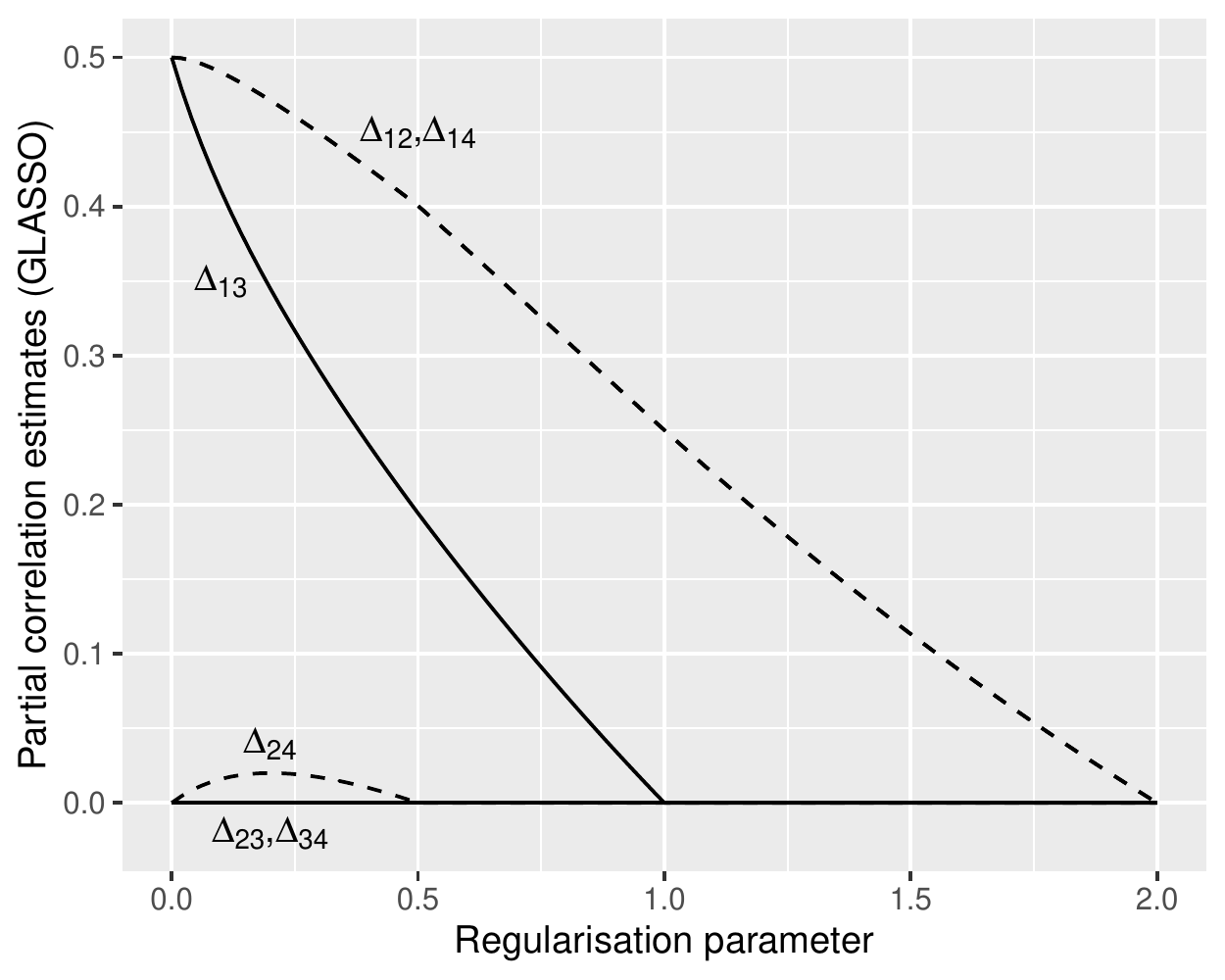}
	\centering
	\includegraphics[scale=0.45]{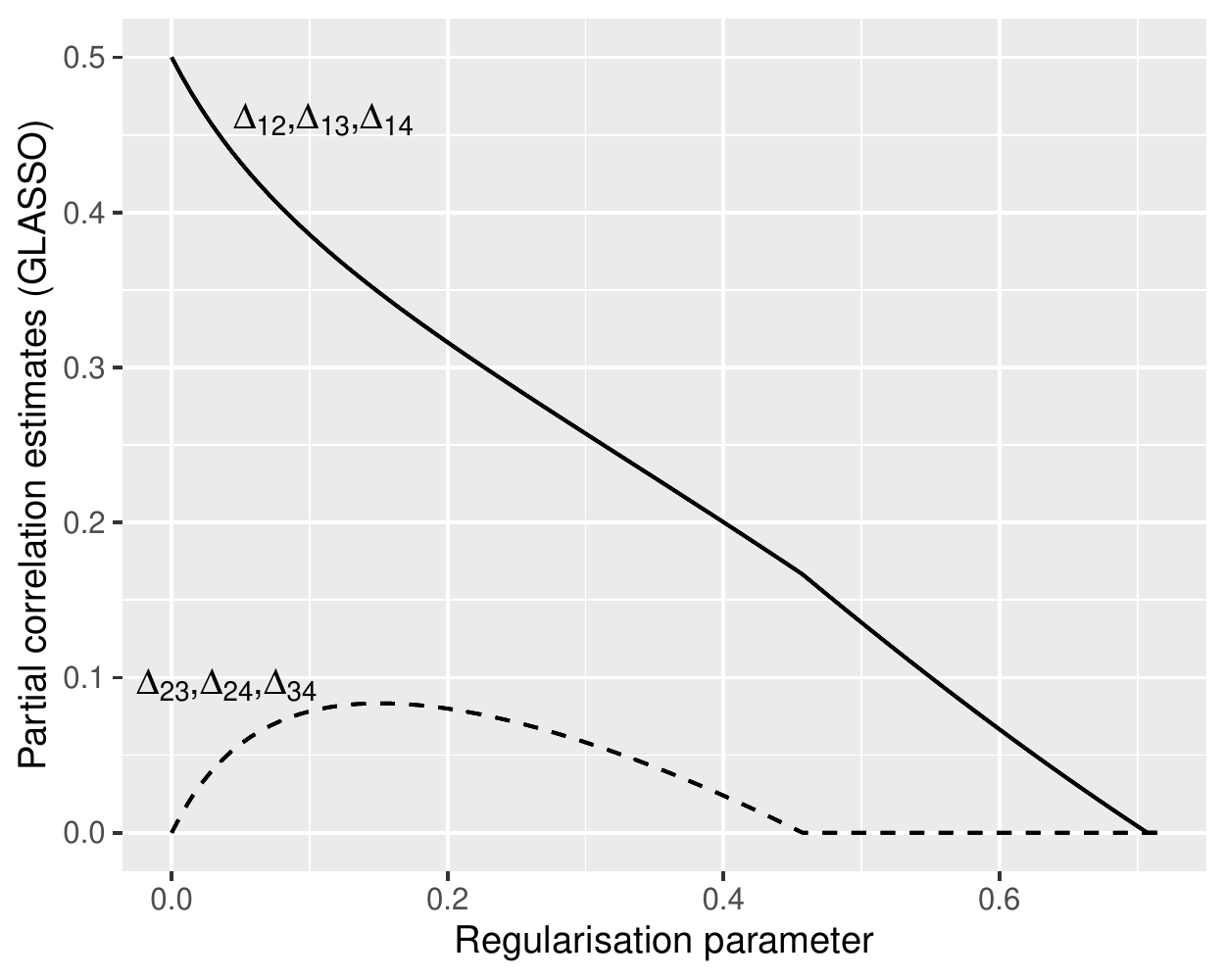}
	\centering
	\includegraphics[scale=0.45]{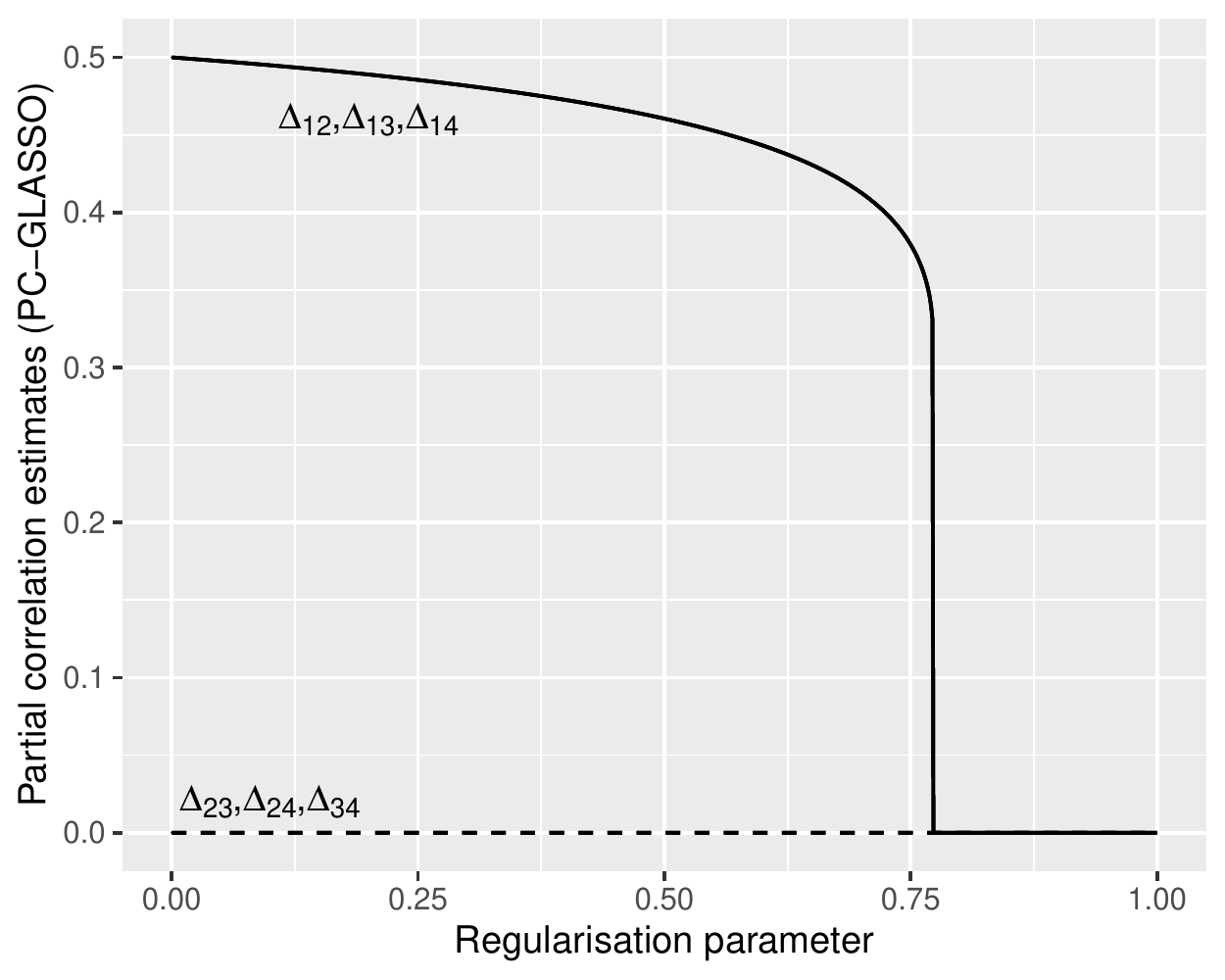}
\caption{Partial correlation regularisation paths in $p=4$ star graph example for GLASSO on the original $S$ (left), standardised $S$ (center) and PC-GLASSO (right).}
\label{fig:EstEq}
\end{figure}
\end{example}

We remark that the notion can be extended to conditional exchangeability, i.e. the likelihood being symmetric in $(\Delta_{ij},\Delta_{kl})$ given the remaining parameters in $\Delta$ and $\theta$.
For example, the likelihood is conditionally exchangeable in $(\Delta_{ij},\Delta_{ik})$ when the sample covariances and precisions are related by the same constant, i.e.
$S_{i j} = c S_{i k} $ and 
$\theta_{kk}^{1/2} = c \theta_{jj}^{1/2}$ for some $c>0$, 
and the partial correlations with other variables are equal, i.e. $\Delta_{j l} = \Delta_{k l}$ for all $l \not\in \{i,j,k\}$.
See Appendix \ref{subsec:Appendix-EstEqu} for additional information and supplementary results.
Conditional exchangeability would be relevant in situations where two variables $(j,k)$ have the same estimated partial correlations with all other variables (e.g. zero), as well as the same sample covariances with a third variable $i$. In such situations, one may wish for equal inference, in particular equal point estimates $\hat{\Delta}_{ij}=\hat{\Delta}_{ik}$.

\section{GLASSO and PC-GLASSO prior distributions}\label{sec:Priors}

In this section we provide further insights into the shrinkage induced by GLASSO and PCGLASSO, by comparing their implied prior distributions in a Bayesian framework.

The GLASSO prior \citep{Wang2012} can be written as
$$ \pi_{G}( \Theta ) \propto \prod_{i} \mathrm{Exp}( \theta_{ii} ; \lambda/2 ) \prod_{i<j} \mathrm{Laplace}( \theta_{ij} ; 0, \lambda^{-1} ) \mathbb{I}(\Theta \in \mathcal{S} ), $$
where $\lambda = n \rho$, 
whereas 
the PC-GLASSO prior
is given by
$$ \pi_{PC}( \theta, \Delta ) \propto \prod_{i} \theta_{ii}^{-2} \prod_{i<j} \mathrm{Laplace}( \Delta_{ij} ; 0, \lambda^{-1} ) \mathbb{I}(\Delta \in \mathcal{S}_{1} ). $$

To illustrate the effect of increasing the parameter $\lambda$ for fixed $p$ (see \cite{Wang2012} for results on growing $p$ for fixed $\lambda$), we sampled from each prior via rejection sampling for $\lambda = 1, 2$ and $4$. Figure \ref{fig:Priors} plots the densities of $\Delta_{12}$ and $\theta_{11}$.
The top left panel verifies the claim of \cite{Wang2012} that the GLASSO prior $\pi_G(\Delta_{ij})$ does not depend on $\lambda$, whereas the bottom panel shows that $\pi_G(\theta_{ii})$ is shrunk towards $0$ as $\lambda$ increases.  
In contrast, the PC-GLASSO prior (top-right panel) on partial correlations $\pi_{PG}(\Delta_{ij})$ concentrates around zero as $\lambda$ grows. The marginals on the diagonal entries are given by $\pi_{PG}(\theta_{ii})\propto\theta_{ii}^{-2}$ regardless of $\lambda$.

This demonstrates a fundamental difference in how GLASSO and PCGLASSO induce sparsity in the $\theta_{ij} = \Delta_{ij} \sqrt{\theta_{ii}\theta_{jj}}$.  PCGLASSO achieves sparsity through regularisation of the partial correlations, while GLASSO does so by shrinking the diagonal $\theta_{ii}$.

\begin{figure}[ht]
\vspace{10pt}
	\centering
	\includegraphics[scale=1]{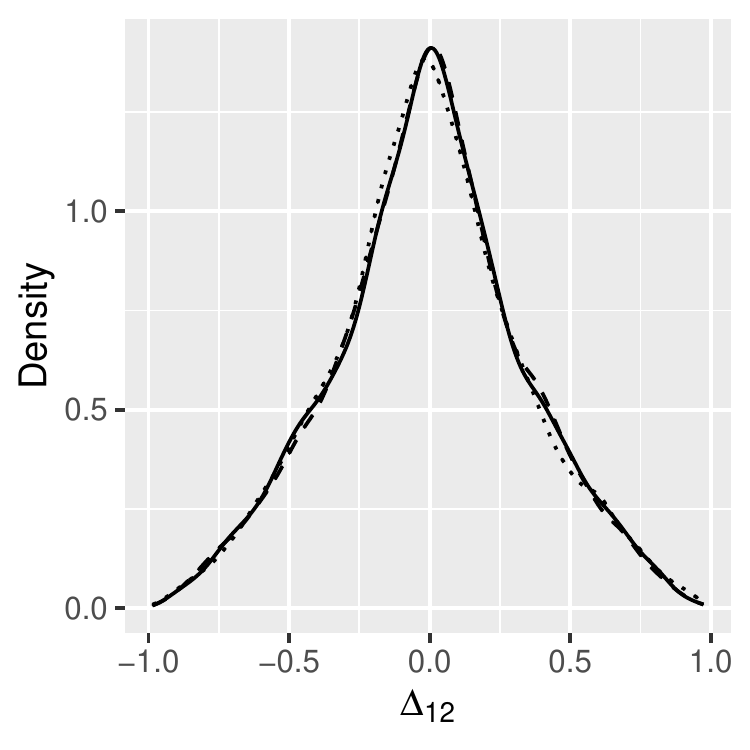}
	\centering
	\includegraphics[scale=1]{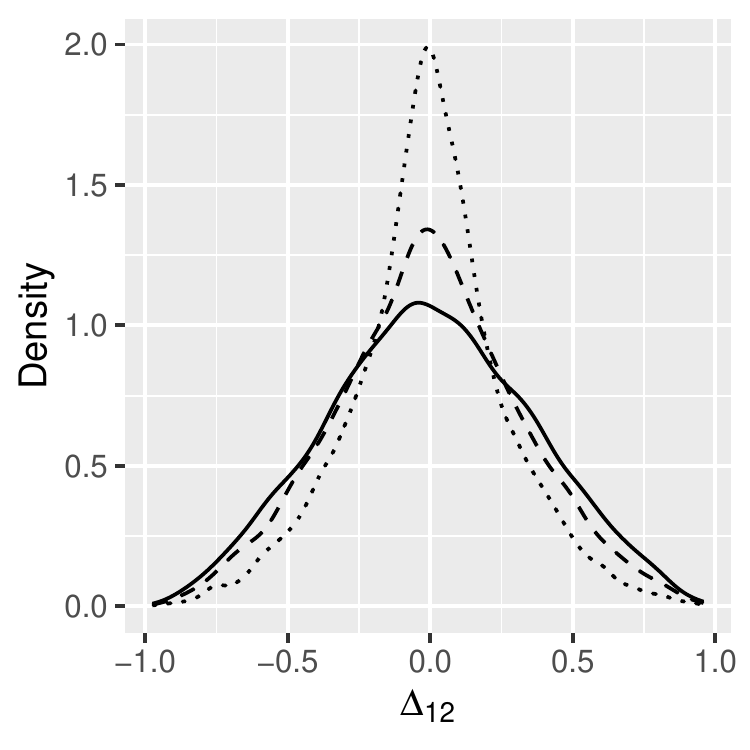}
	\centering
	\includegraphics[scale=1]{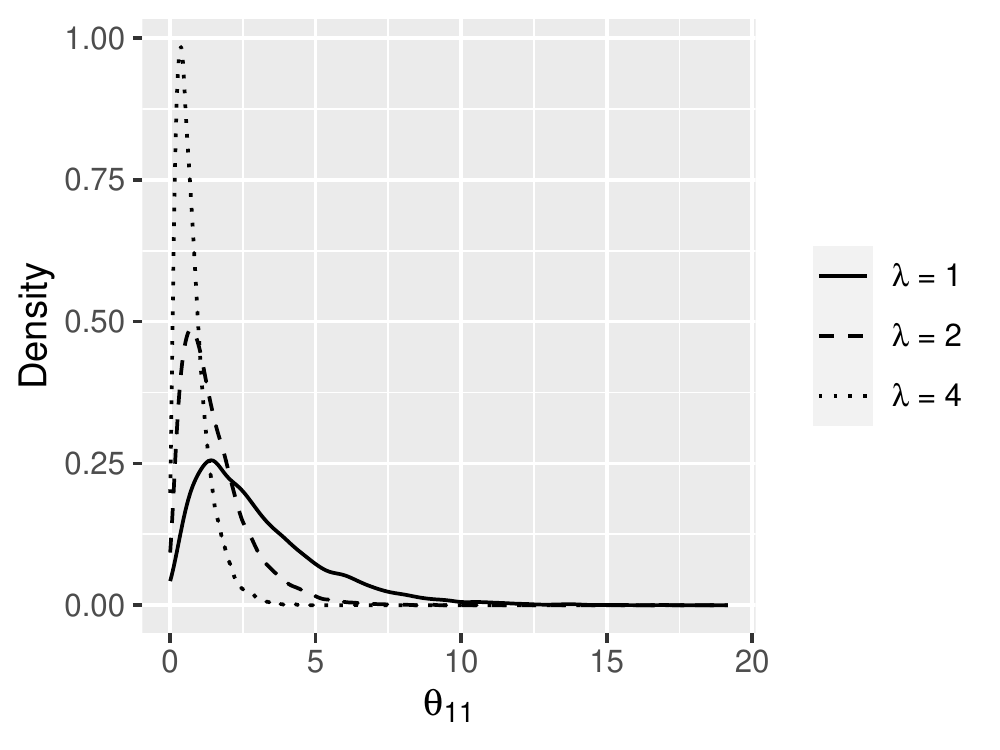}
\caption{Marginal prior densities for the partial correlations under GLASSO prior (top left) and PC-GLASSO prior (top right) and for the diagonal entries under the GLASSO prior (bottom).}
\label{fig:Priors}
\end{figure}

\section{Computation}\label{sec:Computation}

An important feature of GLASSO is its defining of a convex problem that significantly facilitates computation and its theoretical study.
For example, %the R package \textbf{gLASSO} implements the method of 
\cite{Friedman2008} related GLASSO to a sequence of LASSO problems, see also \cite{Sustik2012} for improved algorithms. %implemented in the R package \textbf{gLASSOFast}. 
Computation for non-convex penalties such as SCAD and MCP poses a harder challenge, but the Local Linear Approximation of \cite{Zou2008} greatly facilitates this task, see also \cite{Fan2009}.
The PC-GLASSO optimisation problem is non-convex, however it is conditionally convex given $\theta= \mbox{diag}(\Theta)$.

\begin{myprop}\label{prop:Conv1}

The penalised likelihood function (\ref{eq:PenLikeFn}) is concave in $\Delta$, for any fixed value of $\theta$.  

\end{myprop}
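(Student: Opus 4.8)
The plan is to verify concavity termwise, after first recording that the relevant domain is convex. The maximisation in (\ref{eq:PenLikeFn}) is over $\Delta \in \mathcal{S}_1$, the set of symmetric positive definite matrices with unit diagonal. This set is the intersection of the positive definite cone with the affine subspace $\{\Delta : \Delta_{ii} = 1 \text{ for all } i\}$, hence convex, so that ``concave in $\Delta$'' is meaningful. It then suffices to show that each summand of (\ref{eq:PenLikeFn}), regarded as a function of $\Delta$ with $\theta$ held fixed, is concave on this set, since a sum of concave functions is concave.

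First I would dispatch the three easy terms. The term $(1 - 4/n) \sum_i \log(\theta_{ii})$ does not involve $\Delta$ and is therefore constant. The data-fit term satisfies $\mathrm{tr}(S \theta^{1/2} \Delta \theta^{1/2}) = \mathrm{tr}(\theta^{1/2} S \theta^{1/2} \Delta)$ by cyclicity of the trace, so $-\mathrm{tr}(S \theta^{1/2} \Delta \theta^{1/2})$ is a linear functional of the entries of $\Delta$, in particular concave. The penalty term $-\rho \sum_{i \neq j} |\Delta_{ij}|$ is $-\rho$ times a sum of the convex maps $\Delta \mapsto |\Delta_{ij}|$, and since $\rho \geq 0$ this term is concave as well.

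The one substantive ingredient is concavity of the map $\Delta \mapsto \log(\det(\Delta))$ on the positive definite cone. Here I would invoke the classical fact, or give the one-line derivation: restricting to a line $\Delta + tH$ with $H$ symmetric, the second derivative at $t=0$ equals $-\mathrm{tr}((\Delta^{-1}H)^2) = -\mathrm{tr}(M^2) \leq 0$, where $M = \Delta^{-1/2} H \Delta^{-1/2}$ is symmetric, so $\log\det$ is concave (indeed strictly concave) on the positive definite matrices. Adding the four contributions — a concave term, a constant, a linear term, and a concave term — yields a concave function on the convex set $\mathcal{S}_1$, which is exactly the assertion.

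I do not expect any genuine obstacle: the only points needing a moment's care are (i) noting that $\mathcal{S}_1$ is convex so the statement is well posed, and (ii) citing or deriving concavity of $\log\det$. The broader message, which I would state in a closing remark, is that the sole source of non-convexity in the full PC-GLASSO objective is the coupling $\theta^{1/2}\Delta\theta^{1/2}$ between $\theta$ and $\Delta$ inside the trace; once $\theta$ is fixed this coupling is linear in $\Delta$, so the conditional problem is concave.
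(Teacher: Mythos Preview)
Your proposal is correct and follows essentially the same approach as the paper: decompose the objective into the log-determinant term (concave on the positive definite cone), a linear trace term, a constant, and a concave $L_1$ penalty, and sum. Your version is slightly more detailed in that you explicitly verify convexity of $\mathcal{S}_1$ and sketch the second-derivative argument for $\log\det$, whereas the paper simply cites the well-known concavity of the log-determinant; but the structure is the same.
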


Proposition \ref{prop:Conv1} (proof in Appendix \ref{subsec:Appendix-CompProofs}) opens the possibility to consider block-optimization algorithms, where $\hat{\theta}$ and $\hat{\Delta}$ are updated sequentially, to facilitate computation. In our examples, we took an even simpler strategy and used a coordinate descent algorithm. 
Despite its conceptual simplicity, the algorithm requires careful updating of each parameter to ensure positive definiteness of $\hat{\Delta}$. For brevity we defer details to Appendix \ref{sec:Appendix-CoordDesc} and Algorithms S\ref{alg:RegPath}-S\ref{alg:CoordDesc}.
For the scale of problems addressed in this paper, provided the starting point is close to the optimum then the algorithm typically converges in a few iterations. 
To exploit this observation, when considering a sequence of penalty parameters $0=\rho_{0}<\rho_{1}<\dots<\rho_{k}$, we used the estimated $(\hat{\theta},\hat{\Delta})$ associated to $\rho_{i}$ as the starting point for the problem associated to $\rho_{i+1}$. For $\rho_0=0$ the algorithm is initialised at $S^{-1}$, or at $( S + \alpha I )^{-1}$ where $I$ is the identity matrix if $n<p$.  The matrix $S + \alpha I$ is guaranteed to be invertible and positive definite for any $\alpha$.

\section{Applications}\label{sec:Simulations}

We now assess the performance of PC-GLASSO against GLASSO, SCAD and MCP, setting the regularization parameters via the BIC in (\ref{eq:BIC}). SCAD and MCP have an additional regularization parameter, which we set to the default proposed in \cite{Fan2001} and \cite{Zhang2010} respectively. For all methods we standardised data to unit sample variances, and rescaled the estimates via \eqref{eq:ScaledTheta}.  GLASSO was implemented using the R package \textbf{glasso} and SCAD and MCP using the package \textbf{GGMncv} (see \cite{Williams2020}).

Our primary interest is studying PC-GLASSO versus GLASSO, as they are directly comparable in the sense of using the same $L_1$ penalty structure. We consider SCAD and MCP as benchmarks designed to ameliorate the estimation bias associated to the $L_1$ penalty. Although not considered here for brevity, it would also be interesting to study the use of SCAD and MCP penalties on partial correlations.

\subsection{Simulations}

We considered four simulation scenarios with Gaussian data, truly zero mean and precision matrix $\Theta$ with unit diagonal and off-diagonal entries as follows.

\begin{enumerate}[align=left]

\item[Scenario 1: Star graph -] $ \theta_{ij} = \begin{cases}
-\frac{1}{\sqrt{p}}, & i=1 \text{ or } j=1 \\
0, & \text{otherwise}
\end{cases} $

\item[Scenario 2: Hub graph -] Partition variables into 4 groups of equal size, with each group associated to a `hub' variable $i$.  For any $j\neq i$ in the same group as $i$ we set $\theta_{ij} = \theta_{ji} = \frac{-2}{\sqrt{p}}$ and otherwise $\theta_{ij}=0$.

\item[Scenario 3: AR2 model -] $ \theta_{ij} = \begin{cases}
\frac{1}{2}, & j = i-1,i+1 \\
\frac{1}{4}, & j = i-2,i+2 \\
0, & \text{otherwise}
\end{cases} $

\item[Scenario 4: Random graph -] randomly select $\frac{3}{2}p$ of the $\theta_{ij}$ and set their values to be uniform on $[-1,-0.4] \cup [0.4,1]$, and the remaining $\theta_{ij}=0$. Calculate the sum of absolute values of off-diagonal entries for each column.  Divide each off-diagonal entry by 1.1 times the corresponding column sum and average this rescaled matrix with its transpose to obtain a symmetric, positive definite matrix.
\end{enumerate}

For each setting we used $p=20$ variables, considered sample sizes $n \in \{30, 100\}$ and we performed 100 independent simulations.
To assess estimation accuracy we used the 
Kullback–Leibler (KL) loss
$$ \mathrm{KL}(\Theta,\hat{\Theta}) = -\log(\hat{\Theta}) + \mathrm{tr}(\hat{\Theta} \Theta^{-1}) + \log(\Theta) - p. $$  
To assess model selection accuracy we considered the 
Matthews correlation coefficient (MCC)
$$ \mathrm{MCC}=\frac{ \mathrm{TP}\times\mathrm{TN} - \mathrm{FP}\times\mathrm{FN} }{ \sqrt{ (\mathrm{TP}+\mathrm{FP}) (\mathrm{TP}+\mathrm{FN}) (\mathrm{TN}+\mathrm{FP}) (\mathrm{TN}+\mathrm{FN}) } }, $$
where TP, TN, FP and FN stand for the number of true positives, true negatives, false positives and false negatives (respectively) and measure the ability to recover the true edges in the graph corresponding to $\Theta$. The MCC combines specificity and sensitivity into a single assessment and ranges between $-1$ and $1$, where $1$ indicates perfect model selection.  More information on the MCC can be found in, for example, \cite{Chicco2020}.

Figure \ref{fig:allgraphs} summarises the results. More detailed results, including Frobenius norm, sensitivity and specificity, are in Appendix \ref{sec:Appendix-Results}.
PCGLASSO generally outperformed GLASSO in all scenarios, and either outperformed or was competitive to SCAD and MCP.
More specifically, PCGLASSO strongly outperformed other methods in the Star graph setting in estimation and model selection.
The Star graph is an example where there is a large range in the node degrees, suggesting that penalising partial correlations can be particularly beneficial in such situations.
The AR2 model is the opposite situation where every node has either 1 or 2 edges. Here PCGLASSO still improved significantly over GLASSO, and to a lesser extent over SCAD or MCP in the $n=30$ case, but for $n=100$ the latter two provided better estimation and model selection recovery.
PCGLASSO was also generally better in the Hub and Random graph settings, particularly for $n=30$, although SCAD and MCP offered slight improvements for $n=100$.

Figure \ref{fig:allgraphs_selection} shows the proportion of the 100 simulations in which each edge was selected, illustrating that PCGLASSO generally selected sparser models than GLASSO, particularly in the Star and Hub scenarios.

\begin{figure}[p]
\centering
\begin{tabular}{cc}
\multicolumn{2}{c}{Star graph} \\
\includegraphics[scale=0.6]{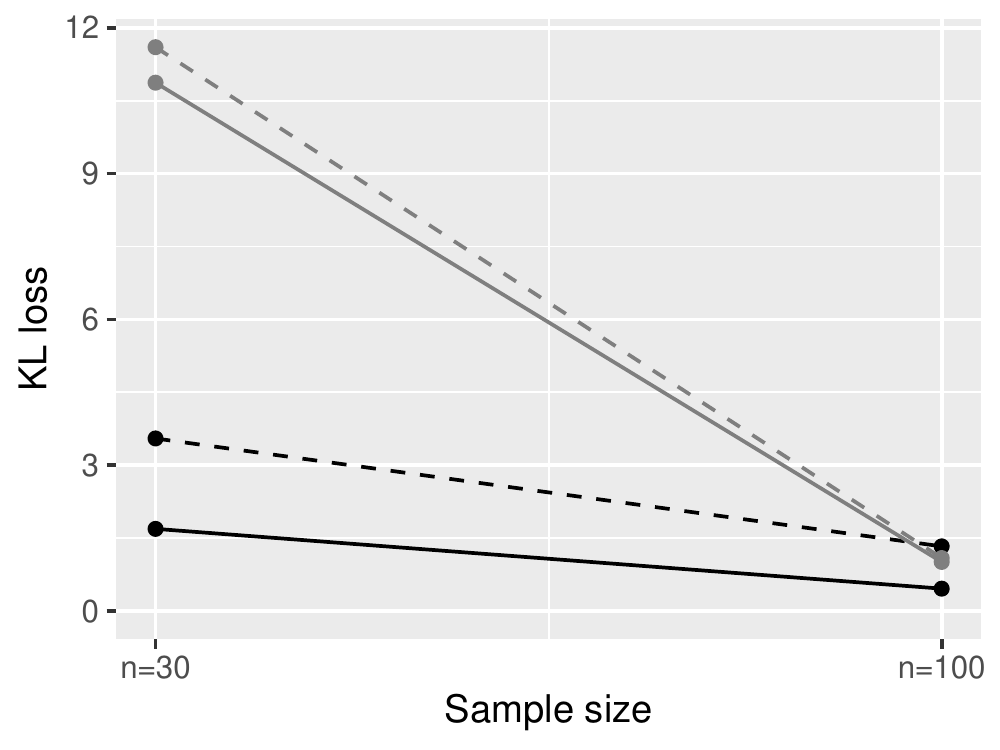} &
	\includegraphics[scale=0.6]{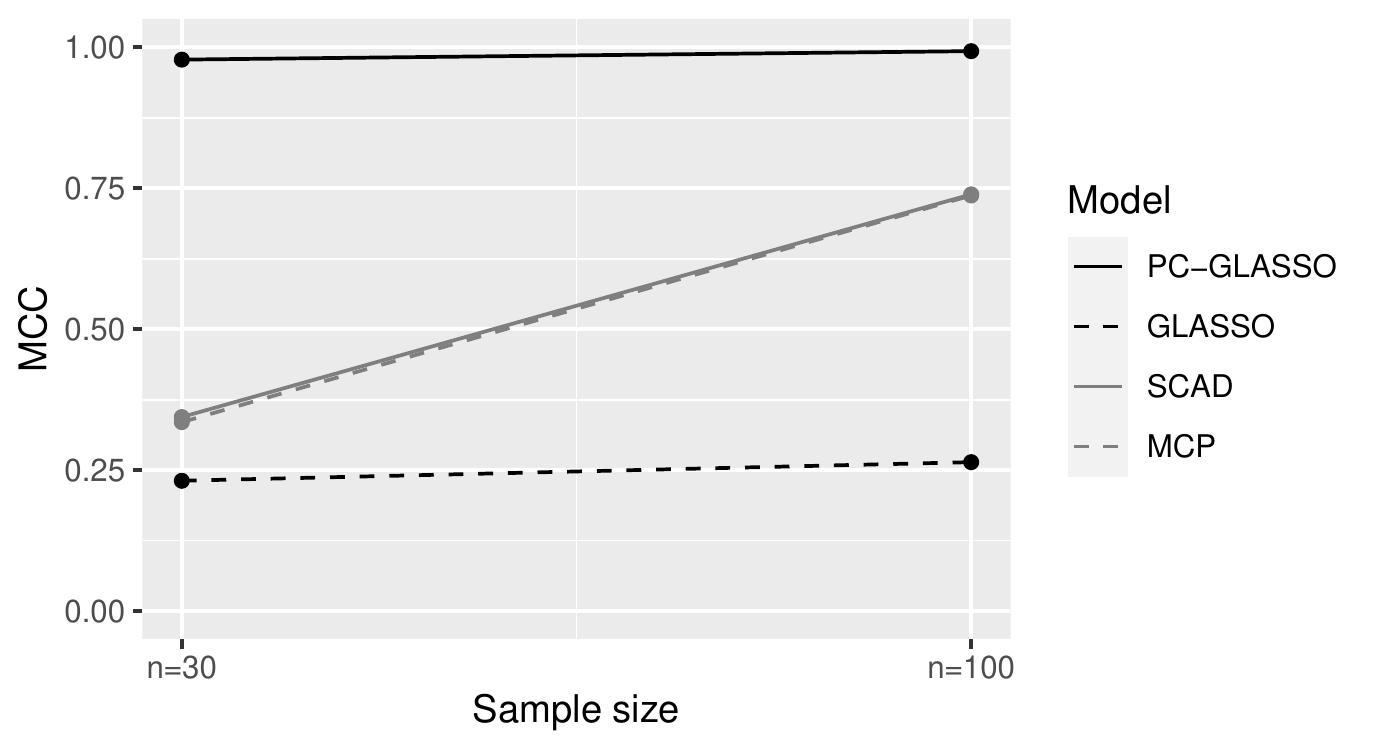} \\
\multicolumn{2}{c}{Hub graph} \\
\includegraphics[scale=0.6]{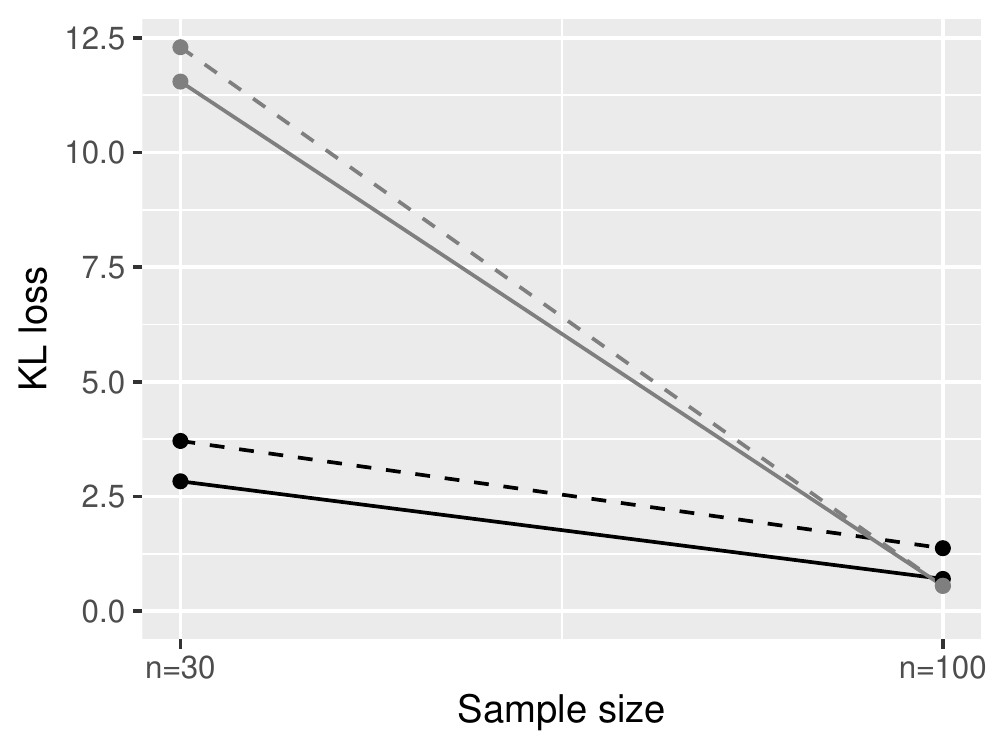} &
	\includegraphics[scale=0.6]{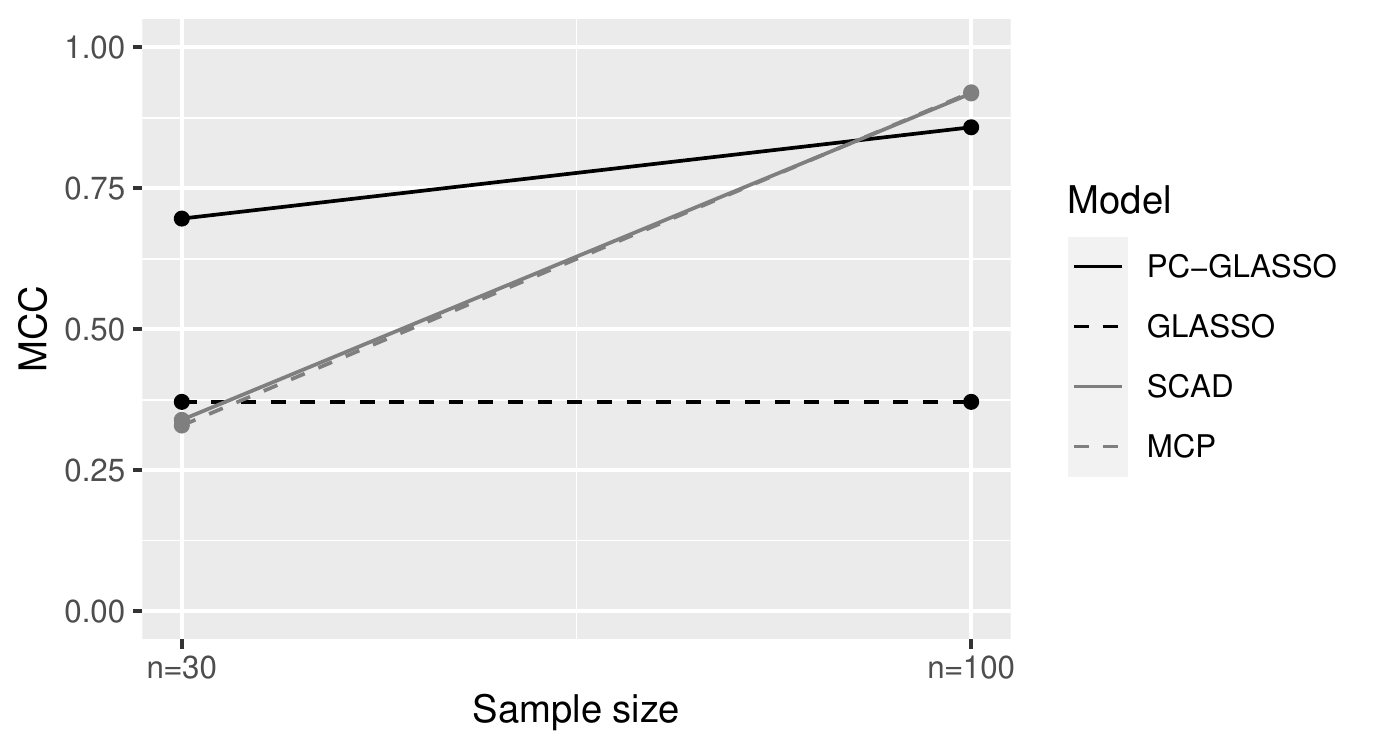} \\
	\multicolumn{2}{c}{AR2 graph} \\
\includegraphics[scale=0.6]{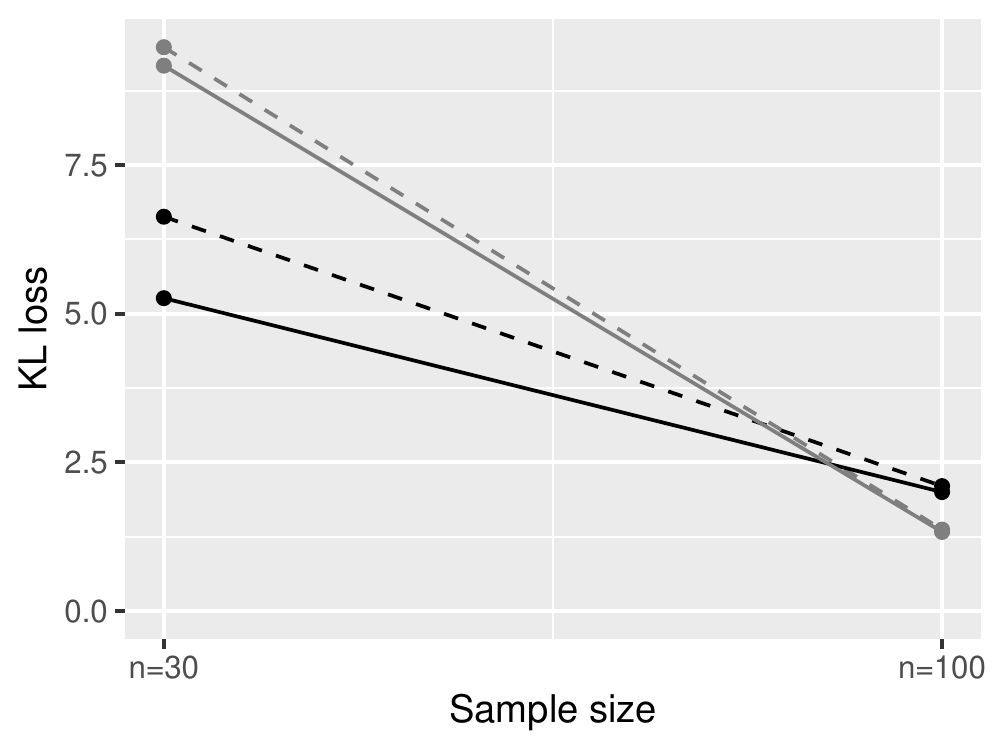} &
	\includegraphics[scale=0.6]{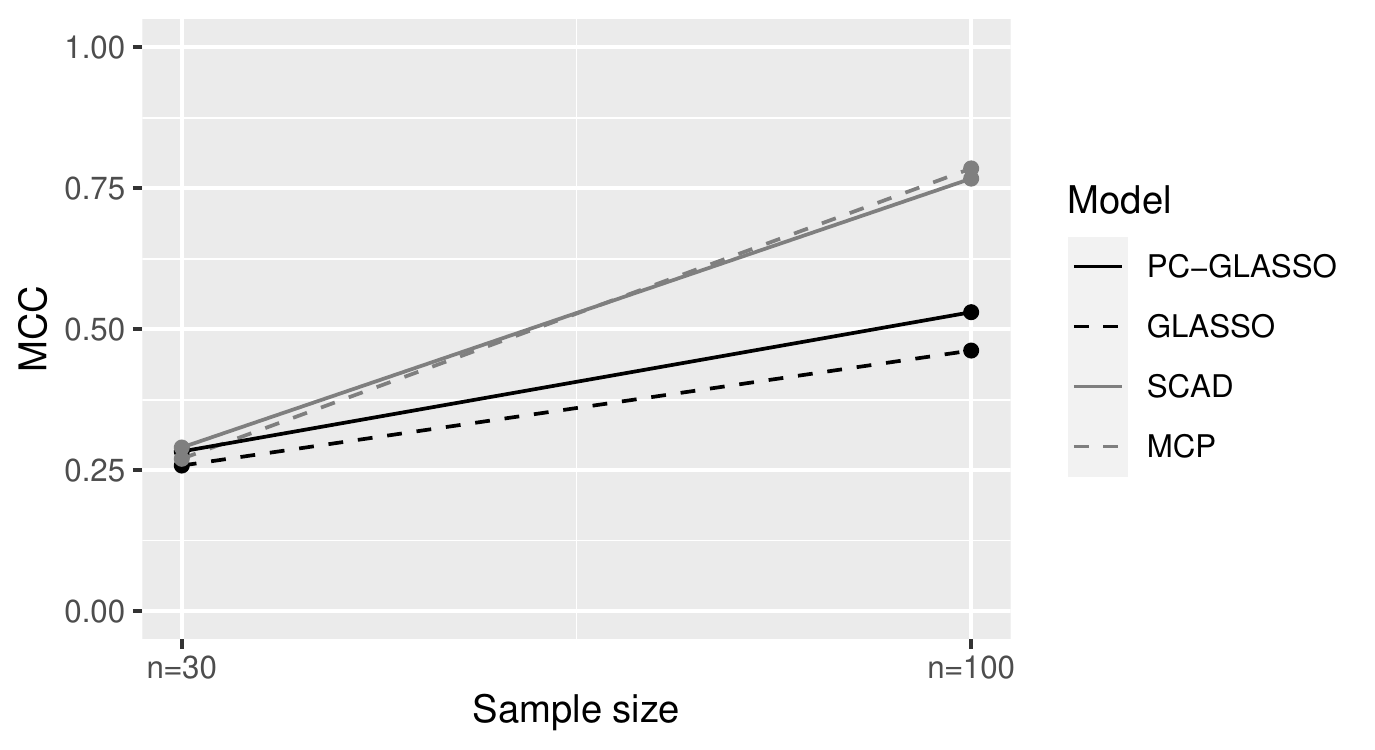} \\
	\multicolumn{2}{c}{Random graph} \\
\includegraphics[scale=0.6]{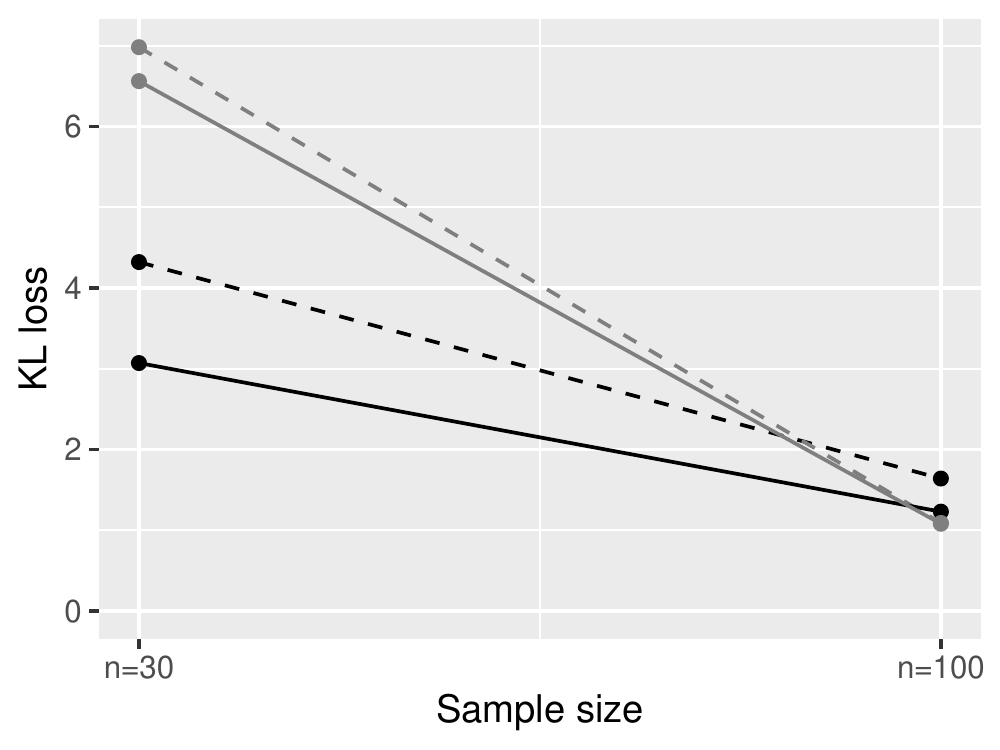} &
	\includegraphics[scale=0.6]{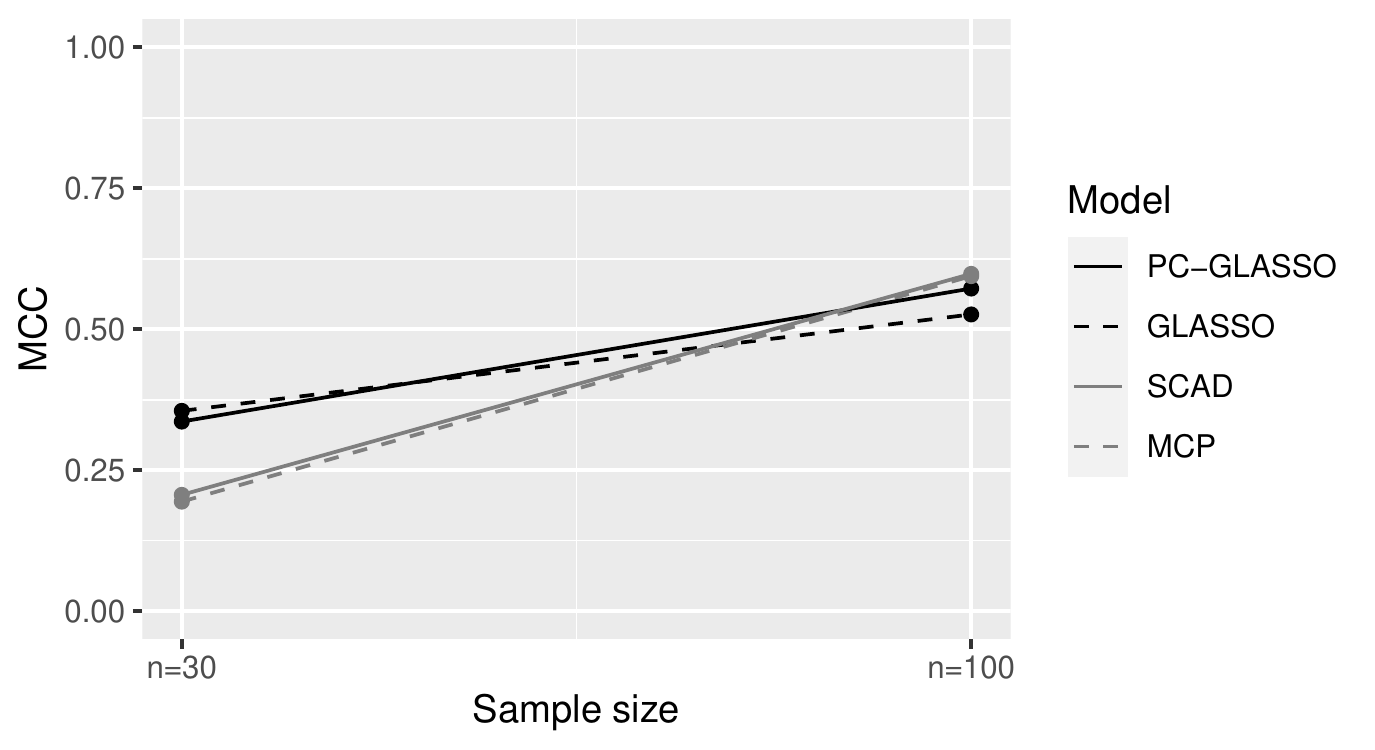}
\end{tabular}
\caption{Kullback-Leibler loss (left) and MCC (right) in the four simulation settings}
\label{fig:allgraphs}
\end{figure}

\begin{figure}[p]
\begin{tabular}{ccccc}
\multicolumn{5}{c}{Star graph} \\
	\includegraphics[scale=0.9]{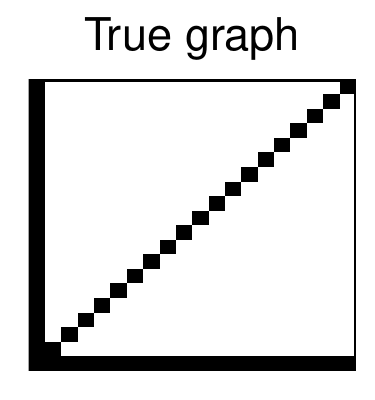} &
	\includegraphics[scale=0.9]{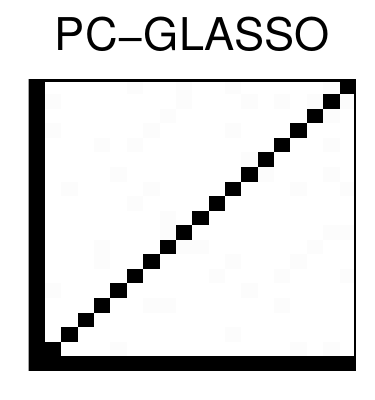} &
	\includegraphics[scale=0.9]{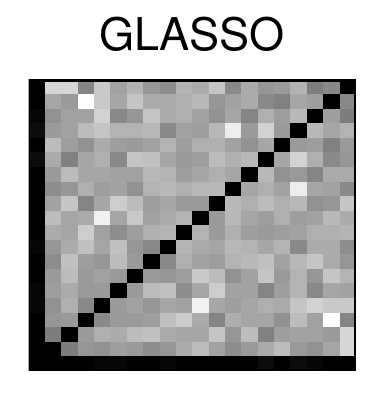} &
	\includegraphics[scale=0.9]{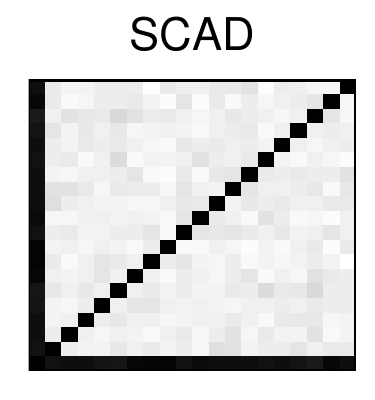} &
	\includegraphics[scale=0.9]{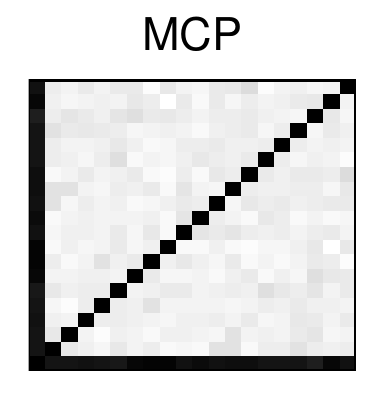} \\
\multicolumn{5}{c}{Hub graph} \\
	\includegraphics[scale=0.9]{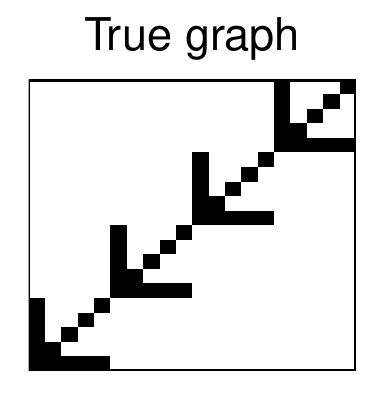} &
	\includegraphics[scale=0.9]{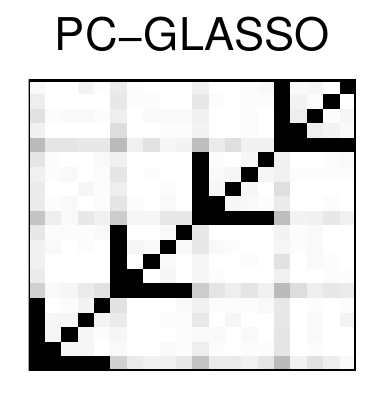} &
	\includegraphics[scale=0.9]{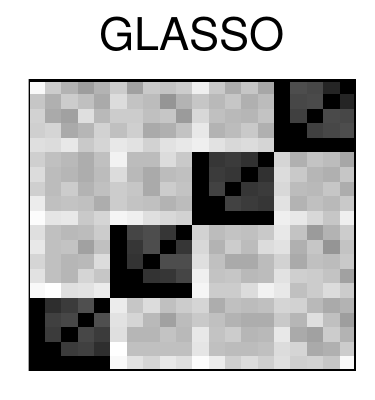} &
	\includegraphics[scale=0.9]{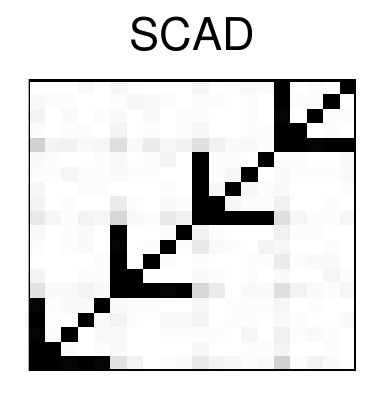} &
	\includegraphics[scale=0.9]{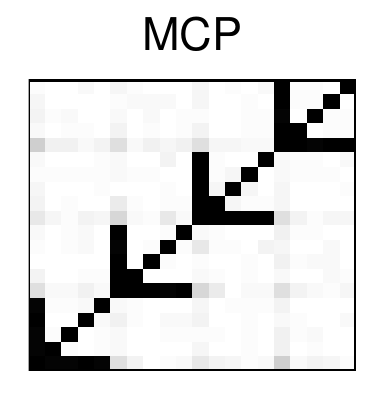} \\
\multicolumn{5}{c}{AR2 graph} \\
	\includegraphics[scale=0.9]{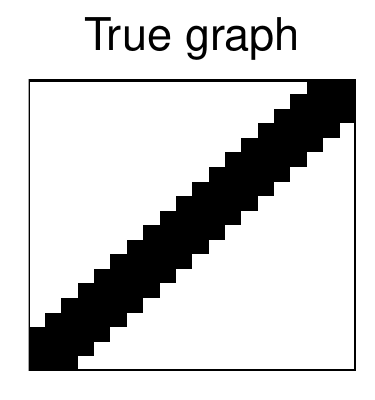} &
	\includegraphics[scale=0.9]{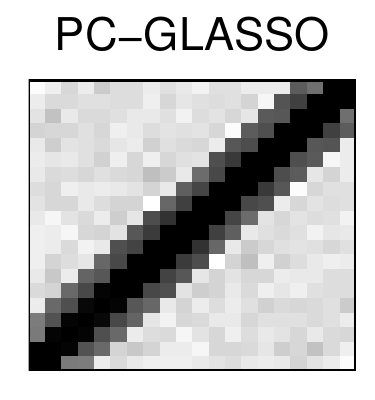} &
	\includegraphics[scale=0.9]{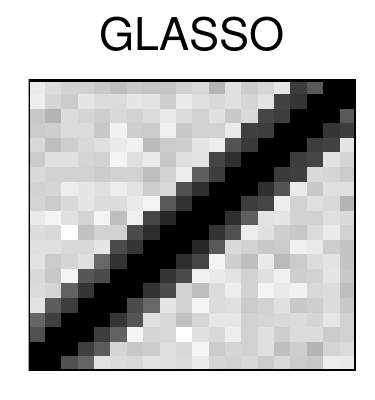} &
	\includegraphics[scale=0.9]{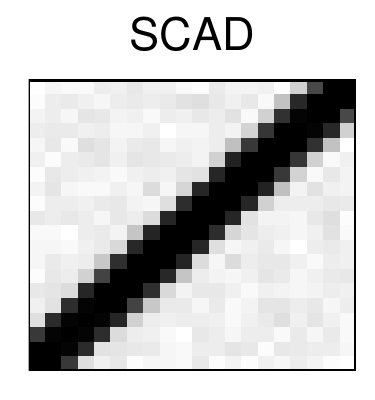} &
	\includegraphics[scale=0.9]{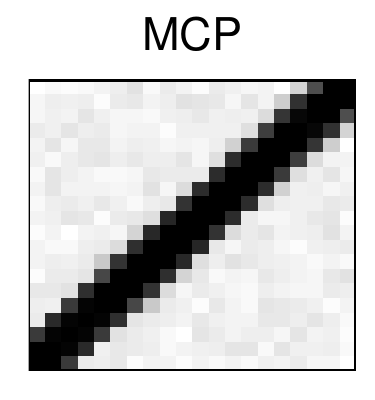} \\
\multicolumn{5}{c}{Random graph} \\
	\includegraphics[scale=0.9]{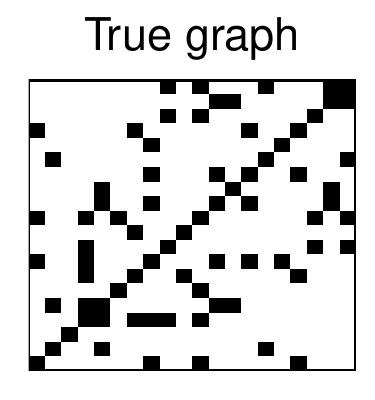} &
	\includegraphics[scale=0.9]{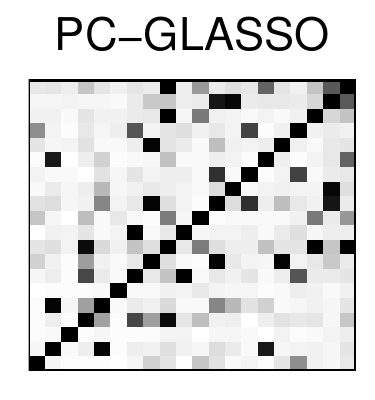} &
	\includegraphics[scale=0.9]{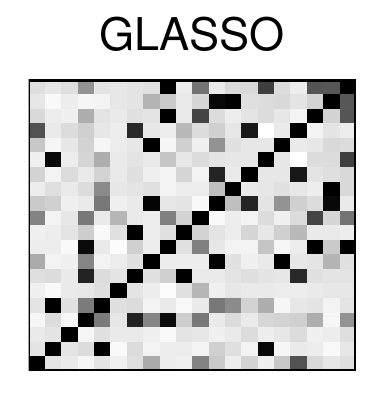} &
	\includegraphics[scale=0.9]{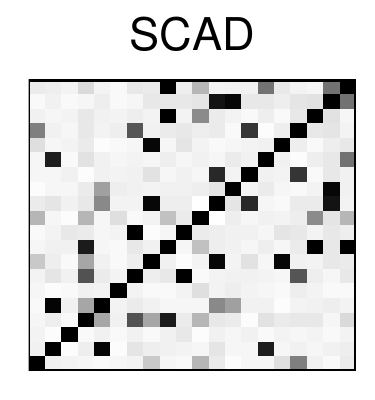} &
	\includegraphics[scale=0.9]{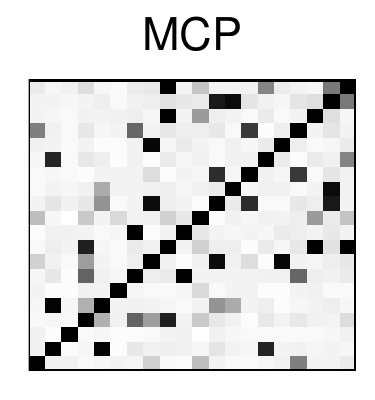}
\end{tabular}
\caption{Proportion of simulations in which each edge was selected}
\label{fig:allgraphs_selection}
\end{figure}

\subsection{Gene expression data}

We assessed the predictive performance of the four penalised likelihood methods in the gene expression data of \cite{Calon2012}. The data contain $262$ observations of $p=173$ genes related to colon cancer progression.  We took $n=200$ of the samples as training data, left the remaining $62$ observations as test data, and
assessed the predictive accuracy of each method by evaluating the log-likelihood on the test data.

Figure \ref{fig:Cancer_StockSizeLike} (left) plots the model size vs. test sample log-likelihood, 
and indicates the models chosen by the BIC and EBIC.  For both these solutions, PC-GLASSO achieved a significantly higher log-likelihood than the other three methods, and selected a model of roughly comparable size.

\begin{figure}[ht]
\vspace{10pt}
\begin{tabular}{cc}
\includegraphics[scale=0.5]{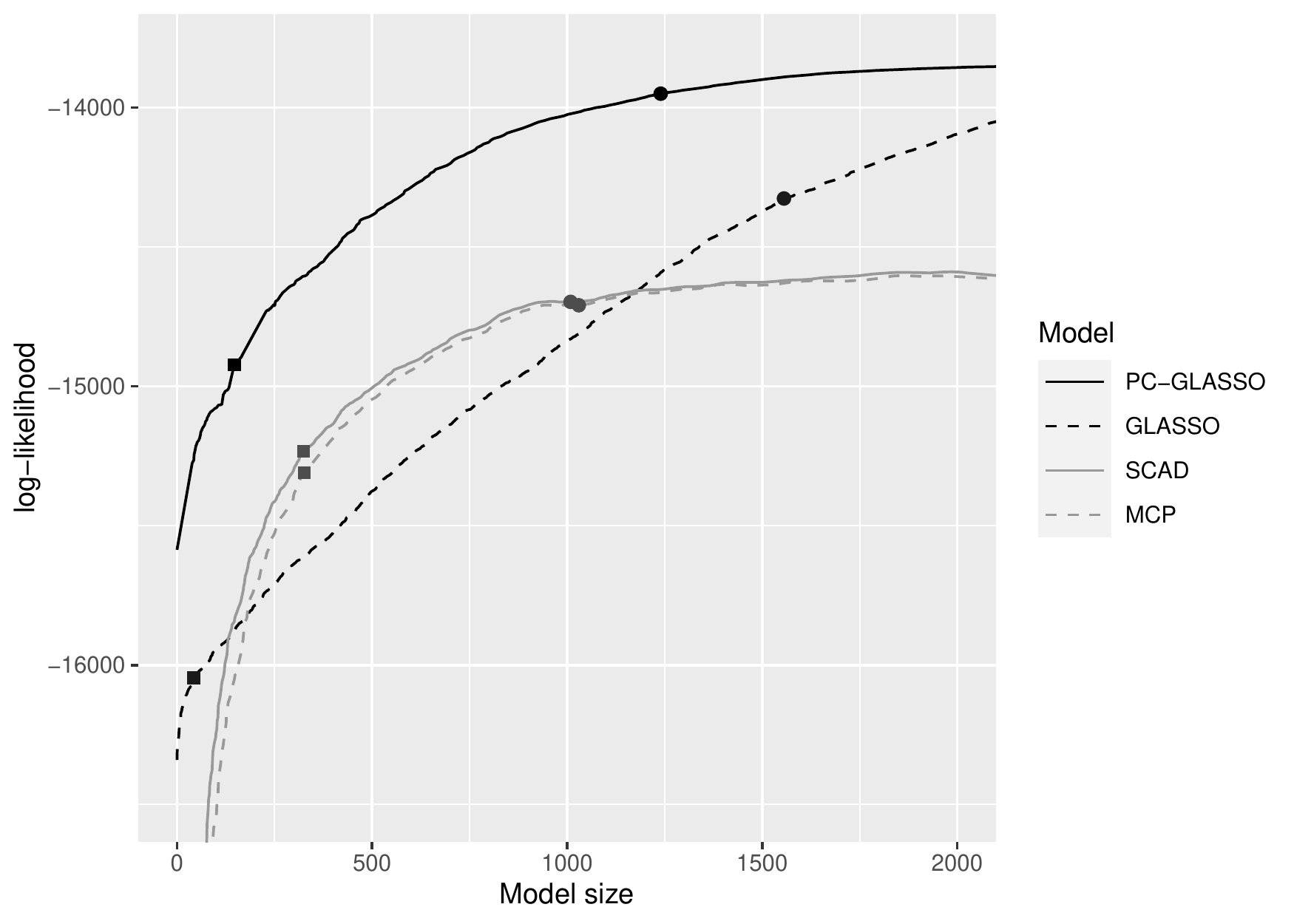} &
	\includegraphics[scale=0.5]{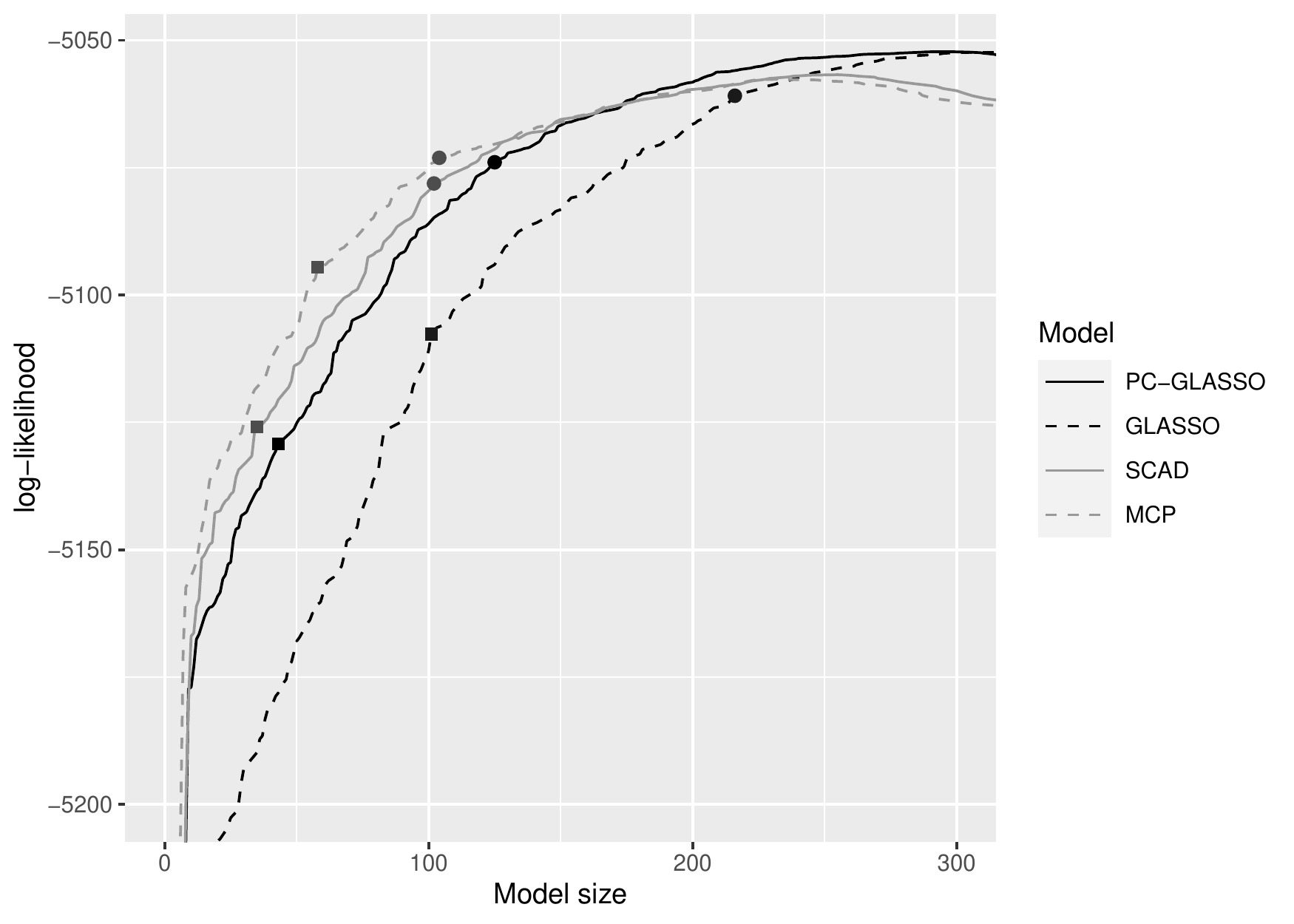}
\end{tabular}
\caption{Model size vs predictive ability in the gene expression (left) and stock market (right) data.  Estimates selected via BIC and EBIC with $\gamma=0.5$ are shown by dots and squares respectively.}
\label{fig:Cancer_StockSizeLike}
\end{figure}

\subsection{Stock market data}

We analyzed the stock market data in the R package \textbf{huge}, investigated in the graphical model context by \cite{Banerjee2015}. 
The data contain daily closing stock prices of companies in the S\&P 500 index between 1st January 2003 and 1st January 2008.
We consider de-trended stock-market log-returns, to study the dependence structure after accounting for the overall mean market behavior. Specifically, let $Y_{jt}$ be the closing price of company $j$ at time $t$, $\tilde{X}_{jt} = \log\left(\frac{Y_{j,t+1}}{Y_{jt}}\right)$ the log-returns, and $X_{jt} = \tilde{X}_{jt} - \bar{X}_{t}$ the de-trended returns, where $\bar{X}_{t} = \sum_{j=1}^p \tilde{X}_{jt}$. 
We randomly selected $p=30$ companies and, to avoid issues with stock market data exhibiting thicker tails than the assumed Gaussian model, we removed outlying observations more than 5 sample standard deviations away from the mean in any of the $p$ variables.
There remained 1,121 observations of which we randomly selected 1,000 for the training and 121 for the test data.  

Figure \ref{fig:Cancer_StockSizeLike} (right) shows the results, which highlight interesting trade-offs in sparsity vs. predictive accuracy. PC-GLASSO selected a smaller model than GLASSO for BIC and EBIC, and achieved a higher log-likelihood in the test data for any model with $<200$ edges, whereas GLASSO attained a higher log-likelihood at the selected model.
Interestingly, the SCAD and MCP penalties provided a similar accuracy to PC-GLASSO, albeit slightly higher for models with $<150$ edges and slightly lower for larger models.

\section{Discussion}\label{sec:discussion}

Penalised likelihood methods based on regular penalty functions are a staple of Gaussian graphical model selection and precision matrix estimation.  They provide a conceptually easy strategy to obtain sparse estimates of $\Theta$ and, particularly in the case of GLASSO, fairly efficient computation,
even for moderately large dimensions.  
However, in this paper we demonstrated that estimates obtained from regular penalties depend on the scale of the variables.  This gives a situation where a simple change of units (measuring a distance in miles rather than kilometers) can result in different graphical model selection. Further, we showed that notions of exchangeability also motivate the need for standardising the data when using regular penalties.

Standardising the data is not innocuous. 
First, even when the variables follow a Gaussian distribution, that is no longer the case for the scaled variables, which exhibit thicker tails.
Second, as demonstrated in several of our examples, applying regular penalties to scaled data can adversely affect inference.
This effect was particularly detrimental in examples where the true underlying graph has a large range in node degrees, as in the Star graph setting.

A wide class of PC-separable penalties, including the PC-GLASSO, overcome these issues as they are scale invariant and do not require standardisation. Using a Bayesian viewpoint, we illustrated 
that PCGLASSO induces a different shrinkage than standard penalties, in that the former induces shrinkage on partial correlations, whereas the latter do not.
Our examples showed that such differential shrinkage can offer significant improvements both in estimation and model selection.

A limitation of our work lies in the computation.  While the efficiency of the coordinate descent algorithm is reasonable in lower dimensions, the computations become impractical for larger $p$.  However, the conditional convexity of the PC-GLASSO problem opens interesting strategies for future improvements.

Further interesting future work is to investigate the theoretical properties of PC-GLASSO, for example model selection consistency, which holds for GLASSO only under certain nontrivial conditions \citep{Ravikumar2009}. The wider set of PC-separable penalties also warrant further exploration, most obviously PC-separable versions of the SCAD and MCP penalties.  On the Bayesian side, a PC-separable version of the spike and slab penalty of \cite{Gan2018} may also be of interest.  Beyond the Gaussian case, penalisation of partial correlations also seems natural for partial correlation graphs in elliptical and transelliptical distributions, see \cite{Rossell2020}.

\section*{Acknowledgements}
JSC is funded by the EPSRC grant EP/L016710/1 as part of the Oxford-Warwick Statistics Programme (OxWaSP).
DR was partially funded by the Europa Excelencia grant EUR2020-112096, Ram\'on y Cajal Fellowship RYC-2015-18544 and
Plan Estatal PGC2018-101643-B-I00.  JQS was supported by the Alan Turing Institute and funded by the Engineering and Physical Sciences Research Council [grant number EP/K03 9628/1].

\newpage

\appendix
\appendixpage

\renewcommand{\themyprop}{S\arabic{myprop}}
\renewcommand{\themycor}{S\arabic{mycor}}

Appendix \ref{sec:Appendix-CoordDesc} outlines the derivation of the coordinate descent algorithm, and presents Algorithms S\ref{alg:RegPath}-S\ref{alg:CoordDesc} to obtain the PC-GLASSO solution for a sequence of penalisation parameters and a given penalisation parameter value, respectively.
Appendix \ref{sec:Appendix-Proofs} provides the proofs for all our propositions and further results.
Appendix \ref{sec:Appendix-Results} shows some supplementary results for the examples in Section \ref{sec:Simulations}.

\section{Coordinate descent algorithm}\label{sec:Appendix-CoordDesc}

We present the coordinate descent algorithm we used to calculate PC-GLASSO estimates in the simulated examples of this paper.  Our aim is to find the values of $\Theta$ that maximise the objective function (\ref{eq:PenLikeFn}) for a sequence of penalty parameters $0=\rho_{0}<\rho_{1}<\dots<\rho_{k}$, i.e. the regularisation path.  Algorithm S\ref{alg:RegPath}, for which the coordinate descent algorithm S\ref{alg:CoordDesc} is embedded, ensures that the previous estimate related to $\rho_{i-1}$ is used as a starting point for the coordinate descent for $\rho_{i}$.  This ensures that the coordinate descent is initialised at a point close to the maximum and aids convergence.  We also standardise the sample covariance $S$ to have unit diagonals, before returning the estimates to the original scale.  This has no effect on the estimated values due to the scale invariance of PC-GLASSO, however it helps with the numerics of the coordinate descent.

Algorithm S\ref{alg:CoordDesc} is a standard blockwise coordinate descent algorithm which randomly cycles through the entries of $\Delta$ and maximises the objective function with respect to $\Delta_{ij},\Delta_{ji},\theta_{ii},\theta_{jj}$ while holding all other entries fixed.  Once the algorithm has cycled through each of the entries of $\Delta$ exactly once, a stopping rule is tested.  The stopping rule we choose is based on the increase in the value of the objective function brought about by the updates.  If the increase in the objective function is less than a particular threshold then the algorithm is terminated and the current estimate is returned.  Note that the threshold here is scaled by $ q = \max\left\{ \frac{ 2 | \{ \Delta_{ij}^{(0)} \neq 0 : i < j \} | }{ p (p-1) }, \frac{2}{p(p-1)} \right\}$, the proportion of non-zero entries in the previous estimate $\Delta^{(0)}$.  This is because once an entry is shrunk to zero, it is likely that it will remain zero in future estimates.  Therefore, the number of entries that are actively being updated is proportional to $q$.  If only a small number of entries are being actively updated then one would expect the increase in the objective function to be smaller.  Hence, scaling the threshold by $q$ helps to prevent the algorithm from terminating too early in situations where the current estimate is sparse.

Although no guarantees are made about the convergence of Algorithm S\ref{alg:CoordDesc}, results in \cite{Patrascu2015} and \cite{Wright2015} suggest that convergence towards a local maximum is guaranteed and give reasonable assurance of convergence towards the global maximum. Their results focus on a coordinate descent algorithm that cycles randomly through the indices \textit{with} replacement and so are not directly applicable to Algorithm S\ref{alg:CoordDesc}.  However, we prefer cycling through the indices without replacement since this provides a more simple and clear stopping rule for the algorithm.  Algorithm S\ref{alg:CoordDesc} assesses the convergence after updating each entry of $\Delta$ exactly once, so that the stopping rule at the end of each iteration is made on the same grounds. For an algorithm which selects indices with replacement it is less clear when to enact the stopping rule.

As a final note about Algorithm S\ref{alg:CoordDesc}, Step 2 maximising (\ref{eq:PenLikeFn}) with respect to $\Delta_{ij},\theta_{ii},\theta_{jj}$ whilst all other variables are held fixed is non-trivial due to the non-smoothness of the objective function.  The remainder of this section will focus on solving this maximisation problem.  To ease notation let $x=\Delta_{ij}$, $y_{1}=\sqrt{\theta_{ii}}$ and $y_{2}=\sqrt{\theta_{jj}}$.  The objective function is
\begin{dmath*}
f(x,y_{1},y_{2}) = \log( ax^2 + bx + c ) + 2c_{n}( \log(y_{1}) + \log(y_{2}) ) - y_{1}^{2} - y_{2}^{2} - 2c_{12}xy_{1}y_{2} - 2c_{1}y_{1} - 2c_{2}y_{2} - 2\rho|x|,
\end{dmath*}
where
$$ c_{n} = 1 - \frac{4}{n}, $$
$$ c_{12} = S_{ij}, $$
$$ c_{1} = \sum_{k \neq i,j} S_{ik} \Delta_{ik} \sqrt{ \theta_{kk} }, $$
$$ c_{2} = \sum_{k \neq i,j} S_{jk} \Delta_{jk} \sqrt{ \theta_{kk} }. $$
The $\log( ax^2 + bx + c )$ term comes from the $\log \det(\Delta)$, since the determinant of a symmetric matrix is quadratic in the off-diagonal entries.  
The coefficients $(a,b,c)$ do not have a simple closed-form, as they depend on the matrix determinant, but they can be easily obtained by evaluating the determinant of $\Delta$ for three different values of $\Delta_{ij}$ (faster methods for computing these determinants are possible since they only involve changing a single entry) and solving the resulting system of equations.  The range of values that $x$ is can take given by
$$ (l,u) := \{ x : ax^2 + bx + c > 0 \} \cap (-1,1). $$
Any value of $x$ in this set ensures positive definiteness of $\Delta$.  This is because $\Delta$ is positive definite if and only if all its leading principal minors are positive.  WLOG, letting $\Delta_{ij}$ be in the bottom row of $\Delta$, if the previous estimate is positive definite then the first $p-1$ leading principal minors are positive. The condition $ax^2 + bx + c > 0$ ensures that the final leading principal minor, $\det(\Delta)$, is also positive.
The maximisation problem can then be expressed as
\begin{equation}
\begin{aligned}
\max_{x,y_{1},y_{2}} \quad & f(x,y_{1},y_{2})\\
\textrm{s.t.} \quad & x \in (l,u) \\
  & y_{1}, y_{2} > 0    \\
\end{aligned}
\end{equation}

We denote the partial derivatives of $f$ by
$$ f_{x}(x,y_{1},y_{2}) = \frac{ 2ax + b }{ ax^2 + bx + c } - 2c_{12}y_{1}y_{2} - 2\rho\mathrm{sign}(x), \quad x\neq 0, $$
$$ f_{y_{1}}(x,y_{1},y_{2}) = 2c_{n}y_{1}^{-1} - 2y_{1} - 2c_{12}xy_{2} - 2c_{1}, $$
$$ f_{y_{2}}(x,y_{1},y_{2}) = 2c_{n}y_{2}^{-1} - 2y_{2} - 2c_{12}xy_{1} - 2c_{2}, $$

To solve this problem we consider separately the cases $c>0$ and $c \leq 0$.

\subsection*{Case $c>0$.}

We begin by looking at the case $c>0$, which implies that $0 \in (l,u)$.  We split the problem into three sections, finding local maxima in $x=0$, $x\in(0,u)$, $x\in(l,0)$ separately and then selecting from these the global maximum.

\subsubsection*{Optimization for $x=0$.}
Let $x=0$.  By setting $f_{y_{1}}(x,y_{1},y_{2}) = 0$ and $f_{y_{2}}(x,y_{1},y_{2}) = 0$ we get that the optimal values of $(y_1,y_2)$ are
$$ y_{1} = \frac{1}{2} \left( \sqrt{ c_{1}^{2} + 4c_{n} } - c_{1} \right), $$
$$ y_{2} = \frac{1}{2} \left( \sqrt{ c_{2}^{2} + 4c_{n} } - c_{2} \right). $$

\subsubsection*{Optimization over $x>0$.}
Let $x \in (0,u)$.  Setting $f_{y_{1}}(x,y_{1},y_{2})=0$ gives
\begin{equation}\label{eq:x}
x = \frac{ c_{n} y_{1}^{-1} - y_{1} - c_{1} }{ c_{12} y_{2} },
\end{equation}
and setting $f_{y_{2}}(x,y_{1},y_{2})=0$ along with (\ref{eq:x}) gives
\begin{equation}\label{eq:y2}
y_{2} = \frac{1}{2}\left( - c_{2} \pm \sqrt{ c_{2}^{2} + 4( y_{1}^{2} + c_{1} y_{1} } ) \right).
\end{equation}
Using (\ref{eq:x})-(\ref{eq:y2}) one can write $f_{x}(x,y_{1},y_{2})$ in terms of only $y_{1}$ and solve $f_{x}(x,y_{1},y_{2})=0$ numerically to obtain the stationary points.  The range of $y_{1}$ values to search in the numerical solving of $f_{x}(x,y_{1},y_{2})=0$ can be found by considering the constraints $x \in (0,u)$, $y_{1},y_{2}>0$ as well as (\ref{eq:x}) and (\ref{eq:y2}).

The constraint $x<u$ results in some condition on the following quartic which we refer to as $q(y_1)$
\begin{align}\label{eq:quartic}
\left( 1 - \frac{ 1 }{ u^2 c_{12}^2 } \right) y_1^4 + 
\left( c_1 - \frac{ 2c_1 }{ u^2 c_{12}^2 } + \frac{ c_2 }{ u c_{12} } \right) y_1^3 +
\left( \frac{ 2c_n }{ u^2 c_{12}^2 } - \frac{ c_1^2 }{ u^2 c_{12}^2 } + \frac{ c_1 c_2 }{ u c_{12} } \right) y_1^2 +
\left( \frac{ 2 c_1 c_n }{ u^2 c_{12}^2 } - \frac{ c_2 c_n }{ u c_{12} } \right) y_1 -
\frac{ c_n^2 }{ u^2 c_{12}^2 }
\end{align}

We first summarize the range of $y_1$ values that needs to be considered, depending on the values of $(c_{12},c_2)$, and subsequently outline their derivation.  If the positive root is taken in (\ref{eq:y2}) for $y_2$ then the following constraints are required
\begin{enumerate}
\item $y_{1} < \frac{1}{2}\left( - c_{1} + \sqrt{ c_{1}^{2} + 4 c_{n} } \right) \mbox{, if } c_{12}>0$

\item $y_{1} > \frac{1}{2}\left( - c_{1} + \sqrt{ c_{1}^{2} + 4 c_{n} } \right) \mbox{, if } c_{12}<0$

\item $y_{1} \geq \frac{1}{2}\left(-c_1 + \sqrt{ c_{1}^{2} - c_{2}^{2} } \right)
\mbox{ or }
y_{1} \leq \frac{1}{2}\left(-c_1 - \sqrt{ c_{1}^{2} - c_{2}^{2} } \right)$

\item $y_1 > -c_1 \mbox{, if } c_2 > 0$

\item If $c_{12}>0$, either $y_1 > \frac{1}{2} \left( \frac{1}{2} u c_{12} c_2 - c_1 + \sqrt{ \left( c_1 - \frac{1}{2} u c_{12} c_2 \right)^2 + 4c_n } \right)$ or $q(y_1)>0$

\item If $c_{12}<0$, either $y_1 < \frac{1}{2} \left( \frac{1}{2} u c_{12} c_2 - c_1 + \sqrt{ \left( c_1 - \frac{1}{2} u c_{12} c_2 \right)^2 + 4c_n } \right)$ or $q(y_1)<0$
\label{eq:range_y1_cda}
\end{enumerate}

The negative root in (\ref{eq:y2}) must only be considered if $c_2<0$ and $y_1<-c_1$ (also implying that $c_1<0$ and, from constraint 1, $c_{12}>0$).  In this case the inequalities in constraints 5 and 6 must be reversed.

We outline how to obtain the above constraints. The constraint $x > 0$ along with (\ref{eq:x}) implies that
$$ \mathrm{sign}( y_{1}^{2} + c_{1}y_{1} - c_{n} ) = - \mathrm{sign}( c_{12} ). $$
Hence, if $c_{12} > 0$ then the range of values to consider can be restricted to
$$ y_{1} < \frac{1}{2}\left( - c_{1} + \sqrt{ c_{1}^{2} + 4 c_{n} } \right), $$
giving constraint 1, while if $c_{12} < 0$ then the inequality is reversed giving constraint 2.  Note that if $c_{12}=0$ then the optimisation problem is simpler and so the details of this case are omitted.

For $y_{2}$ to take a real value in (\ref{eq:y2}) we must have $4y_{1}^{2} + 4c_{1}y_{1} + c_{2}^{2} \geq 0$ which implies that either
$$ y_{1} \geq \frac{1}{2}\left( \sqrt{ c_{1}^{2} - c_{2}^{2} } - c_{1} \right),$$
or
$$ y_{1} \leq \frac{1}{2}\left( - \sqrt{ c_{1}^{2} - c_{2}^{2} } - c_{1} \right).$$
giving constraint 3.

Combining the constraint $y_{2}>0$ with (\ref{eq:y2}), if $c_{2} > 0$ then we need $y_{1} \geq -c_{1}$ in order for there to be a solution for $y_{2}$, giving constraint 4.  On the other hand, if $c_{2}<0$ and $0<y_{1}<-c_{1}$ then there are two solutions for $y_{2}$ and one must consider both the positive and negative roots in (\ref{eq:y2}).  For all other situations one must only consider the positive root.

Now combining the constraint $x<u$ with (\ref{eq:x}) and (\ref{eq:y2}), one obtains the inequality
$$ \frac{2}{ u c_{12} } \left( c_n y_1^{-1} - y_1 - c_1 \right) + c_2 < \sqrt{ c_2^2 + 4( y_1^2 + c_1y_1 ) } $$
from which constraints 5 and 6 follow.

Combining each of these constraints give the range of possible values for $y_{1}$ to numerically search for a stationary point.  Once $y_{1}$ is found, (\ref{eq:y2}) and (\ref{eq:x}) give the corresponding $(x,y_2)$. Note that it is possible that there be no stationary points within $x>0$.

\subsubsection*{Optimization over $x<0$.}

Finding stationary points in the interval $x \in (l,0)$ is analogous to the case where $x \in (0,u)$, but with some sign changes and so the details are omitted.

\subsection*{Case $c \leq 0$.}

Consider the case where $c \leq 0$.  Then it is easy to see that when $b > 0$ then $(l,u) \subseteq (0,1)$, while if $b<0$ then $(l,u) \subseteq (-1,0)$.  Again, solving this is very similar to the previous case, however one must pay closer attention to the range of values $y_{1}$ may take.  In particular, when $b > 0$, (\ref{eq:x}) must still hold at stationary points, but one must restrict this in $(l,u)$ rather than $(0,u)$.  This results in two quartic constraints on $y_{1}$.  Again the details are omitted.

\begin{algorithm}[H]\label{alg:RegPath}
\SetKwInOut{Input}{Input}\SetKwInOut{Output}{Output}
\SetAlgoLined
\Input{ Sample covariance $S$, sequence of penalty parameters $0=\rho_{0}<\rho_{1}<\dots,\rho_{k}$ and optimisation convergence threshold $\epsilon$. }
\Output{ Sequence of estimates $\hat{\Theta}_0,\ldots,\hat{\Theta}_k$ corresponding to $\rho_0,\ldots,\rho_k$. }
\begin{enumerate}
\item Standardise the sample covariance $\tilde{S} =  \mathrm{diag}(S)^{-1/2} S \mathrm{diag}(S)^{-1/2}$.
\item Run Algorithm S\ref{alg:CoordDesc} on $\tilde{S}$ for $\rho=0$, with starting point $\Theta_0^{(0)}=\tilde{S}^{-1}$ (or Moore-Penrose inverse if $n<p$), and threshold $\epsilon$ to obtain an estimate $\tilde{\Theta}_0$.
\item For $i=1,\dots,k$, run Algorithm S\ref{alg:CoordDesc} on $\tilde{S}$ for penalty parameter $\rho=\rho_{i}$, with starting point $\Theta_i^{(0)}=\tilde{\Theta}_{(i-1)}$, and threshold $\epsilon$ to obtain an estimate $\tilde{\Theta}_i$.
\item Return the sequence of estimates $\Theta_i = \mathrm{diag}(S)^{-1/2} \tilde{\Theta}_i \mathrm{diag}(S)^{-1/2}$ for $i=0,1,\dots,k$.
\end{enumerate}
 \caption{PC-GLASSO regularisation path}
\end{algorithm}

\begin{algorithm}[H]\label{alg:CoordDesc}
\SetKwInOut{Input}{Input}\SetKwInOut{Output}{Output}
\SetAlgoLined
\Input{ Sample covariance $S$ with unit diagonal, penalty parameter $\rho$, start point $\Theta^{(0)}$ and optimisation convergence threshold $\epsilon$. }
\Output{ A matrix $\Theta$ providing a local maximum of (\ref{eq:PenLikeFn}) for penalty $\rho$. }
\begin{enumerate}
\item Let $\Theta^{(1)}=\Theta^{(0)}$ and decompose $\Theta^{(1)}$ to get $\theta^{(1)}$ and $\Delta^{(1)}$.
\item Cycling randomly without replacement through the set of indices $\{ (i,j) : i < j; i,j\in \{ 1,\dots,p \} \}$, let $\Delta_{ij},\theta_{ii},\theta_{jj}$ maximise
$$f( \Delta, \theta ) = \log( \det( \Delta ) ) + \left( 1 - \frac{ 4 }{ n } \right) \sum_{i} \log( \theta_{ii} ) - \mathrm{tr}\left( S \theta^{1/2} \Delta \theta^{1/2} \right) - \rho \Vert \Delta \Vert_{1},$$
subject to
$$ \Delta \in \mathcal{S}_{1}, $$
$$ \Delta_{k_{1}k_{2}} = \Delta_{k_{1}k_{2}}^{(1)},  \text{  for all } (k_{1},k_{2}) \neq (i,j), $$
$$ \theta_{ii},\theta_{jj} \geq 0, $$
$$ \theta_{kk} = \theta_{kk}^{(1)},  \text{  for all } k \neq i,j, $$
and update $\Delta_{ij}^{(1)}=\Delta_{ij}$, $\Delta_{ji}^{(1)}=\Delta_{ji}$, $\theta_{ii}^{(1)}=\theta_{ii}$, $\theta_{jj}^{(1)}=\theta_{jj}$.
\item Let $ q = \max\left\{ \frac{ 2 | \{ \Delta_{ij}^{(0)} \neq 0 : i < j \} | }{ p (p-1) }, \frac{2}{p(p-1)} \right\}$ be the proportion of non-zero off-diagonal entries.
\item If $f(\Delta^{(1)},\theta^{(1)}) - f(\Delta^{(0)},\theta^{(0)}) < q \epsilon$, set $\Delta=\Delta^{(1)}$, $\theta=\theta^{(1)}$ and return $\Theta=\theta^{1/2}\Delta\theta^{1/2}$.  Otherwise, set $\Delta^{(0)}=\Delta^{(1)}$, $\theta^{(0)}=\theta^{(1)}$ and return to Step 2.
\end{enumerate}
 \caption{Blockwise coordinate descent}
\end{algorithm}

\section{Proofs}\label{sec:Appendix-Proofs}

In this section we present the proofs for each of the results in this paper as well as some supplementary results.

\subsection{Mean squared error of logarithmic penalty}\label{subsec:Appendix-MSE}

This section addresses the claim of Section \ref{sec:PCGLASSO} related to the mean squared error of logarithmic penalties in the $p=1$ case.
Specifically, we show that amongst penalty functions of the form $c\log(x)$ for constant $c \geq 0$ on the precision, choosing $c=2$ asymptotically minimises the mean squared error of the estimate of the precision.

Suppose we have $n$ observations of $X \sim \mathrm{N}( \mu, \theta^{-1} )$ with sample variance $s$.  Note that
$$ (n - 1) \theta s \sim \chi^{2}_{n-1}, $$
and so
$$ \left( (n - 1) \theta s \right)^{-1} \sim \mathrm{Inv}-\chi^{2}_{n-1}. $$
From this we get that
$$ \mathbb{E}[ s^{-1} ] = \frac{ n - 1 }{ n - 3 } \theta,$$
$$ \mathrm{Var}( s^{-1} ) = \frac{ 2 ( n - 1 )^2 }{ ( n - 3 )^2 ( n - 5 ) } \theta^{2}. $$
Consider estimating $\theta$ via a penalised likelihood of the form
$$ l( \theta \mid s ) - c \log( \theta ).$$
This can easily be shown to be maximised at 
$$\hat{\theta} = \left( 1 - \frac{2c}{n} \right) s^{-1}.$$
It follows that
$$ \mathbb{E}[ \hat{\theta} ] = \left( 1 - \frac{2c}{n} \right) \left( \frac{ n - 1 }{ n - 3 } \right) \theta, $$
$$ \mathrm{Var}( \hat{\theta} ) = \frac{ 2 ( 1 - \frac{2c}{n} )^2 ( n - 1 )^2 }{ ( n - 3 )^2 ( n - 5 ) } \theta^{2}, $$
and so
\begin{align*}
\mathrm{MSE}( \hat{\theta} ) &= \mathrm{Var}( \hat{\theta} ) + \left( \mathbb{E}[ \hat{\theta} ] - \theta \right)^2 \\
&= \theta^2 \left( \frac{ 2 ( 1 - \frac{2c}{n} )^2 ( n - 1 )^2 }{ ( n - 3 )^2 ( n - 5 ) } + \left( \left( 1 - \frac{2c}{n} \right) \left( \frac{ n - 1 }{ n - 3 } \right) - 1 \right) ^2 \right)
\end{align*}
It can be shown that this function is minimised at $c=\frac{2n}{n-1}$.  Letting $n \rightarrow \infty$ we get that the MSE is asymptotically minimised amongst logarithmic penalties by taking $c=2$.

\subsection{Proofs for Section \ref{sec:ScaleInv}}\label{subsec:Appendix-ScaleInvProofs}

\begin{proof}[Proof of Proposition \ref{prop:RegularScaleInv}]

Let $S$ be some sample covariance matrix for which $\hat{\Theta}(S)$ is not diagonal and $D$ be some diagonal matrix with non-zero diagonal entries $d_{i}$, $i=1,\dots,p$.  Suppose that $\hat{\Theta}$ is scale invariant.  Let $\hat{\theta}_{ij} = \hat{\Theta}(S)_{ij}$ be some non-zero off-diagonal entry of $\hat{\Theta}(S)$, and $\tilde{\theta}_{ij} = \hat{\Theta}(DSD)_{ij}$ be the corresponding entry in $\hat{\Theta}(DSD)$.  By scale invariance we must have $\tilde{\theta}_{ij} = \frac{\hat{\theta}_{ij}}{d_{i}d_{j}}$.

For these to maximise their corresponding penalised likelihoods, the derivatives of the penalised likelihood function (\ref{eq:PenLike}) with respect to $\theta_{ij}$ must be equal to $0$ at $\hat{\theta}_{ij}$ and $\tilde{\theta}_{ij}$ respectively (note that the derivative exists because $Pen$ is regular and $\hat{\theta}_{ij} \neq 0, \tilde{\theta}_{ij} \neq 0$). 
Therefore
$$ (\hat{\Theta}(S)^{-1})_{ij} - 2s_{ij} - \frac{4}{n} pen_{ij}'(\hat{\theta}_{ij}) = 0, $$
\begin{align*}
(\hat{\Theta}(DSD)^{-1})_{ij} - 2d_{i}d_{j}s_{ij} - \frac{4}{n} pen_{ij}'(\tilde{\theta}_{ij}) &= d_{i}d_{j}\left((\hat{\Theta}(S)^{-1})_{ij} - 2s_{ij}\right) - \frac{4}{n} pen_{ij}'\left(\frac{\hat{\theta}_{ij}}{d_{i}d_{j}}\right) \\
&= 0,
\end{align*}
where we used that, since $\hat{\Theta}$ is scale invariant then $\hat{\Theta}(D S D)= D^{-1} \hat{\Theta}(S) D^{-1}$ and 
hence $(\hat{\Theta}(DSD)^{-1})_{ij}= (D \hat{\Theta}(S)^{-1} D)_{ij}= d_{i}d_{j} (\hat{\Theta}(S)^{-1})_{ij}$.

It follows that
\begin{align} 
pen_{ij}'\left(\frac{\hat{\theta}_{ij}}{d_{i}d_{j}}\right) = d_{i}d_{j}pen_{ij}'(\hat{\theta}_{ij}).
\label{eq:necessarycond_scaleinv}
\end{align}
That is, for scale invariance to hold
the penalty must satisfy $pen_{ij}'\left(\frac{\hat{\theta}_{ij}}{d}\right) = d pen_{ij}'(\hat{\theta}_{ij})$ for any $d \neq 0$.  
The latter requirement can only hold in two scenarios.
First, there is the trivial scenario where $pen_{ij}'(\theta_{ij}) = 0$ for all $\theta_{ij} \neq 0$, that is $pen_{ij}$ is an $L_{0}$ penalty.

Second, if $pen_{ij}'(\hat{\theta}_{ij})= k \neq 0$,
then $pen_{ij}'\left(\frac{\hat{\theta}_{ij}}{d}\right) = d k$.  Treating $\hat{\theta}_{ij}$, and therefore also $k$, as fixed, we denote by $x = \frac{\hat{\theta}_{ij}}{d}$. Then we have $pen_{ij}'\left(x\right) = \frac{\hat{\theta}_{ij} k}{x}$.  It follows that $pen_{ij}(x) = \hat{\theta}_{ij} k \log( | x | ) + c$ for some constant $c$ and $x \neq 0$, that is $pen_{ij}$ is a logarithmic penalty.

This proves that for a regular penalty to be scale invariant it must have $L_0$ or logarithmic $pen_{ij}$.  We now turn our attention to the diagonal penalty.

Let $S$ be some diagonal covariance matrix, and $D$ some diagonal matrix as before.  Let $\hat{\theta}_{ii} = \hat{\Theta}(S)_{ii}$ and $\tilde{\theta}_{ii} = \hat{\Theta}(DSD)_{ii}$.  By scale invariance we must have $\tilde{\theta}_{ij} = \frac{\hat{\theta}_{ij}}{d_{i}^2}$.

Since $S$ is diagonal, it is easy to see that both $\hat{\Theta}(S)$ and $\hat{\Theta}(DSD)$ must also be diagonal, and that $\hat{\theta}_{ii}$ maximises the function:
$$ \log( \theta_{ii} ) - S_{ii} \theta_{ii} - \frac{2}{n} pen_{ii}( \theta_{ii} ), $$
while $\tilde{\theta_{ii}}$ maximises the same function but with $S_{ii}$ replaced by $d_{i}^2 S_{ii}$.  It follows that the corresponding derivatives must both be equal to zero at $\hat{\theta}_{ii}$ and $\tilde{\theta_{ii}}$ respectively ($Pen$ is regular so $pen_{ii}$ is differentiable).  Using this along with $\tilde{\theta}_{ij} = \frac{\hat{\theta}_{ij}}{d_{i}^2}$ we obtain:
$$ pen'_{ii}\left(\frac{\hat{\theta}_{ii}}{d_{i}^{2}}\right) = d_{i}^{2} pen'_{ii}\left( \hat{\theta}_{ii} \right).$$

As before, it follows that $pen_{ii}$ must be either constant or logarithmic.  This proves that for a regular penalty function to be scale invariant it must have either constant or logarithmic penalty on the diagonal entries.

To complete the proof we must show that such penalty functions ($L_0$ or logarithmic off-diagonal penalty and constant or logarithmic diagonal penalty) are always scale invariant.  This follows from Proposition \ref{Prop:PCSepScaleInv} since the $L_0$ and logarithmic penalties are also symmetric PC-separable.

\color{black}

\end{proof}

\begin{proof}[Proof of Proposition \ref{Prop:PCSepScaleInv}]
Let $S$ be a sample covariance matrix and $D$ be a diagonal matrix with non-zero entries $d_{i}$.  Suppose that the estimate $\hat{\Theta}(S)$ decomposes as $\bar{\theta}^{1/2}\bar{\Delta}\bar{\theta}^{1/2}$ and that the estimate $\hat{\Theta}(DSD)$ decomposes as $\tilde{\theta}^{1/2}\tilde{\Delta}\tilde{\theta}^{1/2}$.  To prove scale invariance we need that $\bar{\Delta}=\mathrm{sign}(D)\tilde{\Delta}\mathrm{sign}(D)$ and $\bar{\theta}=D^{2}\tilde{\theta}$.

Since $\hat{\Theta}(S)$ maximises the penalised likelihood at $S$, $\bar{\theta},\bar{\Delta}$ must maximise
\begin{align}\label{eq:PCScaleInv1}
\log(\det(\Theta)) + \sum_{i} \left( \left( 1 - \frac{2c}{n} \right) \log( \theta_{ii} ) - s_{ii} \theta_{ii} \right) - \sum_{i \neq j} \left( s_{ij} \sqrt{\theta_{ii}\theta_{jj}}\Delta_{ij} + \frac{2}{n} pen_{ij}( \Delta_{ij} ) \right),
\end{align}
and similarly, $\tilde{\theta},\tilde{\Delta}$ must maximise
\begin{align}\label{eq:PCScaleInv2}
\log(\det(\Theta)) + \sum_{i} \left( \left( 1 - \frac{2c}{n} \right) \log( \theta_{ii} ) - d_{i}^{2} s_{ii} \theta_{ii} \right) - \sum_{i \neq j} \left( d_{i}d_{j}s_{ij} \sqrt{\theta_{ii}\theta_{jj}}\Delta_{ij} + \frac{2}{n} pen_{ij}( \Delta_{ij} ) \right).
\end{align}
By substituting $\theta'_{ii} = d_{i}^{2} \theta_{ii}$ and $\Delta'_{ij} = \mathrm{sign}(d_{i}d_{j})\Delta_{ij}$ into (\ref{eq:PCScaleInv2}), and noting that $pen_{ij}$ is symmetric about $0$, we get
\begin{align}\label{eq:PCScaleInv3}
\log(\det(\Theta)) + \sum_{i} \left( \left( 1 - \frac{2c}{n} \right) \left( \log( \theta'_{ii} ) - \log( d_{i}^{2} ) \right) - s_{ii} \theta'_{ii} \right) - \sum_{i \neq j} \left( s_{ij} \sqrt{\theta'_{ii}\theta'_{jj}}\Delta'_{ij} + \frac{2}{n} pen_{ij}( \Delta'_{ij} ) \right).
\end{align}
Since $\log(d_{i}^{2})$ is a constant, (\ref{eq:PCScaleInv3}) is of the same form as (\ref{eq:PCScaleInv1}) and they are maximised at the same point.  Hence we have that $\bar{\Delta}=\mathrm{sign}(D)\tilde{\Delta}\mathrm{sign}(D)$ and $\bar{\theta}=D^{2}\tilde{\theta}$.

\end{proof}

\begin{proof}[Proof of Proposition \ref{prop:PriorScaleInv}]

Let $\pi$ be a prior density as given in Proposition \ref{prop:PriorScaleInv}, $S$ be some sample covariance and $D$ some diagonal matrix with non-zero entries.  Writing $L(\Theta \mid S)$ as the likelihood function, $\Theta=\theta^{1/2}\Delta\theta^{1/2}$ and treating $D$ as a constant, the posteriors given $S$ and $DSD$ are
\begin{align}\label{eq:prior1}
\pi( D \Theta D \mid S ) &\propto L( D \Theta D \mid S ) \pi( D \Theta D )  \nonumber \\
&\propto \det(\Delta)^{n/2} \prod_{i} (d_i^2 \theta_{ii})^{\frac{n}{2}} \exp\left( -\frac{n}{2} \sum_{i,j} S_{ij} \sqrt{ d_i^2 \theta_{ii} d_j^2 \theta_{jj} } \Delta_{ij} \right)\prod_{i} (d_i^2 \theta_{ii})^{-c} \prod_{ij} \pi_{ij}(\Delta_{ij}) \mathbb{I}( \Delta \in \mathcal{S}_1 ) \nonumber \\
&= \det(\Delta)^{n/2} \prod_{i} (d_i^2 \theta_{ii})^{\frac{n}{2}-c} \exp\left( -\frac{n}{2} \sum_{i,j} d_id_jS_{ij} \sqrt{ \theta_{ii}  \theta_{jj} } \Delta_{ij} \right) \prod_{ij} \pi_{ij}(\Delta_{ij}) \mathbb{I}( \Delta \in \mathcal{S}_1 )
\end{align}
\begin{align}\label{eq:prior2}
\pi( \Theta \mid DSD ) &\propto L( \Theta \mid DSD ) \pi( \Theta )  \nonumber \\
&\propto \det(\Delta)^{n/2} \prod_{i} \theta_{ii}^{\frac{n}{2}} \exp\left( -\frac{n}{2} \sum_{i,j} d_id_jS_{ij} \sqrt{ \theta_{ii} \theta_{jj} } \Delta_{ij} \right)\prod_{i} ( \theta_{ii})^{-c} \prod_{ij} \pi_{ij}(\Delta_{ij}) \mathbb{I}( \Delta \in \mathcal{S}_1 ) \nonumber \\
&= \det(\Delta)^{n/2} \prod_{i} \theta_{ii}^{\frac{n}{2}-c} \exp\left( -\frac{n}{2} \sum_{i,j} d_id_jS_{ij} \sqrt{ \theta_{ii} \theta_{jj} } \Delta_{ij} \right) \prod_{ij} \pi_{ij}(\Delta_{ij}) \mathbb{I}( \Delta \in \mathcal{S}_1 )
\end{align}

For any measurable set $A$ and $A_D = \{ \Theta : D^{-1} \Theta D^{-1} \in A \}$ the probabilities in Definition \ref{def:PriorScaleInv} can be written as
\begin{align*}
\mathbb{P}_{\pi}\left( \Theta \in A \mid DSD \right) &= \int_{A} \pi\left( \Theta \mid DSD \right) \,d \Theta \\
&= \frac{ \int_A L( \Theta \mid DSD ) \pi( \Theta ) \,d\Theta }{ \int_{\mathcal{S}} L( \Theta \mid DSD ) \pi( \Theta ) \,d\Theta }
\end{align*}
and, noting that $\Theta \in A \iff D \Theta D \in A_D$,
\begin{align*}
\mathbb{P}_{\pi}\left( \Theta \in A_D \mid S \right) &= \int_{A_D} \pi\left( \Theta \mid S \right) \,d \Theta \\
&= \int_{A} \pi\left( D \Theta D \mid S \right) \,d \Theta \\
&= \frac{ \int_A L( D \Theta D \mid S ) \pi( D \Theta D ) \,d\Theta }{ \int_{\mathcal{S}} L( D \Theta D \mid S ) \pi( D \Theta D ) \,d\Theta }
\end{align*}

The result follows by noting that expression (\ref{eq:prior1}) can be obtained by multiplying (\ref{eq:prior2}) by the constant $\prod_{i} (d_i^2)^{\frac{n}{2}-c}$.

\end{proof}

\subsection{Supplementary results for Section \ref{sec:EstEquiv}}\label{subsec:Appendix-EstEqu}

Suppose the value of an estimator $\hat{\theta}=\mbox{diag}(\hat{\Theta})$ and all the entries in $\hat{\Delta}$ are given, except for a pair of partial correlations $(\Delta_{k_1k_2}, \Delta_{k_1k_3})$, for some indexes $k_1,k_2,k_3 \in \{1,\ldots,p\}$.
Suppose that $S$, and the given elements in $\hat{\Delta}$ and $\hat{\theta}$ satisfy the following conditions:

\begin{enumerate}[leftmargin=*, label=(C\arabic*)]
\item $S_{k_1 k_2} / \hat{\theta}_{k_2 k_2}^{-1/2}= S_{k_1 k_3} / \hat{\theta}_{k_3 k_3}^{-1/2}$.
\item $\hat{\Delta}_{k_2 j}=\hat{\Delta}_{k_3 j}$ for all $j \not\in \{k_1,k_2,k_3\}$.
\end{enumerate}

\begin{myprop}\label{prop:pcinvariance_mle}
Under conditions (C1)-(C2) the likelihood function is symmetric in $(\Delta_{k_1k_2}, \Delta_{k_1k_3})$.
\end{myprop}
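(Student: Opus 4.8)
The plan is to write the log-likelihood $l(\Theta\mid S)$ of \eqref{eq:PenLike} in the parameterisation $\Theta=\theta^{1/2}\Delta\theta^{1/2}$, fix $\hat\theta$ and every entry of $\hat\Delta$ other than $\Delta_{k_1k_2}$ and $\Delta_{k_1k_3}$ at their given values, and show that the resulting function of $(\Delta_{k_1k_2},\Delta_{k_1k_3})$ is invariant under interchanging its two arguments. Since $L(\Theta\mid S)=\exp\{l(\Theta\mid S)\}$, symmetry of $l$ yields symmetry of the likelihood. Only two terms of $l$ depend on $\Delta$, namely $\log\det(\Delta)$ and $\mathrm{tr}(S\theta^{1/2}\Delta\theta^{1/2})$, so I would treat these separately.

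For the trace term, expanding gives $\mathrm{tr}(S\theta^{1/2}\Delta\theta^{1/2})=\sum_{a,b}S_{ab}\,\theta_{aa}^{1/2}\theta_{bb}^{1/2}\,\Delta_{ab}$, and since $S$ and $\Delta$ are symmetric the coefficient of $\Delta_{k_1k_2}$ is $2\hat\theta_{k_1k_1}^{1/2}\bigl(S_{k_1k_2}\hat\theta_{k_2k_2}^{1/2}\bigr)$ while that of $\Delta_{k_1k_3}$ is $2\hat\theta_{k_1k_1}^{1/2}\bigl(S_{k_1k_3}\hat\theta_{k_3k_3}^{1/2}\bigr)$; every other summand involves neither of these two entries and so is unchanged by the swap. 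Condition (C1) reads precisely $S_{k_1k_2}\hat\theta_{k_2k_2}^{1/2}=S_{k_1k_3}\hat\theta_{k_3k_3}^{1/2}$, so the two coefficients coincide, the trace term depends on $(\Delta_{k_1k_2},\Delta_{k_1k_3})$ only through their sum, and it is therefore symmetric in them.

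For the determinant term I would use the permutation matrix $P$ of the transposition $\tau=(k_2\;k_3)$, which is orthogonal and satisfies $\det(P^{\top}\Delta P)=\det(\Delta)$ and $P^{\top}\Delta P\succ0\iff\Delta\succ0$ (so the argument stays within the admissible set $\mathcal{S}_1$). Writing $(P^{\top}\Delta P)_{ab}=\Delta_{\tau(a)\tau(b)}$, the key claim is that, under (C2), $P^{\top}\Delta P$ coincides entrywise with the matrix obtained from $\Delta$ by interchanging $\Delta_{k_1k_2}$ with $\Delta_{k_1k_3}$ (and $\Delta_{k_2k_1}$ with $\Delta_{k_3k_1}$) and leaving all other entries fixed. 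This is checked case by case on the entries: the unit diagonal is preserved; rows and columns with both indices outside $\{k_1,k_2,k_3\}$ are untouched; entries linking $k_2$ or $k_3$ to a variable $l\notin\{k_1,k_2,k_3\}$ are permuted by $\tau$ but are equal by (C2), hence effectively unchanged; the $(k_2,k_3)$ entry is symmetric; and the four entries coupling $k_1$ to $\{k_2,k_3\}$ are exactly the ones interchanged. Hence $\log\det(\Delta)$, as a function of the two free entries, is also symmetric.

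Combining the two pieces, $l(\Theta\mid S)$ is the sum of a function symmetric in $(\Delta_{k_1k_2},\Delta_{k_1k_3})$ and terms that do not involve them, so it is symmetric, and the claim follows. I expect the determinant step to be the only non-routine part: one must verify that conjugating $\Delta$ by $P$ produces no net change in any entry except the two being interchanged, and this is exactly where (C2) is used (to handle the off-pattern rows and columns); the rest of the argument is bookkeeping.
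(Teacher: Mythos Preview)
Your proof is correct and, for the determinant step, takes a genuinely different route from the paper. The trace argument is essentially identical. For $\log\det(\Delta)$, the paper isolates row and column $k_1$ via the Schur complement, writing $\det(\Delta)=\bigl(1-\Delta_{k_1,-k_1}\,\hat{\Delta}_{-k_1,-k_1}^{-1}\,\Delta_{-k_1,k_1}\bigr)\det(\hat{\Delta}_{-k_1,-k_1})$, and then argues that under (C2) the inverse $A=\hat{\Delta}_{-k_1,-k_1}^{-1}$ has its $k_2$- and $k_3$-rows equal up to swapping their first two entries, so the quadratic form is symmetric in $(\Delta_{k_1k_2},\Delta_{k_1k_3})$. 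Your permutation argument is shorter and avoids inverting the fixed block altogether: conjugating by the transposition $P$ is a one-line reason why the determinant is unchanged, and the only real work is the entrywise check that (C2) makes $P^{\top}\Delta P$ agree with $\Delta$ on every fixed entry. The paper's route has the minor advantage of exhibiting explicitly how $\det(\Delta)$ depends on the two free entries (a quadratic form), which could be useful if one wanted to say more about the conditional maximiser; your route is cleaner for the bare symmetry claim.
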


\begin{proof}

Without loss of generality suppose that the variable indexes are $k_1=1$, $k_2=2$ and $k_3=3$.  The MLE maximises the function
$$ \log( \det( \Theta ) ) - \mathrm{tr}( S \Theta ) = \log( \det( \theta^{1/2}\Delta\theta^{1/2} ) ) - \mathrm{tr}( S \theta^{1/2}\Delta\theta^{1/2} ). $$
Consider this as a function $h(\Delta_{12},\Delta_{13})$ that only depends on $(\Delta_{12},\Delta_{13})$, given a value of the remaining parameters $\hat{\theta}$ and $\hat{\Delta}_{ij}$ for $(i,j) \not\in \{(1,2),(1,3)\}$ satisfying (C1)-(C2).

We shall show that the two terms $\log\det(\Theta)$ and $\mbox{tr}(S\Theta)$ are symmetric in $(\Delta_{12},\Delta_{13})$, when (C1)-(C3) hold.
Using straightforward algebra gives that
\begin{align}
\mbox{tr}(S\Theta)= \mbox{tr}( S \theta^{1/2} \Delta \theta^{1/2} )
=
2 s_{12} \theta_{11}^{1/2} \theta_{22}^{1/2} \Delta_{12} + 2 s_{13} \theta_{11}^{1/2} \theta_{13}^{1/2} \Delta_{13} + c
    \nonumber
\end{align}
where $c$ does not depend on $(\Delta_{12},\Delta_{13})$. Plugging in $\hat{\theta}$ and $\hat{\Delta}_{ij}$ into this expresion and using (C1) gives that is it equal to
\begin{align}
2 \hat{\theta}_{11}^{1/2} s_{12} \hat{\theta}_{22}^{1/2} (\Delta_{12} + \Delta_{13}) + c,
\end{align}
which is symmetric in $(\Delta_{12},\Delta_{13})$.

Consider now $\det(\Theta)$. Using basic properties of the matrix determinant,
\begin{align}
    \det(\Theta)&= \det(\Delta) \prod_{j=1}^p \theta_{jj}
    =|\Delta_{11} - \Delta_{2:p,1} \Delta_{2:p,2:p}^{-1} \Delta_{1,2:p}| |\Delta_{2:p,2:p}| \prod_{j=1}^p \theta_{jj},
    \nonumber
\end{align}
where $\Delta_{i:j,k:l}$ is the submatrix obtained by taking rows $i,i+1,\ldots,j$ and columns $k,k+1,\ldots,l$ from $\Delta$. 
Since $\hat{\theta}$, $\hat{\Delta}_{2:p,2:p}$, and $\hat{\Delta}_{1j}$ for $j \geq 4$ are given, it suffices to show that 
\begin{align}
(\Delta_{12},\Delta_{13},\hat{\Delta}_{14},\ldots,\hat{\Delta}_{1p}) \hat{\Delta}_{2:p,2:p}^{-1} (\Delta_{12},\Delta_{13},\hat{\Delta}_{14},\ldots,\hat{\Delta}_{1p})^T
\label{eq:detDeltainv_term1}
\end{align}
is symmetric in $(\Delta_{12},\Delta_{13})$.
To ease notation let $A=\hat{\Delta}_{2:p,2:p}^{-1}$.
Note that under Condition (C2),
\begin{align}
    \hat{\Delta}_{2:p,2:p}= \begin{pmatrix}
1 & \hat{\Delta}_{23} & \hat{\Delta}_{24} & \ldots & \hat{\Delta}_{2p} \\
\hat{\Delta}_{23} & 1 & \hat{\Delta}_{24} & \ldots & \hat{\Delta}_{2p} \\
\ldots
\hat{\Delta}_{2p} & \hat{\Delta}_{2p} & \hat{\Delta}_{4p} & \ldots & 1
\end{pmatrix}    
\nonumber
\end{align}
and hence
\begin{align}
    \hat{\Delta}_{2:p,2:p}^{-1}= A= \begin{pmatrix}
a_{11} & a_{12} & a_{13} & \ldots & a_{1p-1} \\
a_{12} & a_{11} & a_{13} & \ldots & a_{1p-1} \\
a_{13} & a_{23} & a_{33} & \ldots & a_{3p-1} \\
\ldots
a_{1 p-1} & a_{2 p-1} & a_{3 p-1} & \ldots & a_{p-1 p-3}
\end{pmatrix}.
\nonumber
\end{align}
That is, the first two rows in $A$ are equal, up to permuting the first two elements in each row.
Therefore, \eqref{eq:detDeltainv_term1} is equal to
\begin{align}
    a_{11} \Delta_{12}^2 + a_{11} \Delta_{13}^2 + \sum_{j=3}^{p-1} a_{jj} \hat{\Delta}_{j+1 j+1}^2
    \nonumber \\
    + 2 a_{12} \Delta_{12} \Delta_{13} + 2 \sum_{j=3}^{p-1} a_{1j} \Delta_{12} \hat{\Delta}_{1 j+1}
    + 2 \sum_{j=3}^{p-1} a_{1j} \Delta_{13} \hat{\Delta}_{1 j+1}
    + 2 \sum_{j=3}^p \sum_{k=j+1}^{p} a_{jk} \hat{\Delta}_{j+1 k+1}
    \nonumber \\
    =a_{11} (\Delta_{12}^2 + \Delta_{13}^2)
    + 2 a_{12} \Delta_{12} \Delta_{13} 
    + 2 (\Delta_{12} + \Delta_{13}) \sum_{j=3}^{p-1} a_{1j} \hat{\Delta}_{1 j+1}
    + c',
\nonumber
\end{align}
where $c'$ does not depend on $(\Delta_{12},\Delta_{13})$, which is a symmetric function in $(\Delta_{12},\Delta_{13})$, as we wished to prove.

\end{proof}

Note that because the log-likelihood is a convex function, and therefore has a unique maximum, symmetry in $(\Delta_{k_1k_2}, \Delta_{k_1k_3})$ implies that the MLE will estimate these two partial correlations to be equal.

\begin{mycor}
Under conditions (C1)-(C2) any penalised likelihood with a symmetric PC-separable penalty is symmetric in $(\Delta_{k_1k_2}, \Delta_{k_1k_3})$.
\end{mycor}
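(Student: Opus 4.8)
The plan is to reduce the corollary directly to Proposition~\ref{prop:pcinvariance_mle}. As in the proof of that proposition, take $k_1=1$, $k_2=2$, $k_3=3$ without loss of generality, fix $\hat\theta$ and all entries of $\hat\Delta$ other than $(\Delta_{12},\Delta_{13})$ at values satisfying (C1)--(C2), and regard the penalised likelihood \eqref{eq:PenLikePCs} as a function $g(\Delta_{12},\Delta_{13})$ of the two remaining coordinates. I would show that both the log-likelihood term and the penalty term are individually invariant under the swap $(\Delta_{12},\Delta_{13})\mapsto(\Delta_{13},\Delta_{12})$; their sum is then symmetric, which is exactly the claim.

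For the log-likelihood term $l(\Theta\mid S)=\tfrac{n}{2}\left[\log\det(\Delta)+\sum_i\log\theta_{ii}-\mathrm{tr}(S\theta^{1/2}\Delta\theta^{1/2})\right]$ there is nothing new to do: the term $\sum_i\log\theta_{ii}$ does not involve $\Delta$, and Proposition~\ref{prop:pcinvariance_mle} already establishes that $\log\det(\Delta)-\mathrm{tr}(S\theta^{1/2}\Delta\theta^{1/2})$ is symmetric in $(\Delta_{12},\Delta_{13})$ under (C1)--(C2). For the penalty term, write a symmetric PC-separable penalty (Definition~\ref{def:pc_separable}) as $\sum_i pen_{ii}(\theta_{ii})+\sum_{i<j}pen(\Delta_{ij})$, where $pen$ denotes the common off-diagonal penalty function. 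The diagonal terms do not depend on $\Delta$, and among the off-diagonal terms only $pen(\Delta_{12})$ and $pen(\Delta_{13})$ vary with the two free coordinates (all other off-diagonal entries are held fixed), so the penalty equals $pen(\Delta_{12})+pen(\Delta_{13})+\text{const}$, which is manifestly invariant under interchanging its two arguments because the \emph{same} function $pen$ is applied to each.

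The only point that needs a word of care — and where I would expect a referee to look — is that the feasible region $\{\Delta\in\mathcal{S}_1\}$, expressed in the two free coordinates, is itself symmetric under the swap; otherwise the phrase ``symmetric function'' would be ill-posed. This follows by conjugating $\Delta$ by the permutation matrix interchanging indices $2$ and $3$: under (C2) this operation sends a feasible $\Delta$ to the feasible matrix obtained by interchanging $\Delta_{12}$ and $\Delta_{13}$ and leaving all other entries unchanged, and conjugation preserves positive definiteness. Beyond this bookkeeping there is no real obstacle: the substantive work is carried by Proposition~\ref{prop:pcinvariance_mle}, and it is worth remarking that symmetry of $pen$ \emph{about zero} plays no role here — that property would only be needed to track sign changes under rescalings by diagonal matrices with negative entries — all that is used is that $pen$ does not depend on the index pair.
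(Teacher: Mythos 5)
Your proposal is correct and follows essentially the same route as the paper: invoke Proposition~\ref{prop:pcinvariance_mle} for the likelihood part and observe that a symmetric PC-separable penalty contributes $pen(\Delta_{12})+pen(\Delta_{13})$ plus terms not involving the two free coordinates, which is invariant under the swap. Your additional check that the feasibility constraint $\Delta\in\mathcal{S}_{1}$ is itself swap-symmetric under (C2), via conjugation by the permutation matrix exchanging indices $2$ and $3$, is a correct piece of bookkeeping the paper leaves implicit.
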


\begin{proof}

The proof follows immediately from the proof of Proposition \ref{prop:pcinvariance_mle}, noting that $Pen(\theta,\Delta) = \sum_{i} pen_{ii}(\theta_{ii}) + \sum_{i \neq j} \text{pen}(\Delta_{ij})$ is symmetric in $(\Delta_{12},\Delta_{13})$.

\end{proof}

\begin{mycor}
Under conditions (C1)-(C2) a penalised likelihood with a regular penalty, other than the $L_0$ or logarithmic, is symmetric in $(\Delta_{k_1k_2}, \Delta_{k_1k_3})$ if and only if $\hat{\theta}_{k_2k_2}=\hat{\theta}_{k_3k_3}$.
\end{mycor}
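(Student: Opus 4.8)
The plan is to mimic the proof of Proposition \ref{prop:pcinvariance_mle} and reduce the claim to a functional equation for the common off-diagonal penalty $pen$. Fix $k_1=1$, $k_2=2$, $k_3=3$ without loss of generality, hold all entries of $\hat\theta$ and of $\hat\Delta$ other than $(\Delta_{12},\Delta_{13})$ fixed at values satisfying (C1)--(C2), and write $\Theta=\hat\theta^{1/2}\Delta\hat\theta^{1/2}$, so that $\theta_{12}=\alpha\Delta_{12}$ and $\theta_{13}=\alpha'\Delta_{13}$ with $\alpha:=\sqrt{\hat\theta_{11}\hat\theta_{22}}$ and $\alpha':=\sqrt{\hat\theta_{11}\hat\theta_{33}}$. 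For a regular penalty the diagonal terms of $Pen(\Theta)$ are now constants and a single function $pen$ is used off-diagonally, so the only part of the penalised likelihood depending on $(\Delta_{12},\Delta_{13})$ besides $\log\det\Theta-\mathrm{tr}(S\Theta)$ is $-\tfrac{2}{n}\big(pen(\alpha\Delta_{12})+pen(\alpha'\Delta_{13})\big)$. By Proposition \ref{prop:pcinvariance_mle} the term $\log\det\Theta-\mathrm{tr}(S\Theta)$ is already symmetric in $(\Delta_{12},\Delta_{13})$ under (C1)--(C2); hence the penalised likelihood is symmetric in $(\Delta_{12},\Delta_{13})$ if and only if $(x,y)\mapsto pen(\alpha x)+pen(\alpha' y)$ is, which (after rearranging) happens if and only if $g(t):=pen(\alpha t)-pen(\alpha' t)$ is constant over the admissible range of $\Delta_{12}$.

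The ``if'' direction is then immediate: $\hat\theta_{22}=\hat\theta_{33}$ gives $\alpha=\alpha'$, so $g\equiv 0$ and symmetry holds (paralleling the preceding corollary). For ``only if'', assume symmetry, so $g(t)\equiv k$ for a constant $k$. Since $pen$ is a regular penalty other than the logarithmic one, it is real-valued and non-decreasing in $|t|$, hence bounded near $0$ with a finite limit there; letting $t\to0$ in $g(t)\equiv k$ forces $k=0$, so $pen(\alpha t)=pen(\alpha' t)$ for all $t$. Suppose $\alpha\neq\alpha'$, say $r:=\alpha/\alpha'>1$. Then $pen(rt)=pen(t)$, and iterating, $pen(r^m t)=pen(t)$ for all integers $m$; for $0<a<b$, choosing $m$ with $r^m a>b$ and using monotonicity of $pen$ in $|t|$ gives $pen(a)=pen(r^m a)\geq pen(b)\geq pen(a)$, so $pen$ is constant on $(0,\infty)$ and, by symmetry about $0$, constant away from $0$ --- i.e.\ $pen$ is an $L_0$ penalty up to an additive constant, contradicting the hypothesis. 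Hence $\alpha=\alpha'$, and since $\hat\theta_{11}>0$ this gives $\hat\theta_{k_2k_2}=\hat\theta_{k_3k_3}$.

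The two excluded penalties are exactly the obstructions to the two steps: the logarithmic penalty has $pen(\alpha t)-pen(\alpha' t)=\rho\log(\alpha/\alpha')$, a nonzero constant when $\alpha\neq\alpha'$ (equivalently, the limit at $0$ argument fails because $pen(t)\to-\infty$), while the $L_0$ penalty is the fixed point of the rescaling relation $pen(rt)=pen(t)$; so both genuinely admit symmetric inference without $\hat\theta_{k_2k_2}=\hat\theta_{k_3k_3}$. The step I expect to need the most care is justifying $k=0$ and the rescaling argument, because the admissible region for $(\Delta_{12},\Delta_{13})$ is only an open convex symmetric neighbourhood cut out by positive definiteness of $\Delta$, not all of $\mathbb{R}^2$; a clean route is to note that this region meets the diagonal $\{\Delta_{12}=\Delta_{13}\}$, so symmetry first yields $g$ constant on an interval, and then to differentiate (a regular penalty is differentiable away from $0$) to obtain $\alpha\,pen'(\alpha t)=\alpha'\,pen'(\alpha' t)$ there, whence the same trichotomy --- $pen'$ identically zero ($L_0$), $pen'(s)\propto 1/s$ (logarithmic), or $\alpha=\alpha'$ --- applies.
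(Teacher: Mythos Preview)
Your reduction via Proposition \ref{prop:pcinvariance_mle} to the symmetry of $pen(\alpha x)+pen(\alpha' y)$ is exactly the paper's route; the paper then simply asserts in one line that this forces $\hat\theta_{k_2k_2}=\hat\theta_{k_3k_3}$ unless $pen$ is $L_0$ or logarithmic, without supplying the functional-equation analysis you give. The domain caveat you flag --- that the identity $g\equiv\text{const}$ is only available on the open positive-definiteness region rather than on all of $\mathbb{R}^2$, so the rescaling/iteration step needs care --- is genuine and is equally unaddressed in the paper's proof.
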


\begin{proof}

From Proposition \ref{prop:pcinvariance_mle} the penalised likelihood is symmetric if and only if $pen_{k_1k_2}\left( \sqrt{\hat{\theta}_{k_1k_1} \hat{\theta}_{k_2k_2}}\Delta_{k_1k_2}\right) + pen_{k_1k_3}\left( \sqrt{\hat{\theta}_{k_1k_1} \hat{\theta}_{k_3k_3}}\Delta_{k_1k_3}\right)$ is symmetric.  Since $Pen$ is regular, this only happens when $\hat{\theta}_{k_2k_2}=\hat{\theta}_{k_3k_3}$ or when $pen_{ij}$ is either $L_0$ or logarithmic.

\end{proof}

\subsection{Proofs for Section \ref{sec:Computation}}\label{subsec:Appendix-CompProofs}

\begin{proof}[Proof of Proposition \ref{prop:Conv1}]

For a fixed $\theta$, optimisation of the penalised likelihood function (\ref{eq:PenLikeFn}) is equivalent to optimisation of the following function
$$ \log(\det(\Delta)) - \sum_{i \neq j} S_{ij} \sqrt{\theta_{ii}\theta_{jj}} \Delta_{ij} - \rho \sum_{i \neq j} \vert \Delta_{ij} \vert. $$
The log-determinant function is known to be concave over the space of positive definite matrices.  For fixed $\theta$ the second term is simply a sum of linear functions.  The third term is simply a sum of clearly concave functions.  Hence the objective function is a sum of concave functions and is therefore concave.

\end{proof}

\newpage
\section{Simulation results}\label{sec:Appendix-Results}

\vspace{10pt}
\begin{table}[ht]
\begin{tabular}{llllll}
\multicolumn{1}{c}{$n=30$} & \multicolumn{1}{c}{FNorm} & \multicolumn{1}{c}{KL} & \multicolumn{1}{c}{MCC} & \multicolumn{1}{c}{Sensitivity} & \multicolumn{1}{c}{Specificity} \\
PC-GLASSO             &     1.42 (0.35)                      &       1.69 (0.58)                 &   0.978 (0.043)                      &     0.999 (0.008)                            &   0.995 (0.010)                              \\
%GLASSO no diag pen             & 1.19 (0.17)               & 1.60 (0.51)            & 0.884 (0.080)           & 0.999 (0.005)                       & 0.970 (0.024)               \\    
%GLASSO no diag pen             & 2.59 (0.36)               & 3.36 (1.42)            & 0.270 (0.044)           & 0.866 (0.105)                   & 0.578 (0.057)                   \\
%GLASSO 3             & 1.49 (0.28)               & 2.01 (0.69)            & 0.789 (0.127)           & 0.996 (0.016)                   & 0.934 (0.054)                   \\
%GLASSO   & 1.75 (0.23)               & 2.34 (0.69)            & 0.677 (0.109)           & 0.998 (0.009)                   & 0.884 (0.062)                   \\
GLASSO   & 2.68 (0.73)               & 3.55 (1.19)            & 0.231 (0.063)           & 0.903 (0.066)                   & 0.477 (0.075)                   \\
%GLASSO diagonals 3   & 1.93 (0.29)               & 2.57 (0.56)            & 0.569 (0.141)           & 0.988 (0.027)                   & 0.810 (0.107)                   \\
%SCAD                & 5.04 (2.97)               & 6.32 (3.97)            & 0.507 (0.179)           & 0.864 (0.110)                   & 0.825 (0.094)                   \\
SCAD               & 8.07 (3.78)               & 10.87 (4.76)           & 0.344 (0.136)           & 0.738 (0.143)                   & 0.764 (0.079)                   \\
%SCAD 3               & 2.12 (0.99)               & 2.78 (1.76)            & 0.527 (0.155)           & 0.982 (0.037)                   & 0.774 (0.128)                   \\
%MCP                 & 6.16 (3.14)               & 7.96 (4.21)            & 0.414 (0.175)           & 0.793 (0.124)                   & 0.788 (0.092)                   \\
MCP                & 8.58 (4.11)               & 11.60 (5.17)           & 0.335 (0.126)           & 0.737 (0.138)                   & 0.756 (0.079)                   \\
%MCP 3                & 2.17 (0.89)               & 2.71 (1.20)            & 0.526 (0.148)           & 0.972 (0.047)                   & 0.783 (0.115)                  
\\
\multicolumn{1}{c}{$n=100$} & \multicolumn{1}{c}{FNorm} & \multicolumn{1}{c}{KL} & \multicolumn{1}{c}{MCC} & \multicolumn{1}{c}{Sensitivity} & \multicolumn{1}{c}{Specificity} \\
PC-GLASSO             &    0.70 (0.11)           &      0.46 (0.12)        &    0.993 (0.017)        &    1 (0)                        &  0.999 (0.004)                               \\
%GLASSO no diag pen             & 0.80 (0.10)               & 0.59 (0.14)            & 0.916 (0.052)           & 1 (0)                           & 0.980 (0.014)                   \\
%GLASSO no diag pen             & 1.66 (0.09)               & 1.20 (0.13)            & 0.304 (0.019)           & 0.996 (0.014)                   & 0.508 (0.031)                   \\
%GLASSO 3             & 0.84 (0.12)               & 0.62 (0.16)            & 0.907 (0.060)           & 1 (0)                           & 0.978 (0.016)                   \\
%GLASSO   & 1.27 (0.15)               & 1.03 (0.25)            & 0.688 (0.090)           & 1 (0)                           & 0.893 (0.047)                   \\
GLASSO   & 1.73 (0.08)               & 1.33 (0.13)            & 0.264 (0.021)           & 0.996 (0.014)                   & 0.433 (0.041)                   \\
%GLASSO diagonals 3   & 1.30 (0.14)               & 1.07 (0.24)            & 0.679 (0.103)           & 1 (0)                           & 0.887 (0.056)                   \\
%SCAD                & 0.83 (0.12)               & 0.58 (0.14)            & 0.850 (0.069)           & 1 (0)                           & 0.961 (0.023)                   \\
SCAD                & 1.33 (0.38)               & 1.01 (0.38)            & 0.739 (0.135)           & 0.958 (0.046)                   & 0.926 (0.049)                   \\
%SCAD 3               & 0.89 (0.13)               & 0.60 (0.15)            & 0.734 (0.087)           & 1 (0)                           & 0.916 (0.038)                   \\
%MCP                & 0.88 (0.19)               & 0.62 (0.18)            & 0.855 (0.085)           & 0.999 (0.005)                   & 0.961 (0.029)                   \\
MCP                 & 1.39 (0.40)               & 1.09 (0.41)            & 0.737 (0.128)           & 0.952 (0.050)                   & 0.928 (0.043)                   \\
%MCP 3                & 0.84 (0.13)               & 0.59 (0.15)            & 0.837 (0.075)           & 1 (0)                           & 0.956 (0.025)                  
\end{tabular}
\caption{Star results}
    \label{tab:StarResults}
\vspace{20pt}

\begin{tabular}{llllll}
\multicolumn{1}{c}{$n=30$} & \multicolumn{1}{c}{FNorm} & \multicolumn{1}{c}{KL} & \multicolumn{1}{c}{MCC} & \multicolumn{1}{c}{Sensitivity} & \multicolumn{1}{c}{Specificity} \\
PCGLasso                   & 1.85 (0.29)               & 2.83 (0.74)            & 0.696 (0.081)           & 0.988 (0.043)                   & 0.917 (0.034)                   \\
%GLasso 1                   & 1.92 (0.24)               & 2.70 (0.58)            & 0.596 (0.083)           & 0.998 (0.011)                   & 0.861 (0.052)                   \\
%GLasso 2                   & 2.26 (0.21)               & 3.11 (0.64)            & 0.469 (0.071)           & 0.998 (0.012)                   & 0.763 (0.066)                   \\
%GLasso 3                   & 2.00 (0.23)               & 2.95 (0.66)            & 0.560 (0.077)           & 0.996 (0.018)                   & 0.839 (0.054)                   \\
%GLasso diagonals 1         & 2.25 (0.37)               & 3.15 (0.75)            & 0.452 (0.083)           & 0.999 (0.009)                   & 0.741 (0.083)                   \\
GLasso         & 2.51 (0.28)               & 3.71 (0.70)            & 0.371 (0.066)           & 0.999 (0.009)                   & 0.644 (0.095)                   \\
%GLasso diagonals 3         & 2.33 (0.26)               & 3.52 (0.65)            & 0.438 (0.081)           & 0.998 (0.012)                   & 0.726 (0.089)                   \\
%SCAD 1                     & 3.96 (1.81)               & 5.57 (2.50)            & 0.493 (0.094)           & 0.964 (0.054)                   & 0.799 (0.068)                   \\
SCAD                      & 7.80 (4.43)               & 11.55 (6.33)           & 0.339 (0.110)           & 0.830 (0.108)                   & 0.715 (0.115)                   \\
%SCAD 3                     & 2.60 (1.37)               & 4.04 (2.18)            & 0.401 (0.098)           & 0.997 (0.014)                   & 0.675 (0.129)                   \\
%MCP 1                      & 4.75 (2.34)               & 6.79 (3.43)            & 0.463 (0.116)           & 0.931 (0.068)                   & 0.786 (0.083)                   \\
MCP                       & 8.22 (4.68)               & 12.30 (6.64)           & 0.329 (0.111)           & 0.821 (0.112)                   & 0.707 (0.125)                   \\
%MCP 3                      & 2.54 (1.68)               & 3.87 (2.69)            & 0.420 (0.092)           & 0.994 (0.022)                   & 0.704 (0.115)                  
      
\\
\multicolumn{1}{c}{$n=100$} & \multicolumn{1}{c}{FNorm} & \multicolumn{1}{c}{KL} & \multicolumn{1}{c}{MCC} & \multicolumn{1}{c}{Sensitivity} & \multicolumn{1}{c}{Specificity} \\
PCGLasso                    & 0.91 (0.15)               & 0.70 (0.20)            & 0.858 (0.069)           & 1 (0)                           & 0.969 (0.019)                   \\
%GLasso 1                    & 1.38 (0.19)               & 0.97 (0.24)            & 0.635 (0.076)           & 1 (0)                           & 0.884 (0.039)                   \\
%GLasso 2                    & 1.63 (0.19)               & 1.11 (0.22)            & 0.483 (0.054)           & 1 (0)                           & 0.778 (0.048)                   \\
%GLasso 3                    & 1.40 (0.20)               & 1.01 (0.24)            & 0.631 (0.074)           & 1 (0)                           & 0.882 (0.038)                   \\
%GLasso diagonals 1          & 1.63 (0.24)               & 1.22 (0.29)            & 0.449 (0.059)           & 1 (0)                           & 0.744 (0.059)                   \\
GLasso          & 1.84 (0.19)               & 1.37 (0.20)            & 0.371 (0.038)           & 1 (0)                           & 0.650 (0.054)                   \\
%GLasso diagonals 3          & 1.65 (0.23)               & 1.26 (0.27)            & 0.449 (0.058)           & 1 (0)                           & 0.743 (0.057)                   \\
%SCAD 1                      & 1.05 (0.16)               & 0.70 (0.16)            & 0.632 (0.053)           & 1 (0)                           & 0.885 (0.029)                   \\
SCAD                      & 0.91 (0.21)               & 0.55 (0.20)            & 0.918 (0.062)           & 0.998 (0.012)                   & 0.984 (0.014)                   \\
%SCAD 3                      & 1.05 (0.16)               & 0.74 (0.15)            & 0.523 (0.045)           & 1 (0)                           & 0.814 (0.037)                   \\
%MCP 1                       & 1.03 (0.16)               & 0.67 (0.16)            & 0.737 (0.068)           & 1 (0)                           & 0.931 (0.026)                   \\
MCP                       & 0.91 (0.22)               & 0.55 (0.22)            & 0.920 (0.066)           & 0.997 (0.014)                   & 0.984 (0.015)                   \\
%MCP 3                       & 1.05 (0.15)               & 0.70 (0.16)            & 0.691 (0.065)           & 1 (0)                           & 0.912 (0.030)                  
               
\end{tabular}
\caption{Hub results}
    \label{tab:HubResults}
\vspace{20pt}
\end{table}

\begin{table}[ht]
\begin{tabular}{llllll}
\multicolumn{1}{c}{$n=30$} & \multicolumn{1}{c}{FNorm} & \multicolumn{1}{c}{KL} & \multicolumn{1}{c}{MCC} & \multicolumn{1}{c}{Sensitivity} & \multicolumn{1}{c}{Specificity} \\
PC-GLASSO             &      3.64 (0.31)                     &      5.26 (0.62)                  &         0.283 (0.093)                &    0.301 (0.194)                             &   0.922 (0.077)                              \\
%GLASSO no diag pen             & 3.79 (0.17)               & 5.64 (0.58)            & 0.235 (0.115)           & 0.157 (0.146)                   & 0.975 (0.038)                   \\
%GLASSO no diag pen             & 3.83 (0.15)               & 5.77 (0.55)            & 0.219 (0.128)           & 0.126 (0.140)                   & 0.984 (0.032)                   \\
%GLASSO 3             & 3.77 (0.16)               & 5.66 (0.56)            & 0.212 (0.104)           & 0.148 (0.137)                   & 0.970 (0.042)                   \\
%GLASSO    & 4.20 (0.20)               & 6.08 (0.75)            & 0.292 (0.090)           & 0.237 (0.138)                   & 0.957 (0.050)                   \\
GLASSO    & 4.27 (0.17)               & 6.63 (0.71)            & 0.258 (0.113)           & 0.162 (0.135)                   & 0.978 (0.041)                   \\
%GLASSO diagonals 3   & 4.08 (0.28)               & 6.05 (0.75)            & 0.268 (0.083)           & 0.297 (0.149)                   & 0.913 (0.084)                   \\
%SCAD                & 4.70 (2.26)               & 7.31 (3.31)            & 0.336 (0.099)           & 0.487 (0.180)                   & 0.844 (0.110)                   \\
SCAD                & 5.98 (4.47)               & 9.17 (5.56)            & 0.290 (0.105)           & 0.444 (0.162)                   & 0.837 (0.114)                   \\
%SCAD 3               & 4.24 (1.38)               & 6.46 (2.45)            & 0.266 (0.085)           & 0.347 (0.219)                   & 0.869 (0.169)                   \\
%MCP                 & 4.78 (1.91)               & 7.57 (2.93)            & 0.326 (0.103)           & 0.464 (0.180)                   & 0.854 (0.097)                   \\
MCP                 & 6.09 (4.61)               & 9.48 (5.87)            & 0.270 (0.105)           & 0.432 (0.159)                   & 0.832 (0.110)                   \\
%MCP 3                & 4.15 (2.29)               & 6.11 (3.33)            & 0.285 (0.085)           & 0.451 (0.224)                   & 0.818 (0.175)                  
\\
\multicolumn{1}{c}{$n=100$} & \multicolumn{1}{c}{FNorm} & \multicolumn{1}{c}{KL} & \multicolumn{1}{c}{MCC} & \multicolumn{1}{c}{Sensitivity} & \multicolumn{1}{c}{Specificity} \\
PC-GLASSO             &     2.30 (0.33)                      &      2.00 (0.38)                  &     0.530 (0.052)                    &  0.855 (0.094)                               &  0.774 (0.069)                               \\
%GLASSO no diag pen & 2.70 (0.35) & 2.20 (0.49)            & 0.525 (0.054)           & 0.825 (0.110)                   & 0.785 (0.089)                   \\
%GLASSO no diag pen             & 2.72 (0.36)               & 2.22 (0.49)            & 0.520 (0.057)           & 0.818 (0.122)                   & 0.785 (0.092)                   \\
%GLASSO 3             & 2.68 (0.34)               & 2.18 (0.45)            & 0.526 (0.053)           & 0.835 (0.098)                   & 0.781 (0.079)                   \\
%GLASSO   & 2.71 (0.45)               & 2.09 (0.52)            & 0.466 (0.064)           & 0.897 (0.089)                   & 0.672 (0.111)                   \\
GLASSO    & 2.70 (0.45)               & 2.10 (0.52)            & 0.462 (0.062)           & 0.903 (0.090)                   & 0.663 (0.112)                   \\
%GLASSO diagonals 3   & 2.72 (0.44)               & 2.15 (0.52)            & 0.468 (0.061)           & 0.893 (0.090)                   & 0.678 (0.112)                   \\
%SCAD                & 1.75 (0.30)               & 1.36 (0.29)            & 0.604 (0.058)           & 0.940 (0.052)                   & 0.785 (0.056)                   \\
SCAD                & 1.60 (0.23)               & 1.33 (0.29)            & 0.767 (0.065)           & 0.908 (0.059)                   & 0.918 (0.039)                   \\
%SCAD 3               & 1.86 (0.37)               & 1.45 (0.29)            & 0.535 (0.049)           & 0.939 (0.054)                   & 0.720 (0.065)                   \\
%MCP                 & 1.70 (0.26)               & 1.33 (0.28)            & 0.701 (0.066)           & 0.929 (0.054)                   & 0.867 (0.046)                   \\
MCP                 & 1.60 (0.23)               & 1.37 (0.31)            & 0.785 (0.065)           & 0.895 (0.062)                   & 0.932 (0.035)                   \\
%MCP 3                & 1.77 (0.34)               & 1.37 (0.31)            & 0.635 (0.059)           & 0.929 (0.060)                   & 0.817 (0.053)                  
\end{tabular}
\caption{AR2 results}
    \label{tab:AR2Results}
\vspace{20pt}

\begin{tabular}{llllll}
\multicolumn{1}{c}{$n=30$} & \multicolumn{1}{c}{FNorm} & \multicolumn{1}{c}{KL} & \multicolumn{1}{c}{MCC} & \multicolumn{1}{c}{Sensitivity} & \multicolumn{1}{c}{Specificity} \\
PC-GLASSO             &    2.30 (0.25)                       &  3.07 (0.51)                      &   0.336 (0.091)                      & 0.310 (0.153)                                &  0.951 (0.041)                               \\
%GLASSO no diag pen             & 2.26 (0.20)               & 3.21 (0.64)            & 0.337 (0.096)           & 0.243 (0.146)                   & 0.973 (0.038)                   \\
%GLASSO no diag pen             & 2.38 (0.18)               & 3.49 (0.61)            & 0.311 (0.118)           & 0.205 (0.158)                   & 0.978 (0.040)                   \\
%GLASSO 3             & 2.31 (0.20)               & 3.33 (0.62)            & 0.278 (0.105)           & 0.224 (0.148)                   & 0.964 (0.038)                   \\
%GLASSO   & 2.65 (0.24)               & 3.57 (0.65)            & 0.356 (0.079)           & 0.349 (0.131)                   & 0.941 (0.052)                   \\
GLASSO    & 2.84 (0.19)               & 4.32 (0.63)            & 0.355 (0.085)           & 0.264 (0.136)                   & 0.969 (0.048)                   \\
%GLASSO diagonals 3   & 2.66 (0.22)               & 3.82 (0.61)            & 0.312 (0.079)           & 0.367 (0.134)                   & 0.915 (0.063)                   \\
%SCAD                & 4.41 (3.95)               & 5.95 (4.48)            & 0.254 (0.094)           & 0.364 (0.131)                   & 0.882 (0.079)                   \\
SCAD               & 4.87 (4.31)               & 6.56 (4.81)            & 0.206 (0.094)           & 0.318 (0.113)                   & 0.876 (0.078)                   \\
%SCAD 3               & 2.70 (0.62)               & 3.89 (0.99)            & 0.310 (0.080)           & 0.373 (0.141)                   & 0.908 (0.080)                   \\
%MCP                 & 4.73 (3.89)               & 6.47 (4.49)            & 0.229 (0.096)           & 0.339 (0.126)                   & 0.879 (0.079)                   \\
MCP                 & 5.12 (3.83)               & 6.98 (4.47)            & 0.194 (0.092)           & 0.320 (0.112)                   & 0.868 (0.078)                   \\
%MCP 3                & 2.46 (0.27)               & 3.31 (0.57)            & 0.316 (0.072)           & 0.427 (0.143)                   & 0.886 (0.075) 
\\
\multicolumn{1}{c}{$n=100$} & \multicolumn{1}{c}{FNorm} & \multicolumn{1}{c}{KL} & \multicolumn{1}{c}{MCC} & \multicolumn{1}{c}{Sensitivity} & \multicolumn{1}{c}{Specificity} \\
PC-GLASSO             &  1.43 (0.16)                         &   1.23 (0.25)                     &   0.572 (0.059)                      &   0.614 (0.110)                              &  0.941 (0.029)                               \\
%GLASSO no diag pen             & 1.57 (0.16)               & 1.33 (0.30)            & 0.569 (0.060)           & 0.606 (0.107)                   & 0.928 (0.017)                   \\
%GLASSO no diag pen             & 1.62 (0.15)               & 1.37 (0.27)            & 0.581 (0.061)           & 0.641 (0.102)                   & 0.941 (0.030)                   \\
%GLASSO 3             & 1.57 (0.16)               & 1.32 (0.28)            & 0.559 (0.062)           & 0.613 (0.115)                   & 0.936 (0.031)                   \\
%GLASSO   & 1.86 (0.21)               & 1.51 (0.33)            & 0.525 (0.064)           & 0.704 (0.101)                   & 0.882 (0.053)                   \\
GLASSO    & 1.93 (0.22)               & 1.64 (0.37)            & 0.526 (0.070)           & 0.724 (0.102)                   & 0.871 (0.065)                   \\
%GLASSO diagonals 3   & 1.86 (0.20)               & 1.54 (0.31)            & 0.514 (0.066)           & 0.706 (0.100)                   & 0.876 (0.049)                   \\
%SCAD                & 1.36 (0.16)               & 1.07 (0.21)            & 0.568 (0.060)           & 0.657 (0.098)                   & 0.925 (0.031)                   \\
SCAD                & 1.32 (0.15)               & 1.08 (0.23)            & 0.598 (0.070)           & 0.610 (0.105)                   & 0.952 (0.029)                   \\
%SCAD 3               & 1.37 (0.15)               & 1.05 (0.19)            & 0.527 (0.066)           & 0.701 (0.085)                   & 0.887 (0.036)                   \\
%MCP                 & 1.34 (0.15)               & 1.07 (0.23)            & 0.597 (0.062)           & 0.625 (0.107)                   & 0.947 (0.028)                   \\
MCP                 & 1.32 (0.14)               & 1.09 (0.22)            & 0.594 (0.070)           & 0.587 (0.110)                   & 0.957 (0.027)                   \\
%MCP 3                & 1.35 (0.16)               & 1.08 (0.23)            & 0.580 (0.067)           & 0.627 (0.100)                   & 0.940 (0.030)    
\end{tabular}
\caption{Random graph results}
    \label{tab:RandomResults}
\end{table}

\newpage
%\printbibliography
\bibliographystyle{plainnat}
\bibliography{PCGLASSO}

\end{document}